\DeclareMathOperator{\SW}{\tt EFFICIENCY}
\newcommand{\A}{\mathcal{A}}
\newcommand{\V}{\mathcal{V}}
\newcommand{\U}{\mathcal{U}}
\newcommand{\K}{\mathcal{K}}
\newcommand{\N}{\mathcal{N}}
\newcommand{\Sione}{\mathcal{S}_i^{1}}
\newcommand{\Sitwo}{\mathcal{S}_i^{2}}
\newcommand{\Sithw}{\mathcal{S}_i^{3}}
\newcommand{\Sifr}{\mathcal{S}_i^{4}}
\DeclareMathOperator{\OPT}{\textsc{OPT}\xspace}
\newcommand{\sw}{allocator's efficiency\xspace}
\newcommand{\ceil}[1]{\left\lceil#1\right\rceil}
\newcommand{\floor}[1]{\left\lfloor#1\right\rfloor}
\newtheorem{theorem}{Theorem}[section]
\newtheorem*{claim*}{Claim}
\newtheorem{example}[theorem]{Example}
\newtheorem{lemma}[theorem]{Lemma}
\newtheorem{proposition}[theorem]{Proposition}
\newtheorem{problem}[theorem]{Problem}
\theoremstyle{definition}
\newtheorem{definition}[theorem]{Definition}
\newtheorem{remark}[theorem]{Remark}
\newtheorem*{remark*}{Remark}
\newtheorem{conjecture}[theorem]{Conjecture}
\newcommand{\abs}[1]{\left\vert#1\right\vert}
\title{\bf Fair Division with Allocator's Preference}
\thanks{A preliminary version appeared in Proceedings of the 19th Conference on Web and Internet Economics (WINE-2023).
Compared to the conference version, this version corrects an error in the proof of Theorem~\ref{thm:double_ef1}.
In the conference version, Theorem~\ref{thm:double_ef1} only holds under Conjecture~\ref{conj:kneser_conjecture}, and this version no longer depends on the conjecture.
The work was mostly done when Jiaxin was an undergraduate student at Shanghai Jiao Tong University.
}
\author{Xiaolin Bu, Zihao Li, Shengxin Liu, Jiaxin Song, Biaoshuai Tao}
\address[Xiaolin Bu, Biaoshuai Tao]{Shanghai Jiao Tong University. \textnormal{Email: \texttt{lin\_bu@sjtu.edu.cn}, \texttt{bstao@sjtu.edu.cn}.}}
\address[Zihao Li]{Nanyang Technological University. \textnormal{Email: \texttt{zihao004@e.ntu.edu.sg}.}}
\address[Shengxin Liu]{Harbin Institute of Technology, Shenzhen. \textnormal{Email: \texttt{sxliu@hit.edu.cn}.}}
\address[Jiaxin Song]{University of Illinois, Urbana-Champaign. \textnormal{Email: \texttt{jiaxins8@illinois.edu}.}
}
\date{}
\begin{document}
\begin{abstract}
We consider the problem of fairly allocating indivisible resources to agents. 
Most previous work focuses on fairness and/or efficiency \emph{among agents} given agents' preferences. 
However, besides the agents, the allocator, as the resource owner, may also be involved in many real-world scenarios (e.g., government resource allocation, heritage division, company personnel assignment, etc.). 
The allocator has the inclination to obtain a fair or efficient allocation based on her own preference over the items and to whom each item is allocated.
In this paper, we propose a new model and focus on the following two problems concerning the allocator's fairness and efficiency:
\begin{enumerate}
    \item Is it possible to find an allocation that is fair for both the agents and the allocator?
    \item What is the complexity of maximizing the \sw while ensuring agents' fairness?
\end{enumerate}

We consider the two fundamental fairness criteria: \emph{envy-freeness} and \emph{proportionality}.
For the first problem, we study the existence of an allocation that is envy-free up to $c$ goods (EF-$c$) or proportional up to $c$ goods (PROP-$c$) from both the agents' and the allocator's perspectives, in which such an allocation is called \emph{doubly EF-$c$} or \emph{doubly PROP-$c$} respectively. 
When the allocator's utility depends exclusively on the items (but not to whom an item is allocated), we prove that a doubly EF-$1$ allocation always exists. 
For the general setting where the allocator has a preference over the items \emph{and} to whom each item is allocated, we prove that a doubly EF-$1$ allocation always exists for two agents, a doubly PROP-$2$ allocation always exists for personalized bi-valued valuations, and a doubly PROP-$O(\log n)$ allocation always exists in general.

For the second problem, we provide various (in)approximability results in which the gaps between approximation and inapproximability ratios are asymptotically closed under most settings.
When agents' valuations are binary, the problems of maximizing the social welfare from the allocator's perspective while ensuring agents' fairness criteria of PROP-$c$ (with a general number of agents) and EF-$c$ (with a constant number of agents) are both polynomial-time solvable for any positive integer $c$.
For most of the other settings (general valuations, EF-$c$, etc.), we present strong inapproximability results. 
\end{abstract}

\maketitle

\setcounter{tocdepth}{1}
\tableofcontents

\section{Introduction}
\label{sec:intro}
Fair division studies how to fairly allocate a set of resources to a set of agents with heterogeneous preferences. It is becoming a valuable instrument in solving real-world problems, e.g., Course Match for course allocation at the Wharton School in the University of Pennsylvania~\citep{budish2017course}, and the website Spliddit (spliddit.org) for fair division of rent, goods, credit, and so on~\citep{goldman2015spliddit}.
The construct of fair division was first articulated by~\citet{Steinhaus48,Steinhaus49} in the 1940s, and has become an attractive topic of interest in a wide range of fields, such as mathematics, economics, computer science, and so on (see, e.g., \citep{brams1995envy,RobertsonWe98,procaccia2013cake,Moulin19,aziz2020developments,Suksompong21,AmanatidisAzBi23,Liu2023survey} for a survey).

The classic fair division problem mainly focuses on finding fair and/or efficient allocations \emph{among agents given agents' preferences}.
However, in many real-world scenarios, the allocator as the resource owner may also be involved, and, particularly, may have the inclination to obtain a fair or efficient allocation based on her own preference.
For example, consider the division of inheritances, e.g., multiple companies and multiple houses, from the parent to two children. Both children would prefer the companies as they believe the market value of the companies will be increased more than the houses in the future. At the same time, the parent may want to allocate the companies to the elder child since the parent thinks the elder child has a better ability to run the companies.
The final allocation should be fair for children and may also need to incorporate the parent's ideas about the allocation.
Another example is the government distributing educational resources (e.g., land, funding, experienced teachers or principals) among different schools. Some well-established schools may prefer land to build a new campus, while some new schools may need experienced teachers. On the other hand, the government may also have a preference (over the resources and to whom each resource is allocated) based on macroeconomic policy and may want the resulting distribution to be efficient, on top of each school feeling that it gets a fair share.
Other examples abound: a company allocates resources to multiple departments, an advisor allocates tasks/projects to students, a conference reviewer assignment system allocates papers to reviewers, etc.

We focus on the allocation of indivisible goods in this work.
To measure fairness, the two most fundamental criteria in the literature are \emph{envy-freeness} and \emph{proportionality}, respectively~\citep{Steinhaus48,Steinhaus49,Foley67,Varian74}.
In particular, an allocation is said to be envy-free if each agent weakly prefers her bundle over any other agent’s based on her own preference, and proportional if each agent values her bundle at least $1/n$ of her value for the whole resources, where $n$ is the number of agents.
Both fairness criteria can always be achieved in divisible resource allocation but it is not the case for indivisible resources (say, a simple example with two agents and one good).
This triggers an increasing number of research work to consider relaxing exact fairness notions of envy-freeness and proportionality to \emph{envy-freeness up to $c$ goods (EF-$c$)} and \emph{proportionality up to $c$ goods (PROP-$c$)} (see, e.g.,~\citep{Lipton04onapproximately,budish2011combinatorial,CFS17}). 
Specifically, an allocation is said to be EF-$c$ if any agent’s envy towards another agent could be eliminated by (hypothetically) removing at most $c$ goods in the latter's bundle, and PROP-$c$ if any agent's fair share of $1/n$ could be guaranteed by (hypothetically) adding at most $c$ goods that are allocated to other agents, where $c$ is a positive integer.
Besides fairness, another important issue of fair division is \emph{(economic) efficiency} (e.g., social welfare), which is used to measure the total happiness of the agents~\citep{cohler2011optimal,brams2012maxsum,10.5555/3306127.3331927,aziz2020computing}.

The fair division problem with allocator's preference presents new challenges compared to the classic fair division problems. 
With indivisible goods, it is well known that the \emph{round-robin algorithm}~\footnote{The round-robin algorithm works as follows: Given an ordering of agents, each agent picks her favorite item among the remaining items to her bundle following the ordering in rounds until there is no remaining item.} can return a fair, i.e., EF-1, allocation from the agents' perspective. 
However, this algorithm cannot be easily adapted to the problem where both agents and the allocator have preferences over items. Specifically, an agent's preference describes how much this agent values each item, while the allocator's preference describes how much the allocator regards each item values for each agent.
Consider the instance with both agents' and the allocator's preferences shown in Tables~\ref{tab:exp1} and~\ref{tab:exp2}.
\begin{figure}[h]
\begin{minipage}[c]{0.48 \textwidth}
\centering
\begin{tabular}{c|ccc}
    &  Item $1$ &  Item $2$ &  Item $3$ \\
    \hline
    Agent $1$ & $2$ & $1$ & $0$ \\ 
    Agent $2$ & $0$ & $1$ & $2$ 
\end{tabular}
\captionof{table}{Agents' Preferences}
\label{tab:exp1}
\end{minipage}
\begin{minipage}[c]{0.48\textwidth}
\centering
\begin{tabular}{c|ccc}
  &  Item $1$ &  Item $2$ &  Item $3$ \\
    \hline
    Agent $1$ & $0$ & $2$ & $1$ \\ 
    Agent $2$ & $1$ & $2$ & $0$ 
\end{tabular}
\captionof{table}{Allocator's Preferences}
\label{tab:exp2}
\end{minipage}
\end{figure}
Suppose, without loss of generality, agent 1 is before agent 2 in the ordering of the round-robin algorithm. When performing the algorithm without considering the allocator's preference, agent 1 gets a bundle of items 1 and 2, while agent 2 gets item 3.
From the allocator's perspective, this allocation is not EF-1 since the allocator thinks agent 2 will envy agent 1 even when an arbitrary item is removed from agent 1's bundle.
One can also verify that the above allocation is not social welfare maximizing based on the allocator's viewpoint, i.e., the allocator thinks there is another allocation such that the total happiness of the agents is larger. 
On the other hand, performing the round-robin algorithm based solely on the allocator's preference will return an allocation where agent 1 gets items 2 and 3 while agent 2 gets item 1 (assuming agent 1 has a higher priority in the ordering).
This allocation is not EF-1 from the agents' perspective, as the envy from agent 2 to agent 1 cannot be eliminated by removing a single item in agent 1's bundle.
This raises a natural question --  \emph{how to find fair or efficient allocations in the presence of agents' and the allocator's preferences?}

\subsection{Our Results}
We initiate the study of fair division with the allocator's preference and address the following two research questions in this paper:
\begin{enumerate}[leftmargin=*]
    \item[]{\bf Question 1}: \emph{how to find an allocation that looks fair to both the allocator and agents?}
    \item[]{\bf Question 2}: \emph{how to maximize the allocator's efficiency while ensuring agents' fairness?}
\end{enumerate}

We mainly focus on the allocation of \emph{indivisible} resources and discuss the \emph{divisible} resources in the appendix (which also includes omitted proofs in the paper).

\begin{table}[h]
    \centering
    \renewcommand{\arraystretch}{1.2} 
    \begin{tabular}{cc||>{\centering}p{1.8cm}>{\centering}p{1.8cm}cc}
        \hline
        valuation & $n$        & $v_i$ & $u_i$ & \textbf{doubly fairness} & \textbf{poly} \\
        \hline
        monotone  & 2   & arbitrary       & arbitrary           & EF-1 (Thm~\ref{thm:double_ef1}) & \ding{55} \\
        \hline
        \multirow{5}{*}{additive} & 2 & arbitrary & arbitrary &  EF-1 (Thm~\ref{thm:double_ef1_additive}) & \ding{51} \\
        \cline{3-6}
        & $2^k$ & arbitrary & arbitrary & PROP-$k$ (Thm~\ref{thm:double_logn}) & \ding{55} \\
        \cline{3-6}
        & \multirow{3}{*}{general} & arbitrary & identical & EF-1 (Thm~\ref{thm:identical_2ef1}) & \ding{51} \\
        & & \multicolumn{2}{c}{personalized bi-valued} & PROP-2 (Thm~\ref{thm:doubly_prop2_for_bivalued_utility}) & \ding{51}\\
        & & arbitrary & arbitrary & PROP-$(2\lceil\log n\rceil)$ (Thm~\ref{thm:double_logn}) & \ding{51}\\
        \hline 
    \end{tabular}
    \caption{Results of double fairness. The number of agents is denoted by $n$. For each agent $i$, $v_i$ represents her utility function while $u_i$ represents how much the allocator regards each item's value for agent $i$. 
    The column of ``poly'' means whether the proof is a polynomial-time constructive proof.
    }
    \label{tab:double_fair_result}
\end{table}

For the first problem, we propose new fairness notions \emph{doubly EF-$c$} and \emph{doubly PROP-$c$} that extend EF-$c$ and PROP-$c$ to our setting with regard to the allocator's preference.
Our results are presented in Table~\ref{tab:double_fair_result}.
For general monotone valuations, we show that a doubly EF-$1$ allocation always exists (Theorem~\ref{thm:double_ef1}).
Next, we consider additive valuations.
We first consider the setting where the allocator's utility only depends on the items (but not to whom an item is allocated), and we show that a doubly EF-$1$ allocation always exists and is polynomial-time computable (Theorem~\ref{thm:identical_2ef1}).
We then consider the general setting where the allocator's utility depends on both the items and the allocation.
For two agents, we show that a doubly EF-$1$ allocation always exists and is polynomial-time computable (Theorem~\ref{thm:double_ef1_additive}).
For a general number of agents, we show that a doubly PROP-$\log_2n$ allocation always exists for $n$ being an integer power of $2$, and a doubly PROP-($2\lceil\log n\rceil$) allocation always exists and can be computed in polynomial time (Theorem~\ref{thm:double_logn}).
If we restrict to personalized bi-valued valuations, we show that a doubly PROP-$2$ allocation always exists and can be computed in polynomial time (Theorem~\ref{thm:doubly_prop2_for_bivalued_utility}).

\begin{table}[t]
    \centering
    \renewcommand{\arraystretch}{1.2} 
    \begin{tabular}{cccc||cc}
        \hline
        $n$        & \textbf{Fairness} & $v_i$ & $u_i$ & \textbf{Negative Results} & \textbf{Positive Results} \\
        \hline
        \multirow{3}{*}{2}        & EF-$c$   & arbitrary       & arbitrary           & 2 (Thm~\ref{thm:mae_2_ab_neg})  & 2 (Thm~\ref{thm:mae_2_aa_pos}) \\
                 & EF-$c$   & arbitrary       & binary              & 2 (Thm~\ref{thm:mae_2_ab_neg})  & 2 (Thm~\ref{thm:mae_2_aa_pos}) \\
                 & EF-$c$   & binary          & arbitrary           & ---             & 1 (Thm~\ref{thm:mae_const_ba_pos}) \\
        \hline
        \multirow{2}{*}{constant} & EF-$c$   & arbitrary       & binary & $\floor{\frac{1+\sqrt{4n-3}}{2}}$~\citep{bu2022complexity} & unknown  \\
                 & EF-$c$   & binary          & arbitrary           & ---             & 1 (Thm~\ref{thm:mae_const_ba_pos}) \\
        \hline 
        \multirow{4}{*}{general}  & EF-$c$   & binary          & binary & $m^{1-\epsilon}, n^{1/2-\epsilon}$ (Thm~\ref{thm:mae_gen_bb_neg}) & $m$ (Thm~\ref{thm:mae_gen_aa_pos}) \\
                 & EF-$c$   & arbitrary       & arbitrary & $m^{1-\epsilon}, n^{1/2-\epsilon}$ (Thm~\ref{thm:mae_gen_bb_neg}) & $m$ (Thm~\ref{thm:mae_gen_aa_pos}) \\
                 & PROP-$c$ & arbitrary       & binary              & 2 (Thm~\ref{thm:maep_gen_ab_neg})   & unknown \\
                 & PROP-$c$ & binary          & arbitrary           & ---    & 1 (Thm~\ref{thm:maep_gen_ba_pos}) \\
        \hline 
    \end{tabular}
    \caption{Positive and negative results of maximizing the allocator's efficiency. The numbers of agents and items are denoted by $n$ and $m$, respectively. For each agent $i$, $v_i$ represents her utility function while $u_i$ represents how much the allocator regards each item values for agent $i$. Numbers $\alpha$ for negative results indicate that the problem is NP-hard to approximate to within the ratio $\alpha$; numbers $\alpha$ for positive results indicate that the problem admits a polynomial time $\alpha$-approximation algorithm. All our negative results hold for $c=1$.}
    \label{tab:mae_result}
\end{table}

For the second problem, we consider the complexity and approximability for both binary and general additive valuations.
Our results are presented in Table~\ref{tab:mae_result}.
The gap between the approximation ratio and the inapproximability ratio is closed, or asymptotically closed, under most settings.
If agents' valuations are binary, this problem is tractable for EF-$c$ with a constant number of agents and for PROP-$c$ with a general number of agents.
Under most other settings, this problem admits strong inapproximability ratios even for $c=1$.

Our results use several novel technical tools that are uncommon in the existing fair division literature, including i) the chromatic numbers of graphs as well as a topological proof, and ii) linear programming-based analyses.
\begin{itemize}[leftmargin=*]
    \item For i), we use the generalized Kneser graph (in \Cref{def:generalize_kneser_graph}) and $\Gamma$-graph (in \Cref{def:gamma_n}) to model the set of allocations and the relations between them.
    Specifically, the set of allocations that are not fair based on an agent's valuation forms an independent set in the graph.
    The existence of a doubly fair allocation is built upon the fact that, after removing all the independent sets --- each corresponding to a set of unfair allocations for an agent --- there are still remaining vertices, which is related to the chromatic number of the graph.
    In Sect.~\ref{sect:proof_chromatic_gamma_n}, we establish a lower bound of the chromatic number of the $\Gamma$-graph based on a topological analysis of the connectivity of the neighborhood complex $\mathcal{N}$, which corresponds to the $\Gamma$-graph.
    To demonstrate the connectivity,
    we show that the inclusion map for the 2-dimensional skeleton of  $\mathcal{N}$ is a nullhomotopic.
    \item For ii), we formulate our problems as linear programs.
    The solution to the linear program naturally corresponds to a \emph{fractional} allocation.
    Our technique is mainly based on the analysis of the vertices of the polytope defined by the linear program.
    In some applications, we handle the fractional items by rounding.
    In others, we prove that all the vertex solutions of the linear program are integral.
\end{itemize}

\subsection{Further Related Work}
\label{sect:relatedwork}

Conceptually, our model with allocator's preference shares similarities with recent research work on fair division with two-sided fairness, e.g., \citep{patro2020fairrec,gollapudi2020almost,FreemanMiSh21,IgarashiKaSu23}.
The existing two-sided fairness literature studies the fair division problem where there are two disjoint groups of agents and each agent in one group has a preference over the agents of the other group. The objective is then to find a (many-to-many) matching that is fair to each agent with respect to her belonging group.
We remark that these two models are different due to the following major reasons:
\begin{itemize}[leftmargin=*]
\item In their model, there are two disjoint sets of agents, and each group of preferences is defined from one set of agents to the other set of agents (viewed as a set of ``goods"). On the other hand, the two groups of preferences (one is from the agents and the other one is from the allocator) in our setting are both defined on a single set of agents and a single set of goods.

\item In their model, each agent will be allocated (or matched) a set of agents from the other group which is different from ours, whereas the allocator in our model will not receive any resource in the allocation.
\end{itemize}

As we can see, our model with allocator's preference reduces to the standard setting of indivisible goods when the allocator's preference coincides with agents' preferences.
Our first research question reduces to find EF-$c$ or PROP-$c$ allocations in indivisible fair allocation, where the fairness notions of EF-1 and PROP-1 are extensively studied.
In particular, an EF-1 allocation always exits and can be computed in polynomial time~\citep{Lipton04onapproximately,CKMP+19}.
For PROP-1, an allocation that is PROP1 and Pareto optimal always exits and can be computed in polynomial time~\citep{CFS17,barman2019proximity,aziz2020polynomial,mcglaughlin2020improving}. 
When considering the issue of economic efficiency, the problem in our second research question could be mapped to the problem of maximizing social welfare within either EF-1 or PROP-1 allocations in the indivisible goods setting. \citet{aziz2020computing} showed that the problem with either the EF-1 or the PROP-1 condition is NP-hard for $n \ge 2$ and \citet{10.5555/3306127.3331927} showed that the problem with the EF-1 requirement is NP-hard to approximate to within a factor of $1/m^{1-\varepsilon}$ for any $\varepsilon>0$ for general numbers of agents $n$ and items $m$.
Later, \citet{bu2022complexity} gave a complete landscape for the approximability of the problem with the EF-1 criterion.

In addition, several works studied the fair division problem where the resources need to be allocated among \emph{groups} of agents, and the resources are shared among the agents within each predefined group~\citep{manurangsi2017asymptotic,segal2019fair,suksompong2018approximate,segal2019democratic}.
In their model, $n=n_1+\cdots+n_k$ agents will be divided into $k \geq 2$ groups, where group $i$ contains $n_i \geq 1$ agents. An allocation is a partition of goods into $k$ groups. Each agent in the $i$-th group extracts utilities according to the $i$-th bundle. 
\citet{kyropoulou2020almost} also generalized the classic EF-$c$ to the group setting: An agent's envy towards another group could be eliminated by removing at most $c$ goods from that group's bundle.
PROP-$c$ could be defined similarly \citep{manurangsi2022almost}. 
With binary valuations, \citet{kyropoulou2020almost} gave the characterization of the cardinalities of the groups for which a group EF-1 allocation always exits.
In particular, they showed that a group EF-1 allocation always exists when there are two groups and each group contains two agents with binary valuations.
Subsequently, \citet{manurangsi2022almost} showed via the discrepancy theory that EF-$O(\sqrt{n})$ and PROP-$O(\sqrt{n})$ allocations always exist in the group setting.
Note that, when each group contains exactly two agents, i.e., $n_1 = \ldots = n_k =2$, the fair division problem in the predefined group setting coincides with our model (where each group could be considered to have an agent and the allocator).
However, we obtain improved results in this particular setting through different technical tools.

\section{Preliminaries}

Let $[k] = \{1,\ldots,k\}$. Our model consists of a set of agents $N = [n]$, a set of indivisible items $M = \{g_1, \ldots, g_m\}$, and \emph{the allocator}.
Each agent $i$ has a non-negative \emph{utility function} $v_i:\{0, 1\}^m \to \mathbb{R}_{\geq 0}$. 
In addition, the allocator has her preference, which is modeled by $n$ utility functions $u_i:\{0, 1\}^m \to \mathbb{R}_{\geq 0}$ where $u_i$ characterizes how much the allocator regards each item's value for agent $i$.

A utility function $v_i$ (resp., $u_i$) is said to be \emph{additive} if $v_i(X) = \sum_{g\in X} v_i(g)$ (resp., $u_i(X) = \sum_{g\in X} u_i(g)$) for any bundle $X\subseteq M$. 
A utility function $v_i$ (resp., $u_i$) is said to be \emph{monotone} if $v_i(S)\le v_i(T)$ (resp., $u_i(S)\le u_i(T)$) for any two bundles $S\subseteq T$.
We always assume $v_i(\emptyset)=0$.
It is clear that monotone valuations are much more general than additive ones.
We assume the utility functions to be additive throughout the paper, except for Sect.~\ref{sect:generalMonotone} where we consider monotone valuations beyond additive.

We also consider the following special cases of additive valuations.
A utility function $v_i$ (resp., $u_i$) is said to be \emph{binary} if $v_i(g) \in \{0,1\}$ (resp., $u_i(g) \in \{0,1\}$) for any item $g\in M$,
and is said to be \emph{personalized bi-valued} if $v_i(g) \in \{p_{i,v}, q_{i,v}\}$ (resp., $u_i(g) \in \{p_{i,u}, q_{i,u}\}$) with $0 \le p_{i,v} < q_{i,v} \le 1$ (resp., $0 \le p_{i,u} < q_{i,u} \le 1$) for any item $g \in M$.
It is obvious that the binary utility function is a special case of the personalized bi-valued utility function where $p_{i,v} = p_{i,u} = 0$ and $q_{i,v} = q_{i,u} = 1$.

An \emph{allocation} of the items $\A = (A_1, \dots, A_n)$ is an ordered partition of $M$, where $A_i$ is the bundle of items allocated to agent $i$. 
Below we introduce the fairness notions. 
Let $c$ be a non-negative integer. 

\begin{definition}[Envy-free up to $c$ goods]
An allocation $\mathcal{A}$ is said to be \textit{envy-free up to $c$ goods (EF-$c$)} if for all pairs of agents $i\neq j$, there exists a set $B\subseteq A_j$ such that $\abs{B} \le c$ and $v_i(A_i)\ge v_i\left(A_j\setminus B\right)$ (or $v_i(A_i)\geq v_i(A_j)-v_i(B)$ for additive utility functions). 
\end{definition}
\begin{definition}[Proportional up to $c$ goods]
An allocation $\mathcal{A}$ is said to be \textit{proportional up to $c$ goods (PROP-$c$)} if for any agent $i$, there exists a set $B \subseteq M\setminus A_i$ such that $\abs{B} \le c$ and $v_i(A_i\cup B)\geq v_i(M)/n$ (or $v_i(A_i)\ge \frac{1}{n} v_i(M) - v_i(B)$ for additive utility functions).
\end{definition}

Clearly, EF-$c$ implies PROP-$c$ for additive utility functions.
It is also well known that an EF-1 (hence, PROP-1) allocation always exists and can be computed in polynomial time~\citep{Lipton04onapproximately,CKMP+19}.
In our model, in addition to ensuring fairness among agents, we also consider the allocator's fairness. Thus, we generalize the above fairness criteria as follows.
\begin{definition}[Doubly envy-free up to $c$ goods]
An allocation $\mathcal{A}$ is said to be \textit{doubly envy-free up to $c$ goods (doubly EF-$c$)} if for all pairs of agents $i\neq j$, there exist sets $B_1, B_2\subseteq A_j$ such that $\abs{B_1}, \abs{B_2} \le c$, $v_i(A_i)\ge v_i(A_j\setminus B_1)$ and $u_i(A_i)\ge u_i(A_j\setminus B_2)$. 
\end{definition}
\begin{definition}[Doubly proportional up to $c$ goods]
An allocation $\mathcal{A}$ is said to be \textit{doubly proportional up to $c$ goods (doubly PROP-$c$)} if for any $i\in N$, there exist sets $B_1, B_2 \subseteq M\setminus A_i$ such that $\abs{B_1}, \abs{B_2}\le c$, and $v_i(A_i\cup B_1)\geq v_i(M)/n$, and $u_i(A_i\cup B_2)\geq u_i(M)/n$.
\end{definition}
When the allocator's utility functions are identical to agents' utility functions, it is easy to see that doubly EF-$c$ and doubly PROP-$c$ degenerate to EF-$c$ and PROP-$c$, respectively. 
The above-defined double fairness notions with the allocator's preference can also be interpreted as: there are two groups of valuation functions $u$ and $v$, where one is from the agents and the other one is from the allocator. A single allocation is said to satisfy double fairness if such an allocation is fair, e.g., doubly EF-$c$/PROP-$c$, with respect to both valuation functions $u$ and $v$.

To measure the economic efficiency of the allocator, we consider \emph{\sw}:
\begin{definition}\label{def:SW}
The \emph{\sw} of an allocation $\A=(A_1,\ldots,A_n)$, denoted by $\SW(\A)$, is the summation of the allocator's utilities of all the agents
$\SW(\A)=\sum_{i=1}^nu_i(A_i)$.
\end{definition}

In this paper, we are interested in the following two problems.
\begin{problem}
Given a set of indivisible items $M$, a set of agents $N=[n]$ with their utility functions $(v_1,\ldots,v_n)$, and the allocator with her preference $(u_1,\ldots,u_n)$, determine whether there exists an allocation $\A=(A_1,\ldots,A_n)$ that is doubly EF-$c$/PROP-$c$.
\end{problem}

\begin{problem}
Given a set of indivisible items $M$, a set of agents $N=[n]$ with their utility functions $(v_1,\ldots,v_n)$, and the allocator with her preference $(u_1,\ldots,u_n)$, the problem of \emph{maximizing \sw subject to EF-$c$/PROP-$c$} aims to find an allocation $\A=(A_1,\ldots,A_n)$ that maximizes \sw $\SW$ subject to that $\A$ is EF-$c$/PROP-$c$.
\end{problem}

\subsection{Totally Unimodular Matrix and Linear Programming}

\begin{definition}[Totally unimodular matrix]
    A matrix $\mathbf{A}_{m\times n}$ is a \emph{totally unimodular matrix} (TUM) if every square submatrix of $\mathbf{A}$ has determinant $0$, $+1$ or $-1$.
\end{definition}

To determine whether a matrix is TUM, we have the following lemma:
\begin{lemma}\label{lem:bipartite_utm}
Given a matrix $\mathbf{A} \in \{0, \pm 1\}^{m\times n}$, $\mathbf{A}$ is TUM if it can be written as the form of 
$\left[\begin{array}{c}
     \mathbf{A}_1  \\
     \mathbf{A}_2 
\end{array}\right]$, 
where $\mathbf{A}_1\in \{0, 1\}^{r\times n}$ (or $\{0, -1\}^{r\times n}$), $\mathbf{A}_2\in \{0, 1\}^{(m-r)\times n}$ (or $\{0, -1\}^{(m-r)\times n}$), $1\le r\le m$ and there is at most one nonzero number in every column of $\mathbf{A}_1$ or $\mathbf{A}_2$. 
\end{lemma}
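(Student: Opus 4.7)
The plan is to prove the claim by strong induction on the order $k$ of a square submatrix $\mathbf{B}$ of $\mathbf{A}$, establishing at each step that $\det(\mathbf{B})\in\{-1,0,1\}$. The base case $k=1$ is immediate since every entry of $\mathbf{A}$ lies in $\{0,\pm 1\}$.

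For the inductive step I would split into three cases. First, if some column of $\mathbf{B}$ is entirely zero, then $\det(\mathbf{B})=0$. Second, if some column of $\mathbf{B}$ contains exactly one nonzero entry, which must be $\pm 1$, I would expand the determinant along this column via cofactor expansion; the only surviving term is $\pm 1$ times the determinant of a $(k-1)\times(k-1)$ submatrix of $\mathbf{A}$, which lies in $\{-1,0,1\}$ by the inductive hypothesis.

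The crucial third case is when every column of $\mathbf{B}$ has at least two nonzero entries. This is where the structural hypothesis of the lemma is used decisively: since each column of $\mathbf{A}_1$ and each column of $\mathbf{A}_2$ carries at most one nonzero, every column of $\mathbf{B}$ must contain exactly one nonzero coming from the $\mathbf{A}_1$-rows and exactly one from the $\mathbf{A}_2$-rows. Because all nonzero entries of $\mathbf{A}_1$ share a common sign, summing the $\mathbf{A}_1$-rows of $\mathbf{B}$ produces a vector whose entries are all $+1$ or all $-1$; the same is true of the sum of the $\mathbf{A}_2$-rows of $\mathbf{B}$. These two row-sums are therefore equal or negatives of each other, exposing a nontrivial linear dependence among the rows of $\mathbf{B}$ and forcing $\det(\mathbf{B})=0$.

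I do not anticipate a serious obstacle, since this is a standard specialization of the Heller--Tompkins / Hoffman--Gale sufficient criterion for total unimodularity, adapted to the setting where each of the two row blocks is sign-monochromatic. The subtlest point is simply to parse the hypothesis correctly as requiring that \emph{both} $\mathbf{A}_1$ and $\mathbf{A}_2$ have at most one nonzero per column: if only one of the two blocks satisfied this, Case~3 would fail, because a column of $\mathbf{B}$ could then place all of its nonzeros inside a single block and the row-sum cancellation argument would break down.
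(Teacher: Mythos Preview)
Your proposal is correct and follows essentially the same approach as the paper's proof: induction on the size of the square submatrix, cofactor expansion when a column has at most one nonzero entry, and a linear dependence among the rows (via the block row-sums) when every column has exactly two nonzeros. Your write-up is in fact slightly cleaner in separating the all-zero-column case and in making explicit why both row-sums are $\pm\mathbf{1}$ vectors, and your closing remark about the necessity of reading the hypothesis as applying to \emph{both} blocks is a useful clarification of the statement.
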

\begin{proof}
We prove it by induction. Assume the square submatrix $\mathbf{A}'$ of $\mathbf{A}$ is an $n'\times n'$ matrix.
It holds when $n'=1$ since all entries are $0$, $-1$ or $1$.
We next assume $n'>1$. 
If there exists one column with only one non-zero entry, and we assume the square submatrix after removing the corresponding row and column of this entry is $\mathbf{B}$, we have $\det(\mathbf{A})=\pm \det(\mathbf{B})$. By induction, $\det(\mathbf{A}')$ is also equal to $0$, $-1$ or $1$, which concludes the lemma.

If there is no such column, since there are at most two non-zero entries in one column of the original matrix $\mathbf{A}$, there are exactly two entries in each column. 
Then, we consider the following linear combination of rows in $\mathbf{A}'$.
If $\mathbf{A}_1$ and $\mathbf{A}_2$ consist of the same non-zero values, we add all rows in $\mathbf{A}_1$ and minus all rows in $\mathbf{A}_2$.
Otherwise, we add all rows in both $\mathbf{A}_1$ and $\mathbf{A}_2$.
The above linear combination is equal to a zero vector, which implies $\det(\mathbf{A}')=0$.
Thus, $\mathbf{A}$ is totally unimodular.
\end{proof}

\begin{lemma}[\citet{hoffman2010integral}]
\label{lem.tum}
If $\mathbf{A}$ is totally unimodular and $\mathbf{b}$ is an integer vector, then each vertex of the polytope $\{\mathbf{A}\mathbf{x}\leq \mathbf{b},\mathbf{x}\geq\mathbf{0}\}$ has integer coordinates.
\end{lemma}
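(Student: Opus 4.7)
The plan is to combine the standard characterization of vertices of a polyhedron with the preservation of total unimodularity under appending $\pm\mathbf{I}$. Let $P = \{\mathbf{x}:\mathbf{A}\mathbf{x}\le \mathbf{b},\mathbf{x}\ge \mathbf{0}\}$ and let $\mathbf{x}^*$ be a vertex of $P$. By the theory of polyhedra, $\mathbf{x}^*$ is uniquely determined by a set of $n$ linearly independent active constraints drawn from the rows of $\mathbf{A}\mathbf{x}\le \mathbf{b}$ together with some of the non-negativity constraints $-x_j \leq 0$. Collecting these tight rows yields a square system $\mathbf{B}\mathbf{x}^* = \mathbf{c}$, where $\mathbf{B}$ is a nonsingular $n\times n$ submatrix of the augmented matrix $\tilde{\mathbf{A}} := \bigl[\begin{smallmatrix}\mathbf{A}\\ -\mathbf{I}\end{smallmatrix}\bigr]$, and $\mathbf{c}$ is an integer vector obtained from entries of $\mathbf{b}$ and $0$'s. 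Thus it suffices to prove that every nonsingular $n\times n$ submatrix of $\tilde{\mathbf{A}}$ has determinant $\pm 1$.

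The key step is to show that $\tilde{\mathbf{A}}$ is itself totally unimodular. I would argue this by induction on the size of a square submatrix $\mathbf{M}$ of $\tilde{\mathbf{A}}$, mirroring the argument in Lemma~\ref{lem:bipartite_utm}. If $\mathbf{M}$ contains no row originating from the $-\mathbf{I}$ block, then $\mathbf{M}$ is a square submatrix of $\mathbf{A}$, hence $\det(\mathbf{M})\in\{0,\pm 1\}$ since $\mathbf{A}$ is TUM by hypothesis. Otherwise, pick a row of $\mathbf{M}$ coming from $-\mathbf{I}$; within the columns of $\mathbf{M}$ this row has at most one nonzero entry (namely the diagonal entry $-1$, if the corresponding column is included). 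If that row is entirely zero inside $\mathbf{M}$, then $\det(\mathbf{M})=0$; otherwise Laplace expansion along this row gives $\det(\mathbf{M})=\pm\det(\mathbf{M}')$ where $\mathbf{M}'$ is a smaller square submatrix of $\tilde{\mathbf{A}}$, and the induction hypothesis gives $\det(\mathbf{M}')\in\{0,\pm 1\}$. This establishes TUM of $\tilde{\mathbf{A}}$; in particular, $\det(\mathbf{B})\in\{-1,+1\}$ for every basis matrix $\mathbf{B}$.

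Finally, by Cramer's rule, $\mathbf{x}^* = \mathbf{B}^{-1}\mathbf{c} = \det(\mathbf{B})^{-1}\operatorname{adj}(\mathbf{B})\,\mathbf{c}$. Since $\mathbf{B}$ is an integer matrix, $\operatorname{adj}(\mathbf{B})$ has integer entries (each entry is an integer cofactor); since $\det(\mathbf{B})=\pm 1$ and $\mathbf{c}$ is integer, the resulting $\mathbf{x}^*$ lies in $\mathbb{Z}^n$, which is the claim.

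The main obstacle is the step that augmenting $\mathbf{A}$ with the non-negativity block $-\mathbf{I}$ preserves total unimodularity. Lemma~\ref{lem:bipartite_utm} is stated for blocks with uniform sign, which does apply to the $-\mathbf{I}$ block, but the $\mathbf{A}$ block may have mixed signs, so the lemma cannot be invoked verbatim; hence the induction sketched above, which uses TUM of $\mathbf{A}$ as a black-box base case. Everything else---the vertex-as-basic-feasible-solution characterization and Cramer's rule---is routine.
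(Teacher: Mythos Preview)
The paper does not prove Lemma~\ref{lem.tum} at all; it is quoted from the literature (\citet{hoffman2010integral}) and used as a black box. There is therefore nothing to compare against on the paper's side.

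Your argument is correct and is the standard textbook proof of this classical fact: characterize a vertex as the unique solution of a nonsingular $n\times n$ subsystem of the stacked constraints $\bigl[\begin{smallmatrix}\mathbf{A}\\ -\mathbf{I}\end{smallmatrix}\bigr]\mathbf{x}\le\bigl[\begin{smallmatrix}\mathbf{b}\\ \mathbf{0}\end{smallmatrix}\bigr]$, show that appending $-\mathbf{I}$ preserves total unimodularity via a cofactor-expansion induction (using TUM of $\mathbf{A}$ as the base case rather than Lemma~\ref{lem:bipartite_utm}, which indeed does not apply to a general TUM $\mathbf{A}$), and conclude integrality by Cramer's rule. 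One minor wording point: in the degenerate case more than $n$ constraints may be tight at $\mathbf{x}^*$, so the $n$ constraints are not canonically determined by the vertex; but you only need the existence of \emph{some} nonsingular $n\times n$ active subsystem, which always holds, so the argument goes through unchanged.
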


We can further show there exist polynomial-time algorithms to find the optimal vertex solution for such a linear program by the following lemma.

\begin{lemma}[\citet{guler1993degeneracy}]
\label{lem.lpvsol}
    For a linear program $\max\{\mathbf{c}^\top\mathbf{x}:\mathbf{A}\mathbf{x}\leq \mathbf{b},\mathbf{x}\geq\mathbf{0}\}$, if optimal solutions exist, an optimal vertex solution can be found in polynomial time.
    In particular, we can find an (integral) vertex of the polytope $\{\mathbf{A}\mathbf{x}\leq \mathbf{b},\mathbf{x}\geq\mathbf{0}\}$ in polynomial time.
\end{lemma}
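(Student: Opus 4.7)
The plan is a two-phase approach: first solve the LP with a generic polynomial-time solver, then extract a vertex from an arbitrary optimal solution. In the first phase I would invoke Khachiyan's ellipsoid method (or an interior-point method) to obtain some optimal solution $\mathbf{x}^{*}$ with optimal value $z^{*} = \mathbf{c}^\top \mathbf{x}^{*}$. This $\mathbf{x}^{*}$ need not be a vertex; it may lie in the relative interior of the optimal face $F = \{\mathbf{x} : \mathbf{A}\mathbf{x} \le \mathbf{b},\ \mathbf{x} \ge \mathbf{0},\ \mathbf{c}^\top\mathbf{x} = z^{*}\}$, so additional work is needed.

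The second phase would convert $\mathbf{x}^{*}$ into a vertex of $F$ via a standard active-set walk. At the current iterate $\mathbf{x}$, I would collect all tight constraints among $\mathbf{A}\mathbf{x} \le \mathbf{b}$, $\mathbf{x} \ge \mathbf{0}$, and $\mathbf{c}^\top \mathbf{x} = z^{*}$. If these span $\mathbb{R}^n$, then $\mathbf{x}$ is already a vertex of $F$ and we stop. Otherwise, a non-zero direction $\mathbf{d}$ lies in the kernel of the active-constraint submatrix; I would walk along $+\mathbf{d}$ or $-\mathbf{d}$ (choosing a side along which some non-active constraint can be driven toward tightness) until a new constraint becomes tight. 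Such a step preserves both feasibility and optimality while strictly increasing the rank of the tight-constraint set, so termination occurs within at most $n$ iterations, each requiring only the linear algebra needed to compute a kernel direction and a step length via ratio tests.

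The ``in particular'' integrality claim then follows by composing this vertex-extraction procedure with Lemma~\ref{lem.tum}: whenever $\mathbf{A}$ is totally unimodular and $\mathbf{b}$ is integral, the returned vertex automatically has integer coordinates, and to merely find a vertex (rather than an optimal one) we can skip the first phase and run only the active-set walk starting from any feasible point produced by Phase~I of the ellipsoid method. The main obstacle I anticipate is careful handling of degeneracy in the extraction step: when many active constraints become linearly dependent simultaneously, a naive implementation may fail to strictly increase the rank of the active set or may attempt a zero-length step. Overcoming this requires a pivoting rule with a guaranteed polynomial bound (lexicographic perturbation, Bland's rule, or the specific degeneracy-resolution scheme analyzed by the cited reference), but none of these tools affects the overall polynomial-time bound claimed in the statement.
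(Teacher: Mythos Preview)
The paper does not prove this lemma at all: it is stated with a citation to \citet{guler1993degeneracy} and used as a black box, so there is no ``paper's own proof'' to compare against. Your sketch is therefore not competing with anything in the paper, but it is a correct and standard argument for the cited result.

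Your two-phase plan is sound. One remark: your concern about degeneracy in the extraction step is overstated. In the active-set walk you describe, the direction $\mathbf{d}$ is chosen in the kernel of \emph{all} currently tight constraints, so any constraint that becomes tight after a positive step must have a normal outside the span of the current active normals (otherwise its slack would be constant along $\mathbf{d}$). Hence the rank of the active set strictly increases at every iteration, and a zero-length step is impossible because the active set already contains every tight constraint by definition. This gives termination in at most $n$ iterations without any lexicographic perturbation or Bland-type rule; the anti-cycling machinery you mention is needed for simplex-style pivoting between bases, not for this rank-increasing walk. Also note that the polyhedron is pointed (it lies in the nonnegative orthant), so for any such $\mathbf{d}$ at least one of $\pm\mathbf{d}$ is not a recession direction and the ratio test returns a finite step; this is the only place where boundedness is used, and you implicitly rely on it when you say ``some non-active constraint can be driven toward tightness.''
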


\subsection{Kneser Graph and Chromatic Number}
Let $n$ and $k$ be two integers.
The \emph{Kneser graph} $\K(n, k)$ is a graph whose vertices are all subsets with $k$ elements of $[n]$, and two vertices are adjacent if their intersection is empty.
It was further extended to the following generalized version.

\begin{definition}
\label{def:generalize_kneser_graph}
    Given positive integers $n, k$ and a non-negative integer $s$, in the generalized Kneser graph $\K(n, k, s)$, the vertices are all the subsets with $k$ elements of $[n]$, and there is an edge between two vertices if and only if the corresponding two subsets intersect at most $s$ elements.
\end{definition}

The \emph{chromatic number} of a graph is the minimum number of colors needed to color the vertices such that no two adjacent vertices are colored the same. 
In other words, the vertices with the same color form an independent set.
Denote by $\chi(G)$ the chromatic number of a graph $G$.
Denote the chromatic number of a generalized Kneser graph $\K(n, k, s)$ by $\chi(n, k, s)$.  
For instance, when $n=4, k=3, s=2$, the generalized Kneser graph has $\binom{4}{3}=4$ vertices and every two vertices are adjacent. 
Thus, $\K(4,3,2)$ is a clique and $\chi(4,3,2) = 4$. 
When $n \ge 2k$, the chromatic number of the Kenser graph $\K(n, k)$ is equal to $n -2k +2$~\citep{LOVASZ1978319,Gre02,Bar78,10.1007/s00493-004-0011-1}. 
For the generalized Kneser graph, \citet{JAFARI2020111682} showed the following lower bound.
\begin{lemma}[\citet{JAFARI2020111682}]\label{lem:chromatic_for_kneser}
For every $0\le s < k < n$, $\chi(n, k, s) \ge n - 2k + 2s + 2$. 
\end{lemma}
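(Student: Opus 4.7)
The plan is to apply Dol'nikov's topological bound (the Lovász-type lower bound extended to Kneser graphs of hypergraphs) after recasting $\K(n,k,s)$ into hypergraph-Kneser form. The first observation is that for any $k$-subsets $A,B\subseteq[n]$, the condition $|A\cap B|\le s$ is equivalent to $\binom{A}{s+1}\cap\binom{B}{s+1}=\emptyset$. Accordingly, define the auxiliary hypergraph $\myH$ on ground set $V=\binom{[n]}{s+1}$ with hyperedges $\{E_A:=\binom{A}{s+1}:A\in\binom{[n]}{k}\}$. Then $\K(n,k,s)$ is isomorphic to the usual Kneser graph $\mathrm{KG}(\myH)$, whose vertices are the hyperedges of $\myH$ and whose edges join disjoint pairs.

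By Dol'nikov's theorem, $\chi(\mathrm{KG}(\myH))\ge \mathrm{cd}_2(\myH)$, where the $2$-colorability defect $\mathrm{cd}_2(\myH)$ is the smallest number of points of $V$ that must be removed so that the restricted hypergraph admits a proper $2$-coloring (no surviving hyperedge monochromatic). Thus the lemma reduces to the purely combinatorial statement $\mathrm{cd}_2(\myH)\ge n-2k+2s+2$.

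To establish this bound I would argue by contradiction: assume $D\subseteq V$ satisfies $|D|\le n-2k+2s+1$ and that $V\setminus D$ is properly $2$-colored by some $\phi:V\setminus D\to\{+,-\}$. Place the $n$ elements of $[n]$ in general position on $S^{\,n-2k+2s+1}$. By Gale's lemma, every open hemisphere contains at least $k-s$ of these points, leaving combinatorial room to select an $(s+1)$-subset of points in $V\setminus D$ inside each hemisphere. A Borsuk-Ulam-type argument applied to the resulting $\mathbb{Z}_2$-equivariant map then forces a pair of antipodal hemispheres to produce surviving $(s+1)$-subsets of opposite $\phi$-colors that extend to two $k$-subsets $A,A'$ with $|A\cap A'|\le s$, so that the restricted hyperedges $E_A, E_{A'}\subseteq V\setminus D$ are monochromatic of opposite colors, contradicting that $\phi$ is a proper $2$-coloring.

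The main obstacle is precisely this Gale/Borsuk-Ulam step. When $s=0$ the argument collapses to Bárány's proof of Lovász's theorem, but for $s\ge 1$ one has to control how the selected $(s+1)$-subsets vary continuously with the hemisphere direction and how the removed set $D$ interacts with them. A convenient way to handle this is to bypass the combinatorial defect inequality and instead work directly with the box complex $B(\K(n,k,s))$, constructing a $\mathbb{Z}_2$-equivariant map $S^{\,n-2k+2s}\to B(\K(n,k,s))$ from the Gale transform; the conclusion $\chi(\K(n,k,s))\ge n-2k+2s+2$ then follows from the standard inequality $\chi(G)\ge \mathrm{ind}_{\mathbb{Z}_2}(B(G))+2$, avoiding the need to track the coloring $\phi$ explicitly.
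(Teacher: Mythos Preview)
The paper does not prove this lemma; it is quoted from \citet{JAFARI2020111682} and used as a black box. So there is no ``paper's proof'' to compare against, and the relevant question is simply whether your argument is sound.

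Your reformulation of $\K(n,k,s)$ as the Kneser graph $\mathrm{KG}(\myH)$ of the hypergraph with ground set $V=\binom{[n]}{s+1}$ and hyperedges $E_A=\binom{A}{s+1}$ is correct, and Dol'nikov's inequality $\chi(\mathrm{KG}(\myH))\ge \mathrm{cd}_2(\myH)$ certainly applies. The fatal problem is the combinatorial claim $\mathrm{cd}_2(\myH)\ge n-2k+2s+2$: it is \emph{false} for $s\ge 1$. Take $n=5$, $k=3$, $s=1$. Then $V$ is the edge set of $K_5$ and the hyperedges of $\myH$ are exactly the triangles of $K_5$. The two Hamiltonian $5$-cycles that partition $E(K_5)$ give a $2$-coloring of $V$ with no monochromatic triangle, so $\mathrm{cd}_2(\myH)=0$, while $n-2k+2s+2=3$. (More generally, for $s=1$ and any $k\ge 3$, whenever $n<R(k,k)$ one has $\mathrm{cd}_2(\myH)=0$, so Dol'nikov on this $\myH$ is vacuous on a wide range of parameters.) This is precisely the ``main obstacle'' you flagged, but the obstacle is not merely technical: the inequality you are trying to prove does not hold, so no Gale/Borsuk--Ulam argument can rescue this route.

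Your fallback suggestion of working directly with the box complex and producing a $\mathbb{Z}_2$-equivariant map $S^{\,n-2k+2s}\to B(\K(n,k,s))$ is the right instinct, but note that this is a strictly stronger statement than the Dol'nikov bound (it asserts higher connectivity than the colorability defect detects), so it requires a genuinely different construction rather than a repackaging of the $\mathrm{cd}_2$ argument. The proofs in the literature (Frankl, and later \citet{JAFARI2020111682}) proceed along these stronger topological lines; if you want to supply a proof here, you would need to carry that out in full rather than sketch it.
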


\subsection{Lovasz's Bound and Fundamentals of Topology}
\label{sec:basics_topology}
In this part, we provide some basics of topology.
A \emph{$n$-simplex} is a geometric object with $(n+1)$ vertices in an $n$-dimension space.
In particular, a vertex is a $0$-dimensional simplex.
We use the words \emph{vertex}, \emph{edge}, \emph{triangle}, and \emph{tetrahedron} to refer to simplices with dimensions $0$, $1$, $2$, and $3$ respectively.

\begin{definition}[Simplicial complex]
A \emph{simplicial complex} $\K$ is a set of simplices that satisfies 1) a face of a simplex in $\K$ is also a simplex in $\K$, and 2) any non-empty intersection of every two simplices $\sigma_1, \sigma_2\in \K$ is a face of both simplices.    
\end{definition}
A simplicial complex can be viewed as a hereditary set system $\mathcal{F}$ of a ground set $[n]$ where there are $n$ vertices representing the $n$ elements and a simplex corresponds to a subset in $\mathcal{F}$.
A simplicial complex can also be viewed as a topological space.
For example, it can be embedded into an Euclidean space.
For a simplicial complex $\K$, we use $\tilde{\K}$ to denote its geometrical realization.

\begin{definition}[Neighborhood complex]
Given a graph $G$, its \emph{neighborhood complex} $\N(G)$ is a simplicial complex where its vertices are the vertices of $G$ and each of its simplices represents a subset of vertices of $G$ that share a common neighbor.
That is,
$$
\N(G) = \{\sigma \subseteq V: \exists v\in V, (a, v)\in E, \forall a\in \sigma \}\,.
$$
\end{definition}

\begin{definition}[Homotopic and null-homotopic]
A continuous mapping $f:X\rightarrow Y$ is said to be \emph{homotopic} to another continuous mapping $g:X\rightarrow Y$ if there exists a continuous mapping $F: X\times [0,1]$ such that $F(x, 0) = f(x)$ and $F(x, 1) = g(x)$. 
A continuous mapping $f:X\rightarrow Y$ is said to be \emph{null-homotopic} if it is \emph{homotopic} to a constant mapping.    
\end{definition}
\begin{definition}[$k$-connectivity]
\label{def:k_connect}
A topological space $T$ is called \emph{$k$-connected} if each continuous mapping of $S^r$ (the surface of the $(r+1)$-dimensional unit ball in $\mathbb{R}^{r+1}$) into $T$ extends continuously to the whole ball, for each $r=0,1,\ldots,k$.
Equivalently, every continuous mapping $f:S^r\to T$ is null-homotopic.
\end{definition}

Our proof will use the following celebrated result due to~\citet{LOVASZ1978319}.
\begin{restatable}[\citet{LOVASZ1978319}]{theorem}{Lovaszbound}
\label{thm:neighbor_chrom}
If $\tilde{\N}(G)$ is $k$-connected, then $\chi(G)\geq k+3$.
\end{restatable}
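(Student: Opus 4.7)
The plan is to follow Lov\'asz's original topological approach: produce a $\mathbb{Z}_2$-equivariant continuous map from a suitable auxiliary space built from $G$ into a sphere, and then leverage $k$-connectivity via a Borsuk--Ulam-type theorem to obtain an obstruction forcing the sphere's dimension (and hence $\chi(G)$) to be large.

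First I would introduce the \emph{box complex} $B(G)$, whose simplices are pairs $(A,B)$ of disjoint, non-empty subsets of $V(G)$ such that every vertex of $A$ is adjacent in $G$ to every vertex of $B$; it carries a free simplicial $\mathbb{Z}_2$-action via the swap $(A,B)\mapsto(B,A)$. The first key step is to show that the geometric realizations $\tilde{B}(G)$ and $\tilde{\N}(G)$ are homotopy equivalent. One way is to observe that the simplicial projection $B(G)\to\N(G)$ sending $(A,B)\mapsto A$ has contractible fibers --- over each simplex of $\N(G)$ the $B$-coordinate ranges over the simplex of common neighbors, which is itself a cone --- and then to invoke the nerve/quotient lemma. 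Consequently $\tilde{B}(G)$ inherits $k$-connectivity from $\tilde{\N}(G)$.

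Next, given a proper coloring $c\colon V(G)\to[m]$ with $m=\chi(G)$, I would construct a $\mathbb{Z}_2$-equivariant continuous map $\Phi\colon \tilde{B}(G)\to S^{m-2}$, where $S^{m-2}$ is the unit sphere inside the hyperplane $\{x\in\mathbb{R}^m:\sum_i x_i=0\}$ equipped with the antipodal action. Define $\Phi$ by sending each vertex $v$ on the ``$A$-side'' of $(A,B)$ to $+e_{c(v)}$ and each vertex $w$ on the ``$B$-side'' to $-e_{c(w)}$, extending affinely to each closed simplex, and then radially normalizing. The key point is that on any convex combination the raw vector $\sum_{v\in A}\lambda_v e_{c(v)}-\sum_{w\in B}\mu_w e_{c(w)}$ never vanishes: since every $v\in A$ and $w\in B$ are adjacent in $G$, properness of $c$ forces $c(A)\cap c(B)=\emptyset$, so the positive coordinates come only from $A$'s colors and the negative coordinates only from $B$'s, while the $\lambda_v,\mu_w$ sum to $1$. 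Equivariance under the swap is automatic by construction.

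Finally, I invoke the standard Borsuk--Ulam-type obstruction: a $k$-connected free $\mathbb{Z}_2$-space admits no $\mathbb{Z}_2$-equivariant map to $S^{k}$, which follows by an inductive cell-by-cell extension argument using the $(k-1)$-connectivity of $S^{k}$. Applied to $\Phi$, this forces $m-2\geq k+1$, i.e., $\chi(G)\geq k+3$. The main technical hurdle is the first step: carefully establishing the homotopy equivalence $\tilde{B}(G)\simeq\tilde{\N}(G)$ together with freeness of the involution on $B(G)$, since only after this transport does $k$-connectivity of $\tilde{\N}(G)$ translate into a genuine obstruction for the sphere-valued equivariant map. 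Once this is in place, constructing $\Phi$ and invoking the sphere obstruction are essentially routine.
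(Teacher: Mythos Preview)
The paper does not prove this theorem at all: it is quoted as a black-box result of Lov\'asz (the citation \citep{LOVASZ1978319}) and is only \emph{applied} in Sect.~\ref{sect:proof_chromatic_gamma_n} to bound $\chi(\Gamma(n))$. So there is no ``paper's own proof'' to compare your proposal against.

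That said, your sketch is the standard modern proof (box complex plus Borsuk--Ulam), and the overall architecture is correct. There is, however, a genuine slip in the construction of $\Phi$. You declare the target to be $S^{m-2}\subset\{x\in\mathbb{R}^m:\sum_i x_i=0\}$, but you then send an $A$-side vertex $v$ to $+e_{c(v)}$, which has coordinate sum $1$, not $0$; the affine extension $\sum_{v\in A}\lambda_v e_{c(v)}-\sum_{w\in B}\mu_w e_{c(w)}$ has coordinate sum $\sum\lambda_v-\sum\mu_w$, which ranges over $[-1,1]$. As written, $\Phi$ lands in $S^{m-1}$, and the obstruction you invoke would only yield $m-1\geq k+1$, i.e.\ $\chi(G)\geq k+2$, one short of the claim. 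The clean fix is to route through $B(K_m)$: the proper $m$-coloring is a graph homomorphism $G\to K_m$, which induces a simplicial $\mathbb{Z}_2$-map $B(G)\to B(K_m)$, and one shows separately that $B(K_m)$ is $\mathbb{Z}_2$-homotopy equivalent to $S^{m-2}$ (it is the boundary of the $m$-cross-polytope with the two ``all-$A$'' and ``all-$B$'' facets deleted). Alternatively, work with the version of the box complex allowing one side empty, which is the suspension of yours and hence $(k+1)$-connected, and then your direct map to $S^{m-1}$ gives the correct bound.
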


\section{Double Fairness of Two Agents with General Monotone Utilities}
\label{sect:generalMonotone}
In this section, we deal with general monotone utility functions and prove that a doubly EF-$1$ allocation always exists for two agents.

\begin{theorem}\label{thm:double_ef1}
When $n = 2$ and the valuations are monotone, a doubly EF-$1$ allocation always exists. 
\end{theorem}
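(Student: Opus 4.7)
The plan is to prove Theorem~\ref{thm:double_ef1} via an extremal-allocation argument in the spirit of Plaut--Roughgarden's proof of EFX for two agents. Since $n=2$, an allocation is completely determined by agent~$1$'s bundle $A_1\subseteq M$, so the search space is the finite set $2^M$. I rephrase each of the four EF-$1$ requirements as a ``slack'' inequality: for $i\in\{1,2\}$, $j=3-i$, and $f\in\{v_i,u_i\}$,
\[
s_i^f(A_1)\ :=\ f(A_i)-\min_{g\in A_j} f\bigl(A_j\setminus\{g\}\bigr)\ \geq\ 0.
\]
A doubly EF-$1$ allocation is exactly one at which all four slacks $s_1^{v_1},s_1^{u_1},s_2^{v_2},s_2^{u_2}$ are non-negative.

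My first step would be to fix a total order on allocations: rank $(A_1,A_2)$ by the leximin of the sorted $4$-vector $(s_1^{v_1},s_1^{u_1},s_2^{v_2},s_2^{u_2})$, breaking ties in the leximin$^{++}$ fashion of Plaut--Roughgarden (e.g.\ prefer larger $|A_1|+|A_2|$). Let $(A_1^\ast,A_2^\ast)$ be a leximin-maximal allocation; it exists by the finiteness of $2^M$. Next I would argue by contradiction that $(A_1^\ast,A_2^\ast)$ is doubly EF-$1$. Suppose not; without loss of generality the violated slack is $s_1^{v_1}<0$, so $v_1(A_1^\ast)<v_1(A_2^\ast\setminus\{g\})$ for every $g\in A_2^\ast$. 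The aim is to exhibit a local modification---a single-item transfer from $A_2^\ast$ to $A_1^\ast$, or a one-item swap between the bundles---whose effect on the $4$-vector of slacks is a strict leximin improvement. Monotonicity of $v_1,u_1,v_2,u_2$ controls the directions of change: moving an item into $A_1^\ast$ cannot decrease $f(A_1^\ast)$ for $f\in\{v_1,u_1\}$ and cannot increase $f(A_2^\ast)$ for $f\in\{v_2,u_2\}$, though the $\min$-over-$g$ terms that appear in each slack move less predictably.

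The main obstacle is this last step: the four slacks depend on four distinct monotone valuations, so a transfer that fixes $s_1^{v_1}$ may worsen, e.g., $s_2^{u_2}$. The argument will need a careful case distinction on \emph{which} item of $A_2^\ast$ to move (for instance, one whose $u_2$-value to agent~$2$ is as small as possible) and may need to replace a unilateral transfer by a conservative one-item swap, so as to guarantee that no slack drops below the currently-minimum one. Should this direct extremal argument resist closure, an alternative plan is to use a discrete topological tool: first establish, via a Borsuk--Ulam / necklace-splitting-style lemma, the existence of a bipartition $(S_1,S_2)$ of $M$ that is simultaneously EF-$1$-balanced for both $v_1$ and $u_1$, and then perform a cut-and-choose phase in which agent~$2$ chooses the assignment of $(S_1,S_2)$ to $(A_1,A_2)$ that also satisfies her EF-$1$ conditions under $v_2$ and $u_2$. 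Either route reduces the monotone $n=2$ case to a finite combinatorial existence claim that is provable without an explicit algorithm, consistent with the non-constructive status indicated for this result.
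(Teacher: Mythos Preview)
Your proposal sketches two routes, and neither closes.

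\textbf{The leximin route.} You correctly identify the obstacle yourself: a single-item transfer or swap changes all four slacks at once, and with four unrelated monotone valuations there is no reason the move that repairs the smallest slack cannot push another slack strictly below it. The Plaut--Roughgarden argument works because the two agents share a valuation (or the valuations are structured enough that a well-chosen item has controlled marginal effect on both sides). Here $v_1,u_1,v_2,u_2$ are four arbitrary monotone functions, and the ``$\min_{g\in A_j}$'' term in each slack makes the effect of any local move on $s_2^{v_2}$ and $s_2^{u_2}$ essentially decoupled from its effect on $s_1^{v_1}$. Without a concrete choice of item and a proof that no other slack becomes the new minimum, this is not a proof; and I do not see how to make such a choice in general.

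\textbf{The cut-and-choose route.} The ``choose'' step is broken. Suppose you find a bipartition $(S_1,S_2)$ that is EF-$1$ for agent~$1$ under both $v_1$ and $u_1$, regardless of which side she receives. Agent~$2$ then picks between the two assignments $(S_1,S_2)$ and $(S_2,S_1)$. But agent~$2$ has \emph{two} constraints, $v_2$ and $u_2$, and only two options: it can happen that $v_2$ strongly prefers $S_1$ while $u_2$ strongly prefers $S_2$, so neither assignment is EF-$1$ for agent~$2$ under both. Two options cannot in general satisfy two independent constraints.

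\textbf{What the paper does.} The paper's proof is a chromatic-number argument. It first shows (Proposition~\ref{prop:ef1_intersection}) that for any two allocations $(A_1,A_2)$ and $(A_1',A_2')$ with $|A_1\cap A_1'|\leq 1$ and $|A_1\cup A_1'|\geq m-1$, at least one of them is EF-$1$ with respect to each fixed valuation among $v_1,u_1,v_2,u_2$. This means the allocations failing EF-$1$ for any single one of the four valuations form an independent set in the graph $\Gamma(m)$ whose vertices are subsets of $M$ and whose edges encode exactly that intersection/union condition. The proof is completed by showing $\chi(\Gamma(m))\geq 5$ (Theorem~\ref{thm:chromatic_gamma_n}), so four independent sets cannot cover all vertices; the uncovered vertex is a doubly EF-$1$ allocation. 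The chromatic lower bound is itself proved topologically, via Lov\'asz's theorem relating $\chi(G)$ to the connectivity of the neighborhood complex. Your intuition that a Borsuk--Ulam-type tool is needed is on target, but it enters through the chromatic number rather than through a two-stage cut-and-choose.
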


To prove \Cref{thm:double_ef1}, we first show \Cref{prop:ef1_intersection}. Recall that $v_1$ and $v_2$ represent the utility functions for agents $1$ and $2$, respectively, and $u_1$ and $u_2$ correspond to the two valuation functions for the allocator.
The intuition behind this proposition is simple. 
If $|A_1\cup A_1'|\geq m-1$, then $A_1'$ contains all but at most one item in the complement of $A_1$. 
If the value of $A_1$ is too small to satisfy EF-$1$ for $v_1$ or $u_1$, then $A_1'$ must have a large enough value.
In addition, notice that $|A_1\cap A_1'|\leq 1$ implies $|A_2\cup A_2'|\geq m-1$.

\begin{restatable}{proposition}{EFOneIntersect}
\label{prop:ef1_intersection}
    Consider two allocations $(A_1,A_2)$ and $(A_1',A_2')$. If $|A_1\cup A_1'|\geq m-1$, then at least one of the two allocations is EF-$1$ with respect to $v_1$. The same holds for $u_1$. If $|A_1\cap A_1'|\leq 1$, then at least one of the two allocations is EF-$1$ with respect to $v_2$. The same holds for $u_2$.
\end{restatable}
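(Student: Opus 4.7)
The plan is to prove all four statements by the same structural argument, which I describe for the first claim; the other three follow by exchanging the roles of $1/2$, $A/A'$, or $v/u$.

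For the first statement, I would argue by contradiction: assume both $(A_1,A_2)$ and $(A_1',A_2')$ fail EF-$1$ with respect to $v_1$. The hypothesis $|A_1\cup A_1'|\geq m-1$ is equivalent to $|A_2\cap A_2'|\leq 1$. Let $g^\ast$ be the unique element of $A_2\cap A_2'$ if it is nonempty, and define $B=\{g^\ast\}$; otherwise set $B=\emptyset$. In either case $|B|\leq 1$, so $B$ is a legal witness for an EF-$1$ check. The key structural observation is that $A_2\setminus B$ and $A_2'\setminus B$ are \emph{disjoint} subsets of $M\setminus B$, and moreover $A_2\setminus B\subseteq A_1'$ and $A_2'\setminus B\subseteq A_1$, because anything in $A_2$ that is not in $A_2'$ must be in $A_1'$ (and symmetrically).

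From the failure of EF-$1$ for $(A_1,A_2)$ applied with the witness $B$, I obtain $v_1(A_1)<v_1(A_2\setminus B)$, and by monotonicity of $v_1$ together with $A_2\setminus B\subseteq A_1'$, this gives $v_1(A_1)<v_1(A_1')$. The failure of EF-$1$ for $(A_1',A_2')$ applied with the same witness $B$ (which lies in $A_2'$ as well) gives $v_1(A_1')<v_1(A_2'\setminus B)\leq v_1(A_1)$ in the same way. Combining the two strict inequalities yields the desired contradiction. The one minor edge case is when some $A_j$ or $A_j'$ is empty, but then the corresponding allocation is trivially EF-$1$ and the proposition holds directly, so the argument above only needs to be run in the nontrivial case.

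The statement for $u_1$ is proved by the verbatim same argument with $v_1$ replaced by $u_1$, since only monotonicity of the utility function was used. For the third and fourth statements, the hypothesis $|A_1\cap A_1'|\leq 1$ plays the symmetric role: letting $g^\ast$ now denote the unique element of $A_1\cap A_1'$ (if any) and $B=\{g^\ast\}$ or $\emptyset$, one has $A_1\setminus B\subseteq A_2'$ and $A_1'\setminus B\subseteq A_2$, so the same two-sided contradiction argument goes through for agent $2$ with $v_2$ (and then with $u_2$).

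I do not expect a real obstacle here; the only thing to be careful about is choosing the witness $B$ so that it is simultaneously a subset of \emph{both} $A_2$ and $A_2'$ (resp. $A_1$ and $A_1'$), which is exactly what makes the two not-EF-$1$ inequalities combine into a contradiction without having to split into cases based on which agent's bundle contains the shared item.
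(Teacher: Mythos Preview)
Your proof is correct and follows essentially the same approach as the paper's: both exploit the containments $A_2\setminus B\subseteq A_1'$ and $A_2'\setminus B\subseteq A_1$ (where $B=A_2\cap A_2'$) together with monotonicity to obtain the chain $v_1(A_1)<v_1(A_1')<v_1(A_1)$. The only cosmetic difference is that the paper splits into the two cases $|A_1\cup A_1'|=m$ and $|A_1\cup A_1'|=m-1$ and reduces the second to the first by deleting the shared item, whereas you handle both at once by allowing $B$ to be empty; your treatment is slightly more uniform but the underlying idea is identical.
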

\begin{proof}
    For the first part, suppose $|A_1\cup A_1'|\geq m-1$.
    Suppose without loss of generality that $(A_1,A_2)$ is not EF-$1$ for $v_1$.
    If $|A_1\cup A_1'|=m$, i.e., $A_1\cup A_1'=M$, we have $A_2=M\setminus A_1\subseteq A_1'$ and, as a result, $A_2'=M\setminus A_1'\subseteq M\setminus A_2=A_1$.
    Thus, $v_1(A_1')\geq v_1(A_2)>v_1(A_1)\geq v_1(A_2')$, where the middle inequality is due to that $(A_1,A_2)$ fails envy-freeness with respect to $v_1$ (since we have assumed the allocation is not even EF-$1$).
    This implies $(A_1',A_2')$ is envy-free with respect to $v_1$ and thus EF-$1$.
    If $|A_1\cup A_1'|=m-1$, let $g$ be the (only) item in the set $M\setminus(A_1\cup A_1')$.
    We know $g\in A_2$ and $g\in A_2'$.
    Consider the two allocations $(A_1,A_2\setminus\{g\})$ and $(A_1',A_2'\setminus\{g\})$ of the item set $M\setminus\{g\}$.
    We have $A_1\cup A_1'=M\setminus\{g\}$ and $(A_1,A_2\setminus\{g\})$ is not envy-free for $v_1$ (since $(A_1,A_2)$ is not EF-$1$).
    By the same analysis above, $(A_1',A_2'\setminus\{g\})$ must be envy-free for $v_1$, and $(A_1',A_2')$ must satisfy EF-$1$.
    The same arguments hold for $u_1$.

    For the second part, noticing that $|A_1\cap A_1'|\leq 1$ implies $|A_2\cup A_2'|\geq m-1$, the same arguments in the first part can be applied.
\end{proof}

Based on the above proposition, for any two allocations $(A_1, A_2)$ and $(A_1', A_2')$, if both of them are not EF-1 to some valuation (say $v_1$), then it should hold that $\abs{A_1\cup A_1'} < m-1$.
Consider the first bundles of all the allocations.
Next, we introduce an auxiliary graph, the $\Gamma$-graph, of which the vertex set consists of all the bundles and the edges are constructed based on \Cref{prop:ef1_intersection}.

\begin{definition}[$\Gamma$-Graph]
\label{def:gamma_n}
Given a positive integer $n$, let $\Gamma(n)$ be the graph with $2^n$ vertices corresponding to the $2^n$ subsets of $[n]$ and there is an edge between two vertices if the corresponding two subsets $A$ and $B$ satisfy $|A\cap B|\leq 1$ and $|A\cup B|\geq n-1$.
\end{definition}

\begin{example}
Consider $\Gamma(n)$ of $n= 3$.
The set of vertices consists of all eight subsets of $\{1, 2, 3\}$.
By definition of the $\Gamma$-graph, two subsets $A$ and $B$ are adjacent in $\Gamma(3)$ if and only if $\abs{A\cap B} \le 1$ and $\abs{A \cap B} \ge 2$.
We illustrate the graph $\Gamma(3)$ in \Cref{fig:gamma_3}.
As the six vertices $\{1\}, \{2\}, \{3\}, \{1,2\}, \{1,3\}, \{2, 3\}$ form a clique, the chromatic number of $\Gamma(3)$ is at least $6$.
\begin{figure}[h]
\centering
\begin{tikzpicture}[
 agent/.style={regular polygon, regular polygon sides=3, draw, thick, fill=green!30, minimum size=3mm}, 
 item/.style={circle, draw, thick, fill=blue!30, minimum size=5mm}]

\node[] (n1) at ({300}:1.5) {$\{1, 2\}$};
\node[] (n2) at ({0}:1.5) {$\{1, 3\}$};
\node[] (n3) at ({60}:1.5) {$\{2, 3\}$};
\node[] (n4) at ({120}:1.5) {$\{1\}$};
\node[] (n5) at ({180}:1.5) {$\{2\}$};
\node[] (n6) at ({240}:1.5) {$\{3\}$};
\node[] (n0) at (-4, 0) {$\{1,2,3\}$};
\node[] (n7) at (4, 0) {$\emptyset$};

\foreach \i in {1,...,6} {
    \foreach \j in {\i,...,6} {
        \ifnum\i<\j
            \draw[thick] (n\i) -- (n\j);
        \fi
    }
}
\foreach \i in {1,...,3} {
    \draw[thick] (n7) -- (n\i);
}
\foreach \i in {4,...,6} {
    \draw[thick] (n0) -- (n\i);
}
\draw[thick, bend left=-50] (n0) to (n7);
\end{tikzpicture}
\vspace{-8mm}
\caption{Illustration of $\Gamma(3)$}
\label{fig:gamma_3}
\end{figure}
\end{example}

We now elaborate on how $\Gamma$-graph connects to our problem. 
Given the set $M$ of $m$ items, each vertex of $\Gamma(m)$ represents a bundle $A$ and induces an allocation $(A, M\setminus A)$.
By \Cref{prop:ef1_intersection}, for each of $v_1,v_2,u_1$, and $u_2$, if two allocations correspond to two vertices in $\Gamma(m)$ that are adjacent, at least one of them satisfies EF-$1$.
Thus, for each of the four valuations, the set of allocations that fail the EF-1 criterion must form an independent set.
To show the existence of doubly EF-$1$ allocations, it suffices to show that the vertices of $\Gamma(m)$ cannot be covered by four independent sets, i.e., the \emph{chromatic number} of $\Gamma(m)$ is at least $5$. 
Thus, the technical key for proving 
\Cref{thm:double_ef1} is the following theorem.

\begin{theorem}\label{thm:chromatic_gamma_n}
    For every $n\geq 3$, $\chi(\Gamma(n))\geq 5$.
\end{theorem}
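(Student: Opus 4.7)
The plan is to invoke Lovász's topological bound (\Cref{thm:neighbor_chrom}): to conclude $\chi(\Gamma(n))\ge 5$, it suffices to show that the geometric realization $\tilde{\mathcal{N}}(\Gamma(n))$ is $2$-connected. The case $n=3$ is direct: any two of the six subsets of $[3]$ of sizes $1$ and $2$ satisfy $|A\cap B|\le 1$ and $|A\cup B|\ge 2$, so together they form a clique $K_6$ in $\Gamma(3)$, giving $\chi(\Gamma(3))\ge 6$. I therefore focus on $n\ge 4$.

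First I would work out a clean combinatorial description of adjacency in $\Gamma(n)$. Writing out the conditions $|A\cap C|\le 1$ and $|A\cup C|\ge n-1$, one sees that the neighbors of a subset $C\subseteq[n]$ are exactly the sets $[n]\setminus C'$, where $C'$ is obtained from $C$ by the identity, by adding one element, by removing one element, or by swapping one element (so $|C\triangle C'|\le 2$ and $\bigl||C|-|C'|\bigr|\le 1$). Consequently a collection $\{A_1,\dots,A_k\}$ forms a simplex of $\mathcal{N}(\Gamma(n))$ iff their complements $[n]\setminus A_i$ all lie in a common ``edit-distance-one ball'' around some center $C\subseteq[n]$; this reformulation organizes the combinatorics of the higher simplices of $\mathcal{N}(\Gamma(n))$.

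Next, following the strategy hinted at in the introduction, I would prove that the inclusion of the $2$-skeleton $\mathcal{N}^{(2)}(\Gamma(n))\hookrightarrow\tilde{\mathcal{N}}(\Gamma(n))$ is nullhomotopic; by simplicial approximation, every continuous map $S^r\to\tilde{\mathcal{N}}(\Gamma(n))$ with $r\le 2$ is homotopic to a simplicial map into the $2$-skeleton, so nullhomotopy of the inclusion yields $2$-connectedness. To build the nullhomotopy I plan a cone-style extension argument: assign to every simplex $\sigma$ of dimension $\le 2$ a ``hub'' vertex $h(\sigma)$ such that $\sigma\cup\{h(\sigma)\}$ is still a simplex of $\mathcal{N}(\Gamma(n))$. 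Concretely, for each vertex $A$ pick a hub so that $\{A,h\}$ is an edge; for each edge $\{A,B\}$ pick a hub so that $\{A,B,h\}$ is a triangle; and for each triangle $\{A,B,C\}$ pick a hub so that $\{A,B,C,h\}$ is a tetrahedron of $\mathcal{N}(\Gamma(n))$. Given a simplicial map $S^r\to\mathcal{N}^{(2)}(\Gamma(n))$ with $r\le 2$, coning off simplex-by-simplex through these hubs produces an explicit extension to $B^{r+1}\to\tilde{\mathcal{N}}(\Gamma(n))$.

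The main obstacle is the triangle-to-tetrahedron extension: given $A,B,C$ sharing a common neighbor $D$ in $\Gamma(n)$, one has to produce a hub $h$ and a common neighbor $D'$ of $\{A,B,C,h\}$, which amounts to perturbing $D$ to $D'$ by a single swap/add/remove so that all four adjacency constraints hold simultaneously. A further difficulty is that $\mathcal{N}(\Gamma(n))$ admits no universal hub: for example $\{\emptyset,[n]\}$ is not even an edge of $\mathcal{N}(\Gamma(n))$ for $n\ge 3$, since every neighbor of $\emptyset$ has size $\ge n-1$ while every neighbor of $[n]$ has size $\le 1$. I expect to cope with this by partitioning the $2$-skeleton according to the size profile of $\{A,B,C\}$ and assigning hubs adaptively (singletons for ``small'' $A,B,C$, and complements of singletons for ``large'' ones), then using the hypothesis $n\ge 4$ to guarantee enough free indices in $[n]$ to perform the perturbation $D\to D'$ in each case. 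A careful case analysis on the intersection and size patterns of $A,B,C$ with $D$ will close out the argument and yield \Cref{thm:chromatic_gamma_n}.
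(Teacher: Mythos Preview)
Your high-level strategy matches the paper's: invoke Lov\'asz's bound and show the inclusion of the $2$-skeleton of $\mathcal{N}(\Gamma(n))$ into $\tilde{\mathcal{N}}(\Gamma(n))$ is nullhomotopic. Your combinatorial description of adjacency in $\Gamma(n)$ and of the simplices of $\mathcal{N}(\Gamma(n))$ is also correct and is essentially what the paper establishes in its Propositions.

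The genuine gap is in the ``cone-style extension'' step. Assigning a hub $h(\sigma)$ to each simplex $\sigma$ so that $\sigma\cup\{h(\sigma)\}$ is a simplex of $\mathcal{N}$ does \emph{not} by itself give a nullhomotopy of the $2$-skeleton inclusion, because your hubs vary with $\sigma$. If two triangles $\sigma$ and $\tau$ share an edge but have distinct hubs $h(\sigma)\neq h(\tau)$, the two cones you attach do not glue to a continuous map on any ball; you have merely replaced one $2$-cycle by another. The fact that every $r$-simplex bounds an $(r{+}1)$-simplex is much weaker than $r$-connectedness (think of $\partial\Delta^{r+1}$). Your ``adaptive'' choice of hubs according to size profile does not address this coherence problem, and nothing in the sketch explains how the local cones are to be made globally compatible.

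The paper handles exactly this issue by abandoning the one-shot cone idea. It classifies edges and triangles of $\mathcal{N}$ as \emph{good}, \emph{bad}, or \emph{terrible} according to the sizes $|X_i\setminus X_j|$, and then performs a three-phase deformation of $\tilde{\mathcal{N}}_2$ inside $\tilde{\mathcal{N}}$: Phase~I pushes each terrible triangle across a carefully chosen tetrahedron onto three non-terrible faces; Phase~II removes all bad edges and bad triangles by a sequence of tetrahedron-moves, taking care never to re-introduce a bad edge already deleted; Phase~III collapses the remaining ``good'' subcomplex to the vertex $\emptyset$ via the maps $\phi_k(X)=X\setminus\{k\}$, proving for each good edge and good triangle that $X_1,\dots,X_r,\phi_k(X_1),\dots,\phi_k(X_r)$ span a simplex in $\mathcal{N}$. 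Each phase is a global homotopy (not a per-simplex cone), and the bookkeeping that makes the moves compatible across shared faces is precisely the content missing from your sketch. If you want to salvage the cone approach, you would need either a single global cone point (which you correctly rule out) or a simplicial map $h$ from the $2$-skeleton to a contractible subcomplex together with simplices witnessing the homotopy---which is essentially what the paper's Phase~III provides after the first two phases have stripped away the obstructions.
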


We defer the detailed proof to \Cref{sect:proof_chromatic_gamma_n}.
Subsequently, we can prove \Cref{thm:double_ef1} using \Cref{prop:ef1_intersection} and \Cref{sect:proof_chromatic_gamma_n}. 
Assume $m\geq 3$ without loss of generality (for otherwise the theorem is trivial as any allocation $(A,B)$ with $|A|\leq 1$ and $|B|\leq1$ must be doubly EF-$1$).
Consider the graph $\Gamma(m)$ in \Cref{def:gamma_n}, where each vertex corresponds to a subset of items $A$, and it defines an allocation $(A,M\setminus A)$.
Proposition~\ref{prop:ef1_intersection} implies that the allocations that are not EF-$1$ with respect to each of $v_1, v_2, u_1$, and $u_2$ form an independent set in $\Gamma(m)$.
Therefore, as long as $\Gamma(m)$ is not $4$-colorable, there exists an allocation that cannot be covered by the union of the four independent sets, hence doubly EF-$1$.
As shown in Theorem~\ref{thm:chromatic_gamma_n} that $\chi(\Gamma(n))\ge 5$, \Cref{thm:double_ef1} concludes.

\begin{remark}
As a remark, our proof of Theorem~\ref{thm:double_ef1} is non-constructive.
If utility functions are additive, in Sect.~\ref{sec:additive-two-agents}, we show that a doubly EF-$1$ allocation can be computed in polynomial time.
The techniques are different from the ones in this section.    
\end{remark}

\subsection{Proof of Theorem~\ref{thm:chromatic_gamma_n}}
\label{sect:proof_chromatic_gamma_n}
The proof of Theorem~\ref{thm:chromatic_gamma_n} is based on topological arguments and the famous lower bound of \citet{LOVASZ1978319}.
We first review this lower bound and then prove Theorem~\ref{thm:chromatic_gamma_n}.
\Lovaszbound*
By \Cref{thm:neighbor_chrom}, to prove that $\chi(\Gamma(n))\geq 5$, it suffices to show that the geometric realization of the neighborhood complex of $\Gamma(n)$ is $2$-connected.
Denote by $\N$ and $\tilde{\N}$ the neighborhood complex of $\Gamma(n)$ and its geometric realization.
We use $\N_2$ to denote the subcomplex of $\N$ that only contains simplices with dimensions at most $2$ (i.e., $\N_2$ is the $2$-dimensional skeleton of $\N$).
To show that $\tilde{\N}$ is $2$-connected, by \Cref{def:k_connect}, it suffices to show that the injection of $\tilde{\N}_2$ to $\tilde{\N}$ is homotopic to the constant mapping of $\tilde{\N}_2$ to a single point in $\tilde{\N}$ (i.e., the injection of $\tilde{\N}_2$ to $\tilde{\N}$ is a nullhomotopic).

In the following, we will use capital letters $X,Y$ to denote subsets of $[n]$ which are also vertices of both $G$ and $\N$.

\begin{proposition}
\label{prop:form_edge}
For any two vertices $X_1$ and $X_2$, $X_1X_2$ being an edge (i.e., a $1$-dimensional simplex) in $\N$ if and only if $\abs{X_1\setminus X_2}\le 2$ and $\abs{X_2\setminus X_1}\le 2$ simultaneously.
\end{proposition}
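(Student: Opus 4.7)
The plan is to unpack the definition and reduce the claim to a concrete set‑system question, then handle the two directions by a counting argument and an explicit construction, respectively. By the definition of the neighborhood complex, the edge $X_1X_2\in\N$ exactly when there is a common neighbor $Y\subseteq[n]$ of $X_1$ and $X_2$ in $\Gamma(n)$. Using Definition~\ref{def:gamma_n}, $Y$ is adjacent to $X_i$ iff $|Y\cap X_i|\le 1$ and $|Y\cup X_i|\ge n-1$; the second condition is equivalent to $|\bar X_i\setminus Y|\le 1$, where $\bar X_i=[n]\setminus X_i$. So the proposition reduces to: such $Y$ exists iff $|X_1\setminus X_2|\le 2$ and $|X_2\setminus X_1|\le 2$.

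For the forward direction, I would partition each element of $X_1\setminus X_2=X_1\cap\bar X_2$ into "in $Y$" or "not in $Y$". An element of $X_1\setminus X_2$ that lies in $Y$ contributes to $Y\cap X_1$, of which there are at most $1$; an element that lies outside $Y$ contributes to $\bar X_2\setminus Y$, of which there are at most $1$. Hence $|X_1\setminus X_2|\le 2$, and symmetrically $|X_2\setminus X_1|\le 2$.

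For the converse, I would partition $[n]$ into $A=X_1\cap X_2$, $B=X_1\setminus X_2$, $C=X_2\setminus X_1$, and $D=[n]\setminus(X_1\cup X_2)$, with $|B|,|C|\le 2$. I then build $Y$ explicitly: always put $D\subseteq Y$, always keep $A\cap Y=\emptyset$; for $B$ include exactly one element if $|B|=2$ and none otherwise, and do the same for $C$. A direct check then gives $|Y\cap X_1|=|Y\cap B|\le 1$, $|Y\cap X_2|=|Y\cap C|\le 1$, $|\bar X_1\setminus Y|=|C\setminus Y|\le 1$, and $|\bar X_2\setminus Y|=|B\setminus Y|\le 1$, so $Y$ is a common neighbor of $X_1$ and $X_2$ in $\Gamma(n)$, which is what we need.

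The main obstacle is essentially bookkeeping in the converse: one must ensure the same $Y$ simultaneously satisfies all four inequalities, so the construction must delicately balance the "at most one element of $X_i$ inside $Y$" budget against the "at most one element of $\bar X_i$ outside $Y$" budget on each of the four parts $A,B,C,D$. The split above makes this transparent because the two budgets for $X_1$ are localized on $A\cup B$ and $C\cup D$, and the two for $X_2$ on $A\cup C$ and $B\cup D$, so the choices on $B$ and $C$ decouple. A small sanity check at the end will handle degenerate small‑$n$ cases where $|X_1\cup X_2|<n-1$ a priori seems problematic but is in fact covered because $|D|$ just contributes $D\subseteq Y$ for free.
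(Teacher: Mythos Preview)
Your proof is correct and follows essentially the same approach as the paper: both directions argue via the existence of a common neighbor $Y$ in $\Gamma(n)$, with the forward direction a counting argument on how many elements of $X_1\setminus X_2$ can lie inside versus outside $Y$, and the converse an explicit construction of $Y$ starting from $D=[n]\setminus(X_1\cup X_2)$ and adding at most one element from each of $X_1\setminus X_2$ and $X_2\setminus X_1$. The only cosmetic difference is that the paper always adds one element from each nonempty set difference, whereas you add one only when the set difference has size exactly $2$; both choices satisfy the four required inequalities.
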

\begin{proof}
We first prove that $|X_1\setminus X_2|\leq 2$ and $|X_2\setminus X_1|\leq 2$ if $X_1X_2$ is an edge in $\N$.
Otherwise, if $|X_1\setminus X_2|\geq 3$, then the common neighbor $Y$ of $X_1$ and $X_2$ in the $\Gamma(n)$ graph must contain at most one element in $X_1\setminus X_2$ to ensure $|Y\cap X_1|\leq 1$ and must contain at least two elements in $X_1\setminus X_2$ to ensure $|Y\cup X_2|\geq n-1$, which is impossible.

Second, if $\abs{X_1\setminus X_2}\le 2$ and $\abs{X_2\setminus X_1}\le 2$, take one element $g$ (if it exists) in $X_1\setminus X_2$ and one element $h$ (if it exists) in $X_2\setminus X_1$. 
Let $Y = [n]\setminus (X_1\cup X_2)\cup\{g,h\}$, where the union operation for $g$ or $h$ is omitted if the element does not exist.
Then $Y$ is a common neighbor of $X_1$ and $X_2$ as the union of either $Y$ and $X_1$ or $X_2$ is at least $n-1$, which implies $X_1X_2$ is an edge in $\N$.
\end{proof}

\begin{restatable}[Good edge and Bad edge]{definition}{DefiGood}
\label{def:good_edge_bad_edge}
Given an edge $X_1X_2$ in $\N$, we call it a \emph{good edge} if we have both $|X_1\setminus X_2|\leq 1$ and $|X_2\setminus X_1|\leq 1$.
Otherwise, we call $X_1X_2$ a \emph{bad edge}.
Let $X_1X_2X_3$ be a \emph{good triangle} if all its three edges are good.
Let $X_1X_2X_3$ be a \emph{bad triangle} if one of its three edges is bad.
Let $X_1X_2X_3$ be a \emph{terrible triangle} if at least two of its three edges are bad.    
\end{restatable}
By our observation in \Cref{prop:form_edge} that for every edge $X_1X_2$, $|X_1\setminus X_2|\leq 2$ and $|X_2\setminus X_1|\leq 2$, we have $|X_1\setminus X_2|=2$ or $|X_2\setminus X_1|=2$ for a bad edge $X_1X_2$.
We also have the following proposition.

\begin{proposition}\label{prop:tetrahedron}
If $X_1X_4$, $X_2X_4$, and $X_3X_4$ are good edges of $\N$, then $X_1X_2X_3X_4$ is a tetrahedron in $\N$.
\end{proposition}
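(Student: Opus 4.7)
The plan is to exhibit a single concrete common neighbor $Y$ in $\Gamma(n)$ of all four vertices $X_1,X_2,X_3,X_4$; by the definition of the neighborhood complex $\N(\Gamma(n))$, this certifies that $\{X_1,X_2,X_3,X_4\}$ forms a $3$-simplex (tetrahedron) in $\N$. The natural guess, motivated by the adjacency rule in $\Gamma(n)$ (two subsets are adjacent iff they nearly partition $[n]$), is to take the complement of the ``center'' vertex $X_4$, namely
\[
Y \;:=\; [n]\setminus X_4.
\]

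I would then verify, case by case, that $Y$ is adjacent to each $X_i$ in $\Gamma(n)$. For $X_4$ itself, $Y\cap X_4=\emptyset$ and $Y\cup X_4=[n]$, so both adjacency conditions $|Y\cap X_4|\leq 1$ and $|Y\cup X_4|\geq n-1$ hold trivially. For $i\in\{1,2,3\}$, observe the identities $Y\cap X_i = X_i\setminus X_4$ and $[n]\setminus(Y\cup X_i) = X_4\setminus X_i$, so that $|Y\cap X_i|=|X_i\setminus X_4|$ and $|Y\cup X_i|=n-|X_4\setminus X_i|$. Since $X_iX_4$ is a \emph{good} edge, both $|X_i\setminus X_4|\leq 1$ and $|X_4\setminus X_i|\leq 1$ by \Cref{def:good_edge_bad_edge}, which gives $|Y\cap X_i|\leq 1$ and $|Y\cup X_i|\geq n-1$ as required.

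Hence $Y$ is a common neighbor of $X_1,X_2,X_3,X_4$ in $\Gamma(n)$, so $\{X_1,X_2,X_3,X_4\}\in \N$, which is exactly the statement that $X_1X_2X_3X_4$ is a tetrahedron in $\N$. There is no real obstacle here: the entire argument rests on guessing the correct witness $Y=[n]\setminus X_4$, after which the good-edge bounds on the symmetric differences with $X_4$ feed directly into the two adjacency inequalities. The reason only the edges incident to $X_4$ need to be good (while $X_1X_2$, $X_1X_3$, $X_2X_3$ are allowed to be bad) is that every adjacency check reduces to a comparison between some $X_i$ and $X_4$, never between two of $X_1,X_2,X_3$.
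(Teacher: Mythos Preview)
Your proof is correct and essentially identical to the paper's own argument: both exhibit $Y=[n]\setminus X_4$ as a common neighbor in $\Gamma(n)$ and verify the two adjacency inequalities for each $X_i$ using the good-edge bounds $|X_i\setminus X_4|\le 1$ and $|X_4\setminus X_i|\le 1$. Your version is slightly more explicit in writing out the set identities $Y\cap X_i=X_i\setminus X_4$ and $[n]\setminus(Y\cup X_i)=X_4\setminus X_i$, but the strategy and the witness are the same.
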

\begin{proof}
Let $X_5= [n]\setminus X_4$.
We show that $X_5$ is a common neighbor of all four vertices.
For any of $X_1, X_2, X_3$, as it forms a good edge with $X_4$, it includes at most one less element compared to $X_4$.
Hence, the union of $X_5$ and any of them should be at least $n-1$.
Second, observe that the intersection of $X_5$ and $X_4$ is empty.
In addition, since any of $X_1, X_2, X_3$ includes at most one more element than $X_4$, the intersection of $X_5$ and any of them is at most one.
Therefore, $X_5$ is a common neighbor of all four vertices in the $\Gamma$-graph, which means $X_1X_2X_4X_4$ forms a tetrahedron in $\N$.
\end{proof}

Let $\N_2'$ be the 2-dimensional simplicial complex obtained by removing all the terrible triangles of $\N_2$ (only removing the interior part of the terrible triangles with the skeleton (i.e., edges and vertices) kept).
Let $\N_2''$ be the 2-dimensional simplicial complex obtained by removing all the bad and terrible triangles and all the bad edges of $\N_2$.
The remaining part of the proof consists of three phases.
In the first phase, we show that the injection of $\tilde{\N}_2$ to $\tilde{\N}$ is homotopic to the injection of $\tilde{\N}_2'$ to $\tilde{\N}$.
In the second phase, we show that the injection of $\tilde{\N}_2'$ to $\tilde{\N}$ is homotopic to the injection of $\tilde{\N}_2''$ to $\tilde{\N}$.
In the last phase, we show that the injection of $\tilde{\N}_2''$ to $\tilde{\N}$ is homotopic to the injection of the vertex $\emptyset$ (the vertex representing the empty set of $\{1,\ldots,n\}$) to $\tilde{\N}$.
In other words, we will show in three phases that, within the space $\tilde{\N}$, $\tilde{\N}_2$ can be deformed into $\tilde{\N}_2'$, $\tilde{\N}_2'$ can then be deformed into $\tilde{\N}_2''$, and, lastly, $\tilde{\N}_2''$ can be deformed into a single point. %
In the remaining part of this proof, whenever we say a space $A$ deforms to a space $B$ for $A,B\subseteq\tilde{\N}$, we mean the inclusion map from $A$ to $\tilde{\N}$ is homotopic to the inclusion map from $B$ to $\tilde{\N}$.

\subsubsection{Phase I: $\tilde{\N}_2$ deformed into $\tilde{\N}_2'$}
Let $X_1X_2X_3$ be a terrible triangle and $X_1X_2$ and $X_1X_3$ be its two bad edges.
We find that there always exists a vertex $X_4$ such that $X_1X_4$, $X_2X_4$, and $X_3X_4$ are three edges in $\N$ and they are all good.
The full proof is deferred to \Cref{app:sec3}.
\begin{restatable}{proposition}{propTriTetra}
\label{prop:triangle_tertra}
Let $X_1X_2X_3$ be a terrible triangle with $X_1X_2$ and $X_1X_3$ being its two bad edges.
Then there exists a vertex $X_4$ such that $X_1X_4$, $X_2X_4$, and $X_3X_4$ are three edges in $\N$ and they are all good.
\end{restatable}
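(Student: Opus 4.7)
The plan is to construct $X_4$ explicitly as the complement (in $[n]$) of a common $\Gamma(n)$-neighbor of $X_1,X_2,X_3$, and then verify that all three resulting edges of $\N$ are good. This approach short-circuits what would otherwise be a tedious case analysis on how the two bad edges $X_1X_2$ and $X_1X_3$ are oriented.

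First, I would use the triangle hypothesis to extract the common neighbor. Because $X_1X_2X_3$ is a $2$-simplex of the neighborhood complex $\N$, the definition of $\N$ supplies a vertex $Y \in \Gamma(n)$ that is simultaneously $\Gamma(n)$-adjacent to each of $X_1, X_2, X_3$; equivalently, $|Y \cap X_i| \le 1$ and $|Y \cup X_i| \ge n-1$ for every $i \in \{1,2,3\}$. The plan is then to set
\[
    X_4 := [n] \setminus Y.
\]
For each $i$, a direct computation gives $|X_i \setminus X_4| = |X_i \cap Y| \le 1$ and $|X_4 \setminus X_i| = n - |X_i \cup Y| \le 1$. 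By Proposition \ref{prop:form_edge} these symmetric-difference bounds already place $X_iX_4$ in $\N$, and by Definition \ref{def:good_edge_bad_edge} each such edge is \emph{good}.

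The delicate point, and the main obstacle, is ensuring $X_4 \neq X_i$ for each $i$, so that $X_iX_4$ is genuinely an edge rather than a loop. Observe that $X_4 = X_i$ is equivalent to $Y = [n] \setminus X_i$. A short calculation shows that, for any $j \neq i$, the set $[n] \setminus X_i$ is $\Gamma(n)$-adjacent to $X_j$ precisely when $|X_i \setminus X_j| \le 1$ and $|X_j \setminus X_i| \le 1$, i.e., precisely when the edge $X_iX_j$ in $\N$ is good. Consequently, $Y = [n] \setminus X_1$ would force $X_1X_2$ to be good; $Y = [n] \setminus X_2$ would force $X_1X_2$ to be good; and $Y = [n] \setminus X_3$ would force $X_1X_3$ to be good. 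Each of these contradicts the hypothesis that $X_1X_2$ and $X_1X_3$ are bad, so any valid common neighbor $Y$ must satisfy $Y \notin \{[n] \setminus X_1,\, [n] \setminus X_2,\, [n] \setminus X_3\}$, and hence $X_4 = [n] \setminus Y$ is automatically distinct from $X_1, X_2, X_3$. This step is precisely where the ``two bad edges'' hypothesis gets used, while the earlier construction relies only on the triangle being in $\N$.
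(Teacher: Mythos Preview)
Your argument is correct and considerably cleaner than the paper's. The paper constructs $X_4$ by a lengthy case analysis: it carefully selects an element $g$ from $X_2\setminus X_1$ or $X_3\setminus X_1$ and an element $h$ from $X_1\setminus X_2$ or $X_1\setminus X_3$ (with several subcases depending on which of these sets has size $2$ and how they overlap), sets $X_4=(X_1\setminus\{h\})\cup\{g\}$, and then verifies $|X_i\setminus X_4|\le 1$ and $|X_4\setminus X_i|\le 1$ for each $i$ by further casework.

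Your approach instead exploits the triangle hypothesis directly: a common $\Gamma(n)$-neighbor $Y$ exists by the very definition of a simplex in $\N$, and the identity $|X_i\setminus([n]\setminus Y)|=|X_i\cap Y|$, $|([n]\setminus Y)\setminus X_i|=n-|X_i\cup Y|$ converts the $\Gamma(n)$-adjacency conditions into the good-edge conditions for free. This is essentially the converse of the computation in the paper's Proposition~\ref{prop:tetrahedron}, which the authors already had in hand but did not reuse here. The only place the ``two bad edges'' hypothesis enters is in ruling out the three degenerate choices $Y=[n]\setminus X_i$, and your treatment of that is tight. Your route avoids all the explicit element-tracking of the paper's proof at no cost, since the surrounding topological argument is itself non-constructive.
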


Then, by Proposition~\ref{prop:tetrahedron}, $X_1X_2X_3X_4$ is a tetrahedron, and the terrible triangle $X_1X_2X_3$ can be deformed into the union of the remaining three faces $X_1X_2X_4$, $X_1X_3X_4$, and $X_2X_3X_4$ within the tetrahedron $X_1X_2X_3X_4$.
We illustrate the deformation process in Fig.~\ref{fig:deform_n2_n2_prime}.
In addition, to help understand how the bottom triangle face is deformed into the other three faces, we provide a more detailed graphical illustration for the two-dimensional case in Fig.~\ref{fig:deform_triangle_wedge}.
Note that, by our definition, the remaining three faces $X_1X_2X_4$, $X_1X_3X_4$, and $X_2X_3X_4$ are not terrible ($X_1X_2X_4$ and $X_1X_3X_4$ are bad, and $X_2X_3X_4$ can be either good or bad).
Therefore, all the terrible triangles can be removed by deformation within separate tetrahedra.
This completes the description of how $\tilde{\N}_2$ deformed into $\tilde{\N}_2'$ in $\tilde{\N}$.

\begin{figure}[t]
\centering
\begin{tikzpicture}
\pgfmathsetmacro{\shift}{4};

\coordinate (A) at (0,{sqrt(3)});
\coordinate (B) at (-1,0);
\coordinate (C) at (1,0);
\coordinate (Ctrl1) at (-0.5, 0.6); 
\coordinate (Ctrl2) at (0.5, 0.6);

\draw[thick] (A) -- (B);
\draw[thick] (A) -- (C);
\draw[thick] (B) .. controls (Ctrl1) and (Ctrl2) .. (C);
\draw[->, ultra thick,  draw=green!80!black] (0,0.5) -- (0,1);

\node at (2, 1) {\LARGE $\xrightarrow{ \simeq}$};
\node at (-2, 1) {\LARGE $\xrightarrow{\simeq}$};

\pgfmathsetmacro{\shift}{4};
\coordinate (A1) at (0-\shift,{sqrt(3)});
\coordinate (B1) at (-1-\shift,0);
\coordinate (C1) at (1-\shift,0);

\draw[thick] (A1) -- (B1);
\draw[thick] (A1) -- (C1);
\draw[thick] (B1) -- (C1);
\draw[->, ultra thick,  draw=green!80!black] (-\shift,0.1) -- (-\shift,0.6);

\coordinate (A2) at (0+\shift,{sqrt(3)});
\coordinate (B2) at (-1+\shift,0);
\coordinate (C2) at (1+\shift,0);

\draw[very thick] (A2) -- (C2);
\draw[very thick] (A2) -- (B2);
\end{tikzpicture}
\caption{Deformation from a triangle to a wedge}
\label{fig:deform_triangle_wedge}
\end{figure}

\begin{figure}[t]
\centering
\begin{tikzpicture} 
\pgfmathsetmacro{\factor}{1/sqrt(2)};
\pgfmathsetmacro{\shift}{5};
\pgfmathsetmacro{\shifty}{4};

\coordinate [label = right:$X_3$] (A1) at (1-\shift, 0, -1*\factor);
\coordinate [label = left:$X_2$]  (B1) at (-1-\shift, 0, -1*\factor);
\coordinate [label = above:$X_4$] (C1) at (-\shift, 1, 1*\factor) ;
\coordinate [label = below:$X_1$] (D1) at (-\shift, -1, 1*\factor);
\foreach \i in {A1, B1, C1, D1}{
  \draw[dashed] (-\shift, 0) -- (\i);
}

\draw[thick, draw=black!30] (B1) -- (A1);
\path[-, fill = blue!80!black, opacity = .5] (C1) -- (A1) -- (B1) --cycle;
\path[fill = green!50, opacity = .5] (A1) -- (B1) -- (D1) --cycle;
\draw[->, ultra thick, draw=green!80!black] (-\shift + 0.5,-0.2) -- (-\shift + 0.5,0.3);

\draw[thick, draw=blue, dotted] (A1) -- (B1);
\path[-, fill = blue!60, opacity = .5] (B1) -- (D1) -- (C1) --cycle;
\path[-, fill = blue!20, opacity = .5] (C1) -- (D1) -- (A1) --cycle;
\draw[thick, draw=blue] (C1) -- (A1);
\draw[thick, draw=blue] (C1) -- (B1);
\draw[thick, draw=blue] (C1) -- (D1);
\draw[thick, draw=red, decorate, decoration={zigzag}] (B1) -- (D1);
\draw[thick, draw=red, decorate, decoration={zigzag}] (D1) -- (A1);

\node at (-\shift, -2.5) {$\tilde{N}_2$};

\node at (-\shift/2, 0) {\LARGE $\xrightarrow{\simeq}$};

\coordinate [label = right:$X_3$] (A2) at (1, 0, -1*\factor);
\coordinate [label = left:$X_2$]  (B2) at (-1, 0, -1*\factor);
\coordinate [label = above:$X_4$] (C2) at (0, 1, 1*\factor) ;
\coordinate [label = below:$X_1$] (D2) at (0, -1, 1*\factor);
\foreach \i in {A2, B2, C2, D2}{
  \draw[dashed] (0, 0) -- (\i);
}

\draw[thick, dotted] (B2) -- (A2);
\path[pattern=north east lines, pattern color=black, opacity=0.6] (A2) -- (B2) -- (D2) --cycle;
\path[-, fill = blue!80!black, opacity = .5] (C2) -- (A2) -- (B2) --cycle;
\path[-, fill = blue!60, opacity = .5] (B2) -- (D2) -- (C2) --cycle;
\path[-, fill = blue!20, opacity = .5] (C2) -- (D2) -- (A2) --cycle;
\draw[thick, draw=blue] (C2) -- (A2);
\draw[thick, draw=blue] (C2) -- (B2);
\draw[thick, draw=blue] (C2) -- (D2);
\draw[thick, draw=red, decorate, decoration={zigzag}] (B2) -- (D2);
\draw[thick, draw=red, decorate, decoration={zigzag}] (D2) -- (A2);
\node at (0, -2.5) {$\tilde{N}_2'$};
\end{tikzpicture}
\caption{The deformation from $\tilde{\N}_2$ to $\tilde{\N}_2'$, where the blue edges $X_1X_4, X_2X_4$, and $X_3X_4$ represent good edges, the zigzag edges $X_1X_2, X_2X_3$ represent bad edges, and the black edge $X_1X_3$ can be either good or bad.
The interior part of the terrible triangle $X_1X_2X_3$ is removed in the right 2-dimensional complex.
}
\label{fig:deform_n2_n2_prime}
\end{figure}

\subsubsection{Phase II: $\tilde{\N}_2'$ deformed into $\tilde{\N}_2''$}
We next consider a deformation from $\tilde{\N}_2'$ (where all terrible triangles are removed) to $\tilde{\N}_2''$ by discussing the three types of bad edges $X_1X_2$:
\begin{itemize}
    \item Type 1: $|X_1\setminus X_2|=|X_2\setminus X_1|=2$.
    \item Type 2: $|X_1\setminus X_2|=1$ and $|X_2\setminus X_1|=2$.
    \item Type 3: $|X_1\setminus X_2|=0$ and $|X_2\setminus X_1|=2$.
\end{itemize}

First, consider type 1 edges.
Given a type 1 edge $X_1X_2$, let $\{a,b\}=X_1\setminus X_2$ and $\{c,d\}=X_2\setminus X_1$.
Let $X=X_1\cap X_2$.
We can observe that there are exactly four bad triangles incident on $X_1X_2$.
In particular, any bad triangle $X_1X_2Y$ in $\tilde{N}_2'$ cannot be terrible by the definition of $\tilde{N}_2'$.
Therefore, both $X_1Y$ and $X_2Y$ are good edges, which implies that $\abs{X_1\setminus Y}, \abs{X_2\setminus Y}, \abs{Y\setminus X_1}, \abs{Y\setminus X_2} \le 1$.
Thus, $Y$ must contain exactly one of $a,b$, exactly one of $c,d$, all of $X$, and none of $[n]\setminus(X\cup\{a,b,c,d\})$.
Then the four bad triangles incident to $X_1X_2$ are as follows (also as shown in Fig.~\ref{fig:type1_deform}):
\begin{alignat*}{3}
& X_1 X_2 X_3 \quad \text{with $X_3 = X \cup \{a, c\}$} \quad 
&& X_1 X_2 X_4 \quad \text{with $X_4 = X \cup \{a, d\}$} \\
& X_1X_2X_5 \quad \text{with $X_5 = X \cup \{b,c\}$}
&& X_1X_2X_6 \quad \text{with $X_6 = X \cup \{b,d\}$}
\end{alignat*}
It is also easy to check that $X_1X_2X_3X_4$, $X_1X_2X_3X_5$, $X_1X_2X_4X_6$, and $X_1X_2X_5X_6$ are tetrahedra of $\N$.
In addition, all the edges except for $X_1X_2$ in each of the tetrahedra are good.

As illustrated in Fig.~\ref{fig:type1_deform}, we first construct a deformation of the interior of $X_1X_2X_3$ into the other three faces of the tetrahedron $X_1X_2X_3X_4$, just as we did in the first phase, which removes the interior part of $X_1X_2X_3$.
Then, we remove the triangle $X_1X_2X_4$ by considering the same kind of deformation in the tetrahedron $X_1X_2X_4X_6$.
Next, we remove the triangle $X_1X_2X_6$ by considering the deformation in the tetrahedron $X_1X_2X_5X_6$.
Now, only one bad triangle incident to $X_1X_2$ remains, namely, $X_1X_2X_5$.
Finally, we consider a deformation of $X_1X_2$ into the other two edges $X_1X_5$ and $X_2X_5$ within the triangle $X_1X_2X_5$.
This removes the bad edge $X_1X_2$ and all the bad triangles incident to it.
Notice that $X_3X_6$ and $X_4X_5$ are bad edges.
Although $X_1X_2X_3X_6$ and $X_1X_2X_4X_5$ are two tetrahedra of $\N$ (e.g., $([n]\setminus X)\cup\{a,d\}$ is a common neighbor of $X_1,X_2,X_3$, and $X_6$), we have avoided using them in the operations above.
The reason we need to avoid this is the following:
the bad edge $X_3X_6$ may have been removed prior to $X_1X_2$;
If we deform the interior of $X_1X_2X_3$ to the other three faces of $X_1X_2X_3X_6$, the other three faces will reappear, which makes $X_3X_6$ reappear.

\begin{figure}[h]
\centering
\begin{tikzpicture}
\coordinate (C1) at (0, 0);
\coordinate (C2) at (0, -2);
\coordinate (C3) at (1, 0.5);
\coordinate (C4) at (2, -0.5);
\coordinate (C5) at (2, -1.5);
\coordinate (C6) at (1, -2.5);

\node[circle, fill=black,inner sep=1pt,label={[left]{$X_1$}}] (X1) at (C1) {};
\node[circle, fill=black,inner sep=1pt,label={[left]{$X_2$}}] (X2) at (C2) {};
\node[circle, fill=black,inner sep=1pt,label={[right]{$X_3$}}] (X3) at (C3) {};

\node[circle, fill=black,inner sep=1pt,label={[right]{$X_4$}}] (X4) at (C4) {};
\node[circle, fill=black,inner sep=1pt,label={[right]{$X_5$}}] (X5) at (C5) {};
\node[circle, fill=black,inner sep=1pt,label={[right]{$X_6$}}] (X6) at (C6) {};

\draw[draw=red, decorate, decoration={zigzag}] (X4) -- (X5);
\draw[draw=red, decorate, decoration={zigzag}] (X3) -- (X6);

\draw[thick, draw=blue, dotted] (X2) -- (X3);

\path[fill=green!80!black, opacity=0.4] (C1) -- (C2) -- (C3) -- cycle;
\draw[->, ultra thick,  draw=green!60!black] (0.3,-0.3) -- (0.8,-0.4);

\path[fill=blue!60, opacity=0.4] (C2) -- (C3) -- (C4) -- cycle;
\path[fill=blue!80, opacity=0.4] (C1) -- (C3) -- (C4) -- cycle;
\path[fill=blue!20, opacity=0.4] (C1) -- (C2) -- (C4) -- cycle;

\draw[draw=blue] (X4) -- (X6);
\draw[draw=blue] (X1) -- (X5);
\draw[draw=blue] (X2) -- (X5);
\draw[draw=blue] (X3) -- (X5);
\draw[draw=blue] (X1) -- (X6);
\draw[draw=blue] (X2) -- (X6);
\draw[thick, draw=blue] (X3) -- (X4);
\draw[thick, draw=blue] (X1) -- (X3);
\draw[thick, draw=blue] (X1) -- (X4);
\draw[thick, draw=blue] (X2) -- (X4);
\draw[draw=blue] (X5) -- (X6);
\draw[thick, draw=red, decorate, decoration={zigzag}] (X1) -- (X2);

\node[] at (3.5, -0.75) {\LARGE $\xrightarrow{\simeq}$};

\pgfmathsetmacro{\shift}{5};
\coordinate (D1) at (\shift, 0);
\coordinate (D2) at (\shift, -2);
\coordinate (D3) at (1+\shift, 0.5);
\coordinate (D4) at (2+\shift, -0.5);
\coordinate (D5) at (2+\shift, -1.5);
\coordinate (D6) at (1+\shift, -2.5);

\node[circle, fill=black,inner sep=1pt,label={[left]{$X_1$}}] (Y1) at (D1) {};
\node[circle, fill=black,inner sep=1pt,label={[left]{$X_2$}}] (Y2) at (D2) {};
\node[circle, fill=black,inner sep=1pt,label={[right]{$X_3$}}] (Y3) at (D3) {};
\node[circle, fill=black,inner sep=1pt,label={[right]{$X_4$}}] (Y4) at (D4) {};
\node[circle, fill=black,inner sep=1pt,label={[right]{$X_5$}}] (Y5) at (D5) {};
\node[circle, fill=black,inner sep=1pt,label={[right]{$X_6$}}] (Y6) at (D6) {};

\draw[draw=red, decorate, decoration={zigzag}] (Y3) -- (Y6);
\draw[draw=blue] (Y2) -- (Y3);
\path[pattern=north east lines, pattern color=black, opacity=0.6] (D1) -- (D2) -- (D3) --cycle;

\draw[draw=blue] (Y3) -- (Y5);
\draw[draw=blue] (Y1) -- (Y3);
\draw[thick, draw=blue, dotted] (Y2) -- (Y4);
\draw[draw=blue] (Y1) -- (Y5);
\draw[draw=blue] (Y2) -- (Y5);
\draw[draw=blue] (Y3) -- (Y4);
\draw[draw=red, decorate, decoration={zigzag}] (Y4) -- (Y5);
\draw[draw=blue] (Y5) -- (Y6);

\path[fill=green!80!black, opacity=0.4] (D1) -- (D4) -- (D2) -- cycle;
\draw[->, ultra thick,  draw=green!60!black] (0.6+\shift,-1) -- (0.9+\shift,-1.4);
\path[fill=blue!60, opacity=0.4] (D2) -- (D4) -- (D6) -- cycle;
\path[fill=blue!20, opacity=0.4] (D1) -- (D4) -- (D6) -- cycle;
\path[fill=blue!80, opacity=0.4] (D1) -- (D2) -- (D6) -- cycle;

\draw[thick, draw=blue] (Y1) -- (Y6);
\draw[thick, draw=blue] (Y4) -- (Y6);
\draw[thick, draw=blue] (Y2) -- (Y6);
\draw[thick, draw=blue] (Y1) -- (Y4);
\draw[thick, draw=red, decorate, decoration={zigzag}] (Y1) -- (Y2);

\node[] at (3.5+\shift, -0.75) {\LARGE $\xrightarrow{\simeq}$};

\pgfmathsetmacro{\shiftp}{2*\shift};
\coordinate (E1) at (0+\shiftp, 0);
\coordinate (E2) at (0+\shiftp, -2);
\coordinate (E3) at (1+\shiftp, 0.5);
\coordinate (E4) at (2+\shiftp, -0.5);
\coordinate (E5) at (2+\shiftp, -1.5);
\coordinate (E6) at (1+\shiftp, -2.5);

\node[circle, fill=black,inner sep=1pt,label={[left]{$X_1$}}] (W1) at (E1) {};
\node[circle, fill=black,inner sep=1pt,label={[left]{$X_2$}}] (W2) at (E2) {};
\node[circle, fill=black,inner sep=1pt,label={[right]{$X_3$}}] (W3) at (E3) {};
\node[circle, fill=black,inner sep=1pt,label={[right]{$X_4$}}] (W4) at (E4) {};
\node[circle, fill=black,inner sep=1pt,label={[right]{$X_5$}}] (W5) at (E5) {};
\node[circle, fill=black,inner sep=1pt,label={[right]{$X_6$}}] (W6) at (E6) {};

\draw[draw=blue] (W3) -- (W5);
\draw[draw=blue] (W4) -- (W6);
\draw[draw=blue] (W2) -- (W3);
\draw[draw=red, decorate, decoration={zigzag}] (W3) -- (W6);
\draw[draw=red, decorate, decoration={zigzag}] (W4) -- (W5);

\path[pattern=north east lines, pattern color=black, opacity=0.6] (E1) -- (E2) -- (E3) --cycle;
\path[pattern=north east lines, pattern color=black, opacity=0.6] (E1) -- (E2) -- (E4) --cycle;

\draw[draw=blue] (W1) -- (W3);
\draw[draw=blue] (W1) -- (W4);
\draw[draw=blue] (W2) -- (W4);
\draw[thick, draw=blue, dotted] (W2) -- (W5);
\draw[draw=blue] (W3) -- (W4);

\path[fill=blue!80!black, opacity=0.4] (E1) -- (E2) -- (E5) -- cycle;
\path[fill=blue!60, opacity=0.4] (E2) -- (E5) -- (E6) -- cycle;
\draw[->, ultra thick,  draw=green!60!black] (0.4+\shiftp,-2) -- (0.9+\shiftp,-1.8);
\path[fill=green!80, opacity=0.4] (E1) -- (E2) -- (E6) -- cycle;
\path[fill=blue!5, opacity=0.4] (E1) -- (E5) -- (E6) -- cycle;

\draw[thick,draw=blue] (W1) -- (W6);
\draw[thick,draw=blue] (W2) -- (W6);
\draw[thick,draw=blue] (W5) -- (W6);
\draw[thick, draw=blue] (W1) -- (W5);
\draw[thick, draw=red, decorate, decoration={zigzag}] (W1) -- (W2);

\pgfmathsetmacro{\shift}{2};
\pgfmathsetmacro{\shiftx}{4};
\pgfmathsetmacro{\shifty}{4};

\node[] at (\shift+\shiftx, -0.75-\shifty) {\LARGE $\xrightarrow{\simeq}$};

\coordinate (H1) at (0+\shift, -\shifty);
\coordinate (H2) at (0+\shift, -2-\shifty);
\coordinate (H3) at (1+\shift, 0.5-\shifty);
\coordinate (H4) at (2+\shift, -0.5-\shifty);
\coordinate (H5) at (2+\shift, -1.5-\shifty);
\coordinate (H6) at (1+\shift, -2.5-\shifty);

\draw[->, ultra thick,  draw=green!60!black] (0.1+\shift,-1-\shifty) -- (0.6+\shift,-1-\shifty);

\node[circle, fill=black,inner sep=1pt,label={[left]{$X_1$}}] (K1) at (H1) {};
\node[circle, fill=black,inner sep=1pt,label={[left]{$X_2$}}] (K2) at (H2) {};
\node[circle, fill=black,inner sep=1pt,label={[right]{$X_3$}}] (K3) at (H3) {};
\node[circle, fill=black,inner sep=1pt,label={[right]{$X_4$}}] (K4) at (H4) {};
\node[circle, fill=black,inner sep=1pt,label={[right]{$X_5$}}] (K5) at (H5) {};
\node[circle, fill=black,inner sep=1pt,label={[right]{$X_6$}}] (K6) at (H6) {};

\draw[draw=blue] (K3) -- (K5);
\draw[draw=blue] (K4) -- (K6);
\draw[draw=blue] (K2) -- (K3);
\draw[draw=red, decorate, decoration={zigzag}] (K3) -- (K6);
\draw[draw=red, decorate, decoration={zigzag}] (K4) -- (K5);

\path[pattern=north east lines, pattern color=black, opacity=0.6] (H1) -- (H2) -- (H3) --cycle;
\path[pattern=north east lines, pattern color=black, opacity=0.6] (H1) -- (H2) -- (H4) --cycle;
\path[pattern=north east lines, pattern color=black, opacity=0.6] (H1) -- (H2) -- (H6) --cycle;
\path[fill=green!20, opacity=0.4] (H1) -- (H2) -- (H5) -- cycle;

\draw[draw=blue] (K1) -- (K3);
\draw[draw=blue] (K1) -- (K4);
\draw[draw=blue] (K2) -- (K4);
\draw[thick, draw=blue] (K1) -- (K5);
\draw[thick, draw=blue] (K2) -- (K5);
\draw[draw=blue] (K1) -- (K6);
\draw[draw=blue] (K2) -- (K6);
\draw[draw=blue] (K3) -- (K4);
\draw[draw=blue] (K5) -- (K6);
\draw[thick, draw=red, decorate, decoration={zigzag}] (K1) -- (K2);

\pgfmathsetmacro{\shift}{4};
\coordinate (F1) at (\shift*2, 0-\shifty);
\coordinate (F2) at (\shift*2, -2-\shifty);
\coordinate (F3) at (1+\shift*2, 0.5-\shifty);
\coordinate (F4) at (2+\shift*2, -0.5-\shifty);
\coordinate (F5) at (2+\shift*2, -1.5-\shifty);
\coordinate (F6) at (1+\shift*2, -2.5-\shifty);

\node[circle, fill=black,inner sep=1pt,label={[left]{$X_1$}}] (W1) at (F1) {};
\node[circle, fill=black,inner sep=1pt,label={[left]{$X_2$}}] (W2) at (F2) {};
\node[circle, fill=black,inner sep=1pt,label={[right]{$X_3$}}] (W3) at (F3) {};
\node[circle, fill=black,inner sep=1pt,label={[right]{$X_4$}}] (W4) at (F4) {};
\node[circle, fill=black,inner sep=1pt,label={[right]{$X_5$}}] (W5) at (F5) {};
\node[circle, fill=black,inner sep=1pt,label={[right]{$X_6$}}] (W6) at (F6) {};

\draw[draw=blue] (W2) -- (W3);
\draw[draw=blue] (W3) -- (W5);
\draw[draw=blue] (W4) -- (W6);

\path[pattern=north east lines, pattern color=black, opacity=0.6] (F1) -- (F2) -- (F3) --cycle;
\path[pattern=north east lines, pattern color=black, opacity=0.6] (F1) -- (F2) -- (F4) --cycle;
\path[pattern=north east lines, pattern color=black, opacity=0.6] (F1) -- (F2) -- (F6) --cycle;
\path[pattern=north east lines, pattern color=black, opacity=0.6] (F1) -- (F2) -- (F5) --cycle;

\draw[draw=red, decorate, decoration={zigzag}] (W3) -- (W6);
\draw[draw=blue] (W1) -- (W3);
\draw[draw=blue] (W1) -- (W4);
\draw[draw=blue] (W2) -- (W4);
\draw[draw=blue] (W1) -- (W5);
\draw[draw=blue] (W2) -- (W5);
\draw[draw=blue] (W1) -- (W6);
\draw[draw=blue] (W2) -- (W6);
\draw[draw=blue] (W3) -- (W4);
\draw[draw=red, decorate, decoration={zigzag}] (W4) -- (W5);
\draw[draw=blue] (W5) -- (W6);

\end{tikzpicture}
\caption{Deformation from $\tilde{\N}_2'$ to $\tilde{\N}_2''$ for type one bad edge $X_1X_2$, where gray hatched areas represent the removed faces, red zigzag lines represent bad edges, and both dotted and solid lines represent good edges
}
\label{fig:type1_deform}
\end{figure}

Next, we consider type 2 edges.
Let $X_1X_2$ be a type 2 edge. Let $\{a\}=X_1\setminus X_2$ and $\{b,c\}=X_2\setminus X_1$.
Let $X=X_1\cap X_2$.
It is easy to check that there are exactly four bad triangles incident to $X_1X_2$:
\begin{alignat*}{3}
& X_1 X_2 X_3 \quad \text{with $X_3 = X \cup \{a, b\}$} \quad 
&& X_1 X_2 X_4 \quad \text{with $X_4 = X \cup \{a, c\}$} \\
& X_1X_2X_5 \quad \text{with $X_5 = X \cup \{b\}$}
&& X_1X_2X_6 \quad \text{with $X_6 = X \cup \{c\}$}
\end{alignat*}
It is also easy to check that $X_1X_2X_3X_4$, $X_1X_2X_3X_5$, $X_1X_2X_4X_6$, and $X_1X_2X_5X_6$ are tetrahedra of $\N$.
In addition, all the edges except for $X_1X_2$ in each of the tetrahedra are good.
The remaining procedure is the same as it is in handling Type 1 edges.
Eventually, we can remove $X_1X_2$ and all the bad triangles incident to it.

Finally, consider a type 3 edge $X_1X_2$ and let $\{a,b\}=X_2\setminus X_1$.
Similarly, there are only two bad triangles incident to $X_1X_2$: 
\begin{align*}
X_1X_2X_3\quad \text{with $X_3=X_1\cup\{a\}$}\qquad X_1X_2X_4 \quad \text{with $X_4=X_1\cup\{b\}$}
\end{align*}
It can be verified that $X_1X_2X_3X_4$ is a tetrahedron and $X_1X_2$ is the only bad edge in it.
We can first remove the interior of $X_1X_2X_3$ by letting it deform into the other three faces of $X_1X_2X_3X_4$, and then remove the edge $X_1X_2$ and the remaining bad triangle $X_1X_2X_4$ by considering the deformation of $X_1X_2$ into the union of $X_1X_4$ and $X_2X_4$.

For each bad edge $X_1X_2$ (of each type), the deformation described above only happens in tetrahedra where $X_1X_2$ is the unique bad edge.
Therefore, all these deformations happen in separate regions.
We have described how $\tilde{\N}_2'$ is deformed into $\tilde{\N}_2''$ in $\tilde{\N}$.

\subsubsection{Phase III: $\tilde{\N}_2''$ deformed into the point $\emptyset$}
We perform the final deformation by $n$ steps.
We first consider removing all vertices $X$ with $n\in X$, as well as the edges and triangles incident to them.
Consider the mapping $\phi_n$ that maps a vertex to another defined by the following:
$$\phi_n(X)=X\setminus\{n\}.$$
We show that $\phi_n$ can be extended to a deformation in $\tilde{\N}$ from $\tilde{\N}_2''$ to a subcomplex where all the aforementioned simplices related to $n$ are removed.

\begin{itemize}[leftmargin=*]
    \item (1-dimensional case): For each edge $X_1X_2$ in $\N_2''$, we will show that 1)  $X_1,X_2,\phi_n(X_1),\phi_n(X_2)$ span a simplex (i.e., a tetrahedron) in $\N$, so that $X_1X_2$ can be continuously ``moved'' to $\phi_n(X_1)\phi_n(X_2)$ within the simplex, and 2) $\phi_n(X_1)\phi_n(X_2)$ is a good edge in $\N_2''$.
    \item (2-dimensional case): We also need to show that the same holds for 2-dimensional simplices.
    For each triangle $X_1X_2X_3$ in $\N_2''$, we will show that 1) $X_1,X_2,X_3,\phi_n(X_1),\phi_n(X_2),\phi_n(X_3)$ span a simplex in $\N$, so that $X_1X_2X_3$ can be continuously ``moved'' to $\phi_n(X_1),\phi_n(X_2),\phi_n(X_3)$ within the simplex, and 2) $\phi_n(X_1)\phi_n(X_2)\phi_n(X_3)$ is a good triangle in $\N_2''$. 
\end{itemize}

Note also that all edges and triangles in $\N_2''$ are good.
It is straightforward to check that point 2) for both 1-dimensional case and 2-dimensional case always holds (this is because $|(X_1\setminus \{g\})\setminus (X_2\setminus \{g\})|\leq |X_1\setminus X_2|$ holds for any two sets $X_1,X_2$ and any element $g$, and all the good edges of $\N_2$ remains in $\N_2''$ as we have not removed any good edge in the first two phases).
It remains to check point 1).

We first check for the 1-dimensional case.
If $X_1X_2$ is a good edge, then $|X_1\setminus X_2|\leq 1$ and $|X_2\setminus X_1|\leq 1$, and $Y=[n]\setminus(X_1\cap X_2)$ is a common neighbor for $X_1$ and $X_2$, with $|Y\cup X_1|=|Y\cup X_2|=n$.
Since $\phi_n(X_1)\subseteq X_1$ and $|X_1|-1\leq |\phi_n(X_1)|\leq |X_1|$, we have $|\phi_n(X_1)\cap Y|\leq |X_1\cap Y|\leq 1$ and $|\phi_n(X_1)\cup Y|\geq n-1$, so $\phi_n(X_1)$ is a neighbor of $Y$.
Similarly, $\phi_n(X_2)$ is also a neighbor of $Y$.
Therefore, $X_1,X_2,\phi_n(X_1),\phi_n(X_2)$ span a simplex in $\N$.

We next check for the 2-dimensional case.
Let $X_1X_2X_3$ be a good triangle.
Let $X=X_1\cap X_2\cap X_3$.
First of all, we must have $|X_i\setminus X|\leq 2$ for each $i\in\{1,2,3\}$.
Suppose otherwise $|X_1\setminus X|\geq 3$. Since $X_2X_1$ and $X_3X_1$ are good edges, each of $X_2$ and $X_3$ must contain at least $|X_1\setminus X|-1$ elements in $X_1\setminus X$, and one of the elements in $X_1\setminus X$ must be contained in all of $X_1,X_2,X_3$, which contradicts to our definition of $X$.
We then discuss two cases.
\begin{itemize}[leftmargin=*]
    \item Case 1: $|X_i\setminus X|\leq 1$ for each $i\in\{1,2,3\}$. In this case, $Y=[n]\setminus X$ is a common neighbor for all three vertices since $|Y\cup X_i|=n$ and $\abs{Y\cap X_i}\le 1$ for each $i$. 
    For similar reasons as in the 1-dimensional case, we can show that $Y$ is a neighbor to $\phi_n(X_i)$ for each $i$. Therefore, $X_1,X_2,X_3,\phi_n(X_1),\phi_n(X_2),\phi_n(X_3)$ span a simplex in $\N$.
    \item Case 2: $|X_1\setminus X|=2$. Let $\{a,b\}=X_1\setminus X$. Since $X_2X_1$ and $X_3X_1$ are good edges, it must be that $X_2$ contains exactly one of $a,b$, and $X_3$ contains exactly the other one. (If $X_2$ contains none of $a,b$, then $X_2X_1$ is a bad edge; if $X_2$ contains both of $a,b$, then $X_3$ must contain at least one of them, which implies one of $a,b$ is in all the three sets, which contradicts to our definition of $X$.)
    Assume that $a\in X_2$ and $b\in X_3$.
    Notice that, for now, we already have $|X_2\setminus X_3|=|X_3\setminus X_2|=|X_1\setminus X_2|=|X_1\setminus X_3|=1$.
    There are only two possibilities:
    \begin{enumerate}[leftmargin=*]
        \item $X_1=X\cup\{a,b\}, X_2=X\cup\{a\}, X_3=X\cup\{b\}$;
        \item there exists a third element $c$ such that $X_1=X\cup\{a,b\}, X_2=X\cup\{a,c\}, X_3=X\cup\{b,c\}$.
    \end{enumerate}
    For the first possible case, consider $Y=[n]\setminus(X\cup\{a,b\})\cup\{n\}$.
    For each of $X_i$, we have $Y\cap X_i \le 1$ (the intersection at most contains element $n$) and $\abs{Y\cup X_i} \ge n-1$.
    Similarly, for each of $\phi(X_i)$, we  have $Y\cap X_i = \emptyset$ and $\abs{Y\cup X_i} \ge n-1$.
    Therefore, we can see that $Y$ is a common neighbor for $X_1,X_2,X_3,\phi_n(X_1),\phi_n(X_2),\phi_n(X_3)$; for the second possibility, similarly, $Y=[n]\setminus(X\cup\{a,b,c\})\cup\{n\}$ is a common neighbor for $X_1,X_2,X_3,\phi_n(X_1),\phi_n(X_2),\phi_n(X_3)$.
    Notice that element $n$ may be one of $a,b,c$, but it does not invalidate the above argument.
\end{itemize}

We have shown that there exists a mapping $\phi_n$ demonstrating the deformation of $\tilde{\N}_2''$ into a subcomplex (of $\tilde{\N}_2''$) with only $2^{n-1}$ vertices corresponding to subsets of $[n-1]$.
By similarly defining $\phi_{n-1},\phi_{n-2},\ldots,\phi_1$ and iteratively applying them, $\tilde{\N}_2''$ can be deformed into $\emptyset$ in the space $\tilde{\N}$.

\subsection{Correction of WINE'23 Conference Version}
In the WINE'23 conference version~\citep{bu2023fair}, we claim the same result as in \Cref{thm:double_ef1} using the chromatic number of generalized Kneser graphs (see Definition~\ref{def:generalize_kneser_graph}).
However, our proof in the WINE'23 conference version uses an unproven conjecture. We did not realize it was a conjecture, and we mistakenly thought it had been proved before.
We elaborate on these below.

It is easy to see that the generalized Kneser graph $\K(2k,k,1)$ is a subgraph of $\Gamma(2k)$.
In the language of fair divisions, vertices of $\Gamma(2k)$ are all possible bundles of $2k$ items, whereas vertices of $\K(2k,k,1)$ are bundles with exactly $k$ items.
As a result, vertices of $\K(2k,k,1)$ are all \emph{balanced} allocations $(A,B)$ with $|A|=|B|=k$.
Notice that the definition of edges in both $\K(2k,k,1)$ and $\Gamma(2k)$ is the same.
Therefore, for each of $v_1,v_2,u_1$, and $u_2$, the set of balanced allocations that fail the EF-$1$ criterion forms an independent set of $\K(2k,k,1)$.
By the same analysis before, we can show the existence of a balanced doubly EF-$1$ allocation if the chromatic number of $\K(2k,k,1)$ is at least $5$.
However, the best lower bound known for the chromatic number of the generalized Kneser graph is given in Lemma~\ref{lem:chromatic_for_kneser}. This bound only implies $\chi(2k,k,1)\geq 4$, which is not sufficient for our application. 

\citet{jafari2017chromatic} conjectured the following.

\begin{conjecture}[\citet{jafari2017chromatic}]\label{conj:kneser_conjecture}
    For every $k\ge 3$, $\chi(2k,k,1)=6$.
\end{conjecture}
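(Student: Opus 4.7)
The plan is to prove $\chi(2k,k,1)=6$ for every $k\geq 3$ by establishing matching upper and lower bounds. The upper bound $\chi(2k,k,1)\leq 6$ is constructive: exhibit a partition of $\binom{[2k]}{k}$ into six color classes, each of which is a $2$-intersecting family (any two members share at least two common elements). The lower bound $\chi(2k,k,1)\geq 6$ needs to go well beyond \Cref{lem:chromatic_for_kneser}, which only gives $4$; I would follow the topological strategy developed in \Cref{sect:proof_chromatic_gamma_n}.

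For the base case $k=3$, I would use the following six-coloring: partition $[6]=L\cup R$ with $|L|=|R|=3$, and let $\mathcal{P}=\binom{L}{2}\cup\binom{R}{2}$ consist of six pairs. The complement $\binom{[6]}{2}\setminus\mathcal{P}$ is the bipartite graph $K_{3,3}$, which is triangle-free, so every $3$-subset of $[6]$ contains at least one pair from $\mathcal{P}$. Color each $3$-subset by any such pair it contains; each color class is a ``star at a pair'' and therefore pairwise $2$-intersecting. For general $k\geq 4$ this pair-covering template no longer yields only six classes (since the extremal triangle-free graph on $2k$ vertices has more than $\binom{2k}{2}-6$ edges once $k\geq 4$, by Tur\'an's theorem), so I would instead aim for a mixed construction using both stars $\{A:T\subseteq A\}$ for a fixed pair $T$ and larger Hilton--Milner-type $2$-intersecting families such as $\{A:|A\cap F|\geq 3\}$ for a fixed $4$-set $F$; the hope is that six such families, chosen with enough $S_{2k}$-symmetry, jointly cover $\binom{[2k]}{k}$.

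For the lower bound I would invoke \Cref{thm:neighbor_chrom}: it suffices to prove that the geometric realization $\tilde{\N}(\K(2k,k,1))$ is $3$-connected. Following the phase structure of \Cref{sect:proof_chromatic_gamma_n}, I would (i) classify edges of $\N(\K(2k,k,1))$ as good or bad according to the size of the symmetric difference between the two $k$-subsets, (ii) show that every terrible simplex can be deformed into its ``good'' faces inside a tetrahedron spanned by a suitable auxiliary vertex that is a common neighbor, and (iii) collapse the remaining good sub-complex to a single vertex by iterated ``remove-element'' maps $\phi_j:X\mapsto X\setminus\{j\}$, verifying at each stage that the vertices and their $\phi_j$-images still span a simplex of $\N(\K(2k,k,1))$ --- which reduces to exhibiting a common neighbor in $\K(2k,k,1)$.

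The main obstacle, in my view, is the upper bound for $k\geq 4$. No uniform six-coloring is currently known in the literature, which is precisely why the conjecture remains open. A natural route is to first settle $k=4$ and $k=5$ by computer search, and then attempt an induction on $k$ via a reduction that ``contracts'' a pair of elements so as to pass from $\K(2k,k,1)$ to $\K(2k-2,k-1,1)$; however, controlling how pairwise intersections of $k$-subsets evolve under such a contraction is delicate, and this is where I would expect the proof to require genuinely new combinatorial input. The topological lower bound is subtle but manageable, as the phase-by-phase template from \Cref{sect:proof_chromatic_gamma_n} provides a concrete road map --- the main adaptation needed is raising the connectivity from $2$ to $3$, which should follow from the same deformation scheme applied one dimension higher.
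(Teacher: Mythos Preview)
The paper does not prove this statement: it is presented as an open conjecture of \citet{jafari2017chromatic}. Indeed, Section~3.2 explains that the WINE'23 conference version mistakenly treated Conjecture~\ref{conj:kneser_conjecture} as a theorem, and the present version corrects this by abandoning the generalized Kneser graph $\K(2k,k,1)$ altogether and proving Theorem~\ref{thm:double_ef1} instead via the $\Gamma$-graph and Theorem~\ref{thm:chromatic_gamma_n}. So there is no ``paper's own proof'' to compare against.

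Your proposal is not a proof either, and you say as much: you concede that for $k\geq 4$ ``no uniform six-coloring is currently known in the literature, which is precisely why the conjecture remains open,'' and you offer only a hoped-for inductive contraction with no verification. That is the genuine gap on the upper-bound side. On the lower-bound side there is a concrete technical obstruction to your plan: the Phase~III collapse in \Cref{sect:proof_chromatic_gamma_n} relies on the maps $\phi_j:X\mapsto X\setminus\{j\}$, which send a vertex of $\Gamma(n)$ to another vertex of $\Gamma(n)$ because $\Gamma(n)$ has \emph{all} subsets of $[n]$ as vertices. In $\K(2k,k,1)$ every vertex is a $k$-subset, so $\phi_j(X)$ is a $(k{-}1)$-subset and is not a vertex at all; the deformation scheme you cite does not transfer, and the claim that ``raising the connectivity from $2$ to $3$ \ldots\ should follow from the same deformation scheme applied one dimension higher'' has no foundation. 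Any topological lower bound of $6$ for $\K(2k,k,1)$ would require a genuinely new collapsing argument tailored to the uniform-$k$ setting.
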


If the conjecture holds, Theorem~\ref{thm:double_ef1} follows.
In our WINE'23 version, we mistakenly present the above conjecture as a proved theorem.

\section{Double Fairness with Additive Utilities}
In this section, we present the results of double fairness. 
Utility functions are assumed to be additive throughout this section.

\subsection{Identical Allocator's Utility Function}\label{sect:identical_double_ef1}
This section considers the case when the allocator's utilities $u_1,\ldots,u_n$ are identical.
Let $u=u_1=\cdots=u_n$.

We first give a brief introduction on the techniques used in this section. 
The envy-cycle procedure was first proposed by~\citet{Lipton04onapproximately} to compute an EF-$1$ allocation for general valuations.
In the envy-cycle procedure, an \emph{envy-graph} is constructed for a partial allocation.
Each vertex in the envy-graph represents an agent and each directed edge $(u, v)$ means that agent $u$ envies agent $v$ in the current allocation.
An item is received by a source agent (an agent without incoming-edge) in each round.
When there is a cycle in an envy-graph, we use the cycle-elimination algorithm to eliminate this cycle while maintaining the EF-$1$ property (as the allocation after cycle-elimination is a permutation of the previous allocation, and each agent's utility strictly increases).
\begin{definition}[Cycle-Elimination Algorithm]
Given an envy-graph with a cycle $u_1\rightarrow \ldots \rightarrow u_n \rightarrow u_1$, shift the agents' bundles along the cycle ($A_{u_i} \leftarrow A_{u_{i+1}}$ for $i=1, \ldots, n-1$ and $A_n \leftarrow A_1$). 
\end{definition}

\begin{algorithm}[t]
\caption{Finding doubly EF-$1$ allocation for identical utility functions}\label{alg:identical_2ef1}
Let $G=(V, E)$ be the envy-graph where each vertex represents an agent and $E\leftarrow \emptyset$\;
Initialize $\mathcal{A} = \left(\emptyset, \ldots, \emptyset \right)$\;
\If{$m$ is not divided by $n$}{
Add dummy items to $M$ such that $n\mid m$ and set the utility of each dummy item as $0$\;
}
Let $M_s$ be the sorted array of the items according to the allocator's utility function $u$ in descending order\;

\For{every $n$ items $M_n \subseteq M_s$}{
    Let $\{i_1, \dots, i_n\}$ be the agents in topological order of graph $G$\;
    \For{each $j\in \{1, \dots, n\}$}{
        Allocate agent $i_j$'s favorite item $g\in M_n$ to $i_j$: $A_{i_j} \leftarrow A_{i_j}\cup\{\mathop{\mathrm{argmax}}_{g\in M_n}{v_{i_j}(g)}\}$\;
        $M_n\leftarrow M_n\setminus \{g\}$\;
    }
    Update the envy-graph $G$\;
    Iteratively run the cycle-elimination algorithm and update $G$ until $G$ contains no cycle\;
}
Remove the dummy items from the allocation $\mathcal{A}$ \;
\Return{the allocation $\mathcal{A}$}
\end{algorithm}

\begin{theorem} \label{thm:identical_2ef1}
When the allocator's utility functions are identical, a doubly EF-$1$ allocation always exists for any number of agents $n$, and can be found by Algorithm~\ref{alg:identical_2ef1} in polynomial time. 
\end{theorem}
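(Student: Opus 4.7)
The plan is to prove by induction on the number of completed iterations of the outer for-loop that, after each batch of $n$ items is allocated and cycle-elimination has restored acyclicity of the envy-graph, the partial allocation is EF-$1$ with respect to every $v_i$ and EF-$1$ with respect to the identical allocator utility $u$. Dummy items carry zero utility on both sides, so their removal at the end preserves both guarantees, and polynomial running time follows immediately from the fact that there are $m/n$ batches, each with $n$ picks and polynomially many cycle-eliminations, each itself polynomial.

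For EF-$1$ with respect to each $v_i$, I would fix a batch and let $i_1,\ldots,i_n$ denote the topological order of the envy-graph used within it. For a pair $(i,j)$, if $i$ precedes $j$ in this order then $i$ picks her $v_i$-favorite item $g_i$ from $M_n$ before $j$ picks $g_j$, so $v_i(g_i)\geq v_i(g_j)$; combining this with the inductive hypothesis $v_i(A_i)\geq v_i(A_j\setminus B)$ for some $B\subseteq A_j$ with $|B|\leq 1$ yields $v_i(A_i\cup\{g_i\})\geq v_i((A_j\cup\{g_j\})\setminus B)$. If instead $j$ precedes $i$, then no envy-graph edge $(i,j)$ can exist (else it would violate the topological order), hence $v_i(A_i)\geq v_i(A_j)$ holds outright, and removing $g_j$ from $j$'s side witnesses EF-$1$ trivially after the batch. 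Cycle-elimination then preserves $v_i$-EF-$1$: each agent on the cycle inherits a bundle she previously strictly envied, so her $v_i$-value weakly increases, while EF-$1$ toward any other bundle transports through the cyclic reassignment by relabeling the pre-shift witness sets appropriately.

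For EF-$1$ with respect to $u$, the key structural invariant I would maintain is that after every iteration each current bundle contains \emph{exactly one item from each batch processed so far}. This is preserved inductively because every agent adds exactly one item per batch, and cycle-elimination permutes whole bundles rather than redistributing items across them. Writing $a_k^i$ for the item in agent $i$'s current bundle originating from batch $k$, the descending sort of items by $u$ guarantees $u(a_k^i)\geq u(a_{k+1}^j)$ for all $i,j,k$, and a telescoping sum then gives $u(A_i)\geq u(A_j)-u(a_1^j)$, which certifies $u$-EF-$1$ by removing the single item $a_1^j\in A_j$. The step I expect to be the main obstacle is justifying that cycle-elimination preserves \emph{both} fairness criteria simultaneously: the $u$ side is painless because identical allocator utilities reduce the condition to a permutation-invariant property of the multiset of bundles, but the $v_i$ side demands careful bookkeeping of EF-$1$ witnesses through the cyclic bundle reassignment, using that each participating agent strictly improves in $v_i$-value to absorb the reshuffling.
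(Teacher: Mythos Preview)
Your proposal is correct and follows essentially the same approach as the paper: an induction over batches proving EF-$1$ for each $v_i$ via the topological-order pick argument, EF-$1$ for $u$ via the descending-sort telescoping argument, and preservation under cycle-elimination because bundles are permuted while each agent's $v_i$-value weakly increases. Your formulation of the allocator side through the structural invariant ``each bundle contains exactly one item from each batch'' is slightly more direct than the paper's two-case induction, but the underlying telescoping inequality $u(a_k^i)\geq u(a_{k+1}^j)$ is identical.
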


Before we prove Theorem~\ref{thm:identical_2ef1}, we first describe our algorithm. 
At the beginning of the algorithm, we construct an envy-graph $G$ with $n$ vertices and no edges and sort the items according to the allocator's utility function in descending order.
Then, we divide the sorted items into $\ceil{\frac{m}{n}}$ groups where each group contains $n$ items.
In each round, we allocate a group of items to the agents such that each agent receives exactly one item.
In particular, each agent takes away her favorite item from the group, where the agents are sorted in the topological order of $G$ before the iteration begins.
After all these $n$ items are allocated, we update the envy-graph and run the cycle-elimination algorithm, so that the envy-graph contains no cycle and a topological order of the agents can be successfully found in the next round.

To prove Theorem~\ref{thm:identical_2ef1}, we first prove the allocation is EF-$1$ from both the agents' and the allocator's perspectives.

\begin{lemma}
    The allocation computed by Algorithm~\ref{alg:identical_2ef1} is EF-$1$ to the agents.
\end{lemma}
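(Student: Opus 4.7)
The plan is to prove the EF-$1$ property of agents by induction on the number of rounds of the outer \texttt{for} loop, maintaining the invariant that the partial allocation $\A=(A_1,\ldots,A_n)$ is EF-$1$ (with respect to every $v_i$) at the end of each iteration, i.e., after the group of $n$ items has been distributed and after all cycle-eliminations have been completed. The base case is trivial since $\A$ is initially empty. For the inductive step I would fix a round with group $M_n$, denote by $A_i^{\mathrm{old}}$ the bundles before the round and by $g_i \in M_n$ the item picked by agent $i$ in this round, and verify the invariant for an arbitrary ordered pair $(i,j)$ of agents.

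The heart of the argument is a case split on the topological order of the envy-graph $G$ at the start of the round. Recall that $(u,v)\in E$ means $u$ envies $v$, so in a topological order of $G$ every envier precedes the agent it envies. In Case~1, suppose $i$ precedes $j$ in the topological order. Then $i$ picks $g_i$ before $j$ picks $g_j$, so $g_j$ was available at $i$'s turn, giving $v_i(g_i)\ge v_i(g_j)$ by the greedy choice rule in line~9. The inductive hypothesis supplies a set $B^{\mathrm{old}}\subseteq A_j^{\mathrm{old}}$ with $|B^{\mathrm{old}}|\le 1$ and $v_i(A_i^{\mathrm{old}})\ge v_i(A_j^{\mathrm{old}})-v_i(B^{\mathrm{old}})$; adding $v_i(g_i)\ge v_i(g_j)$ yields $v_i(A_i^{\mathrm{old}}\cup\{g_i\})\ge v_i(A_j^{\mathrm{old}}\cup\{g_j\})-v_i(B^{\mathrm{old}})$, so the same $B^{\mathrm{old}}$ witnesses EF-$1$ in the new allocation. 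In Case~2, $j$ precedes $i$. Then, in the topological order, there cannot be an edge $(i,j)$, so $i$ did not envy $j$ before the round, i.e.\ $v_i(A_i^{\mathrm{old}})\ge v_i(A_j^{\mathrm{old}})$. Taking $B=\{g_j\}$ gives $v_i(A_i^{\mathrm{old}}\cup\{g_i\})\ge v_i(A_j^{\mathrm{old}})=v_i((A_j^{\mathrm{old}}\cup\{g_j\})\setminus\{g_j\})$, which is EF-$1$.

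It remains to check that the cycle-elimination step preserves EF-$1$, but this is the standard observation from \citet{Lipton04onapproximately}: if the current allocation $\A$ is EF-$1$ and the cycle $u_1\to\cdots\to u_k\to u_1$ is eliminated by the shift $A_{u_\ell}\leftarrow A_{u_{\ell+1}}$, then for any agent $i$ we have $v_i(A_i^{\mathrm{new}})\ge v_i(A_i^{\mathrm{old}})$ (either $i$ is untouched, or $i=u_\ell$ for some $\ell$, in which case the new bundle was envied by the old one), while for any other agent $j$, setting $j^*$ to be the agent whose old bundle is now assigned to $j$, EF-$1$ of the old allocation for the pair $(i,j^*)$ gives a single item $g\in A_{j^*}^{\mathrm{old}}=A_j^{\mathrm{new}}$ with $v_i(A_i^{\mathrm{new}})\ge v_i(A_i^{\mathrm{old}})\ge v_i(A_{j^*}^{\mathrm{old}})-v_i(g)=v_i(A_j^{\mathrm{new}})-v_i(g)$.

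The only place that requires care is ensuring the topological order in Case~1 correctly gives $v_i(g_i)\ge v_i(g_j)$; this relies on the fact that at $i$'s turn $g_j$ is still in $M_n$, which holds simply because $j$ picks strictly later. I do not expect any substantial obstacle: all nontrivial content is the standard round-robin/envy-cycle invariant adapted to the group-by-group picking order used by Algorithm~\ref{alg:identical_2ef1}, and the allocator-side valuation $u$ plays no role in this particular lemma.
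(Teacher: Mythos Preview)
Your proposal is correct and follows essentially the same approach as the paper: induction on rounds, with the inductive step splitting on the relative position of $i$ and $j$ in the topological order (the paper phrases this as a WLOG assumption that $i$ precedes $j$ and then checks both directions, which is equivalent to your explicit case split), followed by the standard observation that cycle-elimination preserves EF-$1$ since bundles are permuted and each agent's value weakly increases.
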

\begin{proof}
We use induction to show that EF-$1$ is maintained to the agents through the algorithm. 
At the beginning of the first round, the allocation is empty, so it is EF-$1$.
We assume at the beginning of the $\ell$-th round, the allocation $\mathcal{A}$ is EF-$1$.
Denote the allocation after running the $(\ell+1)$-th round by $\mathcal{B}$.
We now show $\mathcal{B}$ is still EF-$1$.

We first consider the allocation $\mathcal{A'}$ before the cycle-elimination algorithm. 
Consider two arbitrary agents $i, j$, and assume they receive items $g_i$ and $g_j$ respectively in the $(\ell+1)$-th round. 
Without loss of generality, we assume $i$ is before $j$ in the topological order of $G$ after the $\ell$-th round.
For agent $i$, since $\mathcal{A}$ is EF-$1$, there exists an item $g\in A_j$ such that $v_i(A_i)\ge v_i\left(A_j\setminus \{g\}\right)$.
Since $i$ is before $j$, $v_i(g_i)\ge v_i(g_j)$.
We have 
$$
v_i(A'_i)=v_i\left(A_i\cup \{g_i\}\right) \ge v_i((A_j\setminus \{g\})\cup\{g_j\})=v_i(A'_j\setminus\{g\}),
$$
so agent $i$ will not envy agent $j$ if $g$ is removed from $A'_j$.
For agent $j$, she does not envy $i$ in $\mathcal{A}$, so $v_j\left(A_j\right)\ge v_j(A_i)$. 
Then, we have 
$$
v_j(A'_j)=v_j\left(A_j\cup\{g_j\}\right)\ge v_j(A_i)=v_j(A'_i\setminus\{g_i\}),
$$
so $j$ will not envy $i$ if $g_i$ is removed from $A'_i$.
Hence, $\mathcal{A'}$ is an EF-$1$ allocation to the agents.

The cycle-elimination algorithm does not destroy the EF-$1$ property. The allocation $\mathcal{B}$ after cycle-elimination is a permutation of $\mathcal{A'}$ where the constituents of each bundle do not change, and each agent receives a bundle with a weakly higher value. Hence, $\mathcal{B}$ is still EF-$1$ to the agents.
\end{proof}

\begin{lemma} \label{lem:identical_allocator_ef1}
    The allocation found by Algorithm~\ref{alg:identical_2ef1} is EF-$1$ to the allocator.
\end{lemma}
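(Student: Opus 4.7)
The plan is to exploit the fact that the cycle-elimination subroutine merely permutes whole bundles among agents and does not modify the bundles themselves. Consequently, the multiset of bundles at the end of each round is determined purely by which item each bundle gained that round, yielding a strong structural invariant that I will use to bound the allocator's value differences.

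First, I would establish the following invariant by induction on $\ell$: after round $\ell$ (and its subsequent cycle-eliminations), each of the $n$ bundles contains exactly one item from each of the groups $G_1,\dots,G_\ell$, where $G_k$ denotes the $k$-th consecutive block of $n$ items in the $u$-descending sorted order (with dummies, if any, padded at the end). The base case is trivial. For the inductive step, round $\ell+1$ has each agent append one item of $G_{\ell+1}$ to her current bundle, and the cycle-elimination that follows only rotates whole bundles along directed cycles of the envy-graph, so the multiset of bundles is augmented by exactly one item of $G_{\ell+1}$ per bundle.

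Next, fix any two final bundles $B_i,B_j$ (temporarily including dummies) and let $g_i^{(\ell)}$ denote the unique item of $B_i$ drawn from $G_\ell$. Because groups are consecutive blocks of size $n$ in $u$-descending order, every item of $G_{\ell-1}$ has $u$-value at least every item of $G_\ell$; in particular $u(g_i^{(\ell-1)})\ge u(g_j^{(\ell)})$ for all $\ell\ge 2$. Setting $L=m/n$ (post-padding) and telescoping,
\[
u(B_j)-u(B_i)\;=\;u(g_j^{(1)})-u(g_i^{(L)})+\sum_{\ell=2}^{L}\bigl(u(g_j^{(\ell)})-u(g_i^{(\ell-1)})\bigr)\;\le\; u(g_j^{(1)}),
\]
since every term of the sum is non-positive and $-u(g_i^{(L)})\le 0$. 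Rearranging gives $u(B_i)\ge u(B_j\setminus\{g_j^{(1)}\})$, which is exactly EF-$1$ from the allocator's identical viewpoint.

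Finally, I would check that removing the dummy items at the last line of Algorithm~\ref{alg:identical_2ef1} preserves EF-$1$ for the allocator: dummies contribute $0$ to $u$, so no $u$-value of any bundle changes. If the removable item $g_j^{(1)}$ identified above is a real item, it still lies in $B_j$ after pruning and the same inequality applies; if it is a dummy, the argument already yields $u(B_i)\ge u(B_j)$, so EF-$1$ holds with any single real item (or none) removed. The main obstacle is the group-invariant in step one, since cycle-elimination at first glance looks like it could scramble the round-by-round bookkeeping; the resolution is to note that cycle-elimination acts only on the labels ``which agent holds which bundle'' and leaves the bundles themselves intact, after which the telescoping sum does the rest of the work.
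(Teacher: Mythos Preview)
Your proof is correct and uses essentially the same idea as the paper's: both rest on the fact that every bundle receives exactly one item from each group $G_\ell$ and that any item of $G_{\ell-1}$ dominates any item of $G_\ell$ under $u$, which yields the telescoping bound $u(B_i)\ge u(B_j\setminus\{g_j^{(1)}\})$. The paper packages this as a round-by-round induction with a small case split (one direction handled via the inductive hypothesis, the other via the same telescoping), whereas you make the group-invariant explicit and apply the telescoping once to the final bundles; the substance is identical.
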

\begin{proof}
We prove this by induction. At the beginning of the algorithm, the empty allocation is EF-$1$ to the allocator, and we assume at the beginning of the $\ell$-th round, the allocation is EF-$1$. 
We now prove after the $\ell$-th round, the allocation is still EF-$1$.
Let $x^{(k)}$ represent the item added into bundle $X$ in the $k$-th round.

Consider two arbitrary bundles $X, Y$, and assume two items added to these bundles are $g_i$ and $g_j$ respectively.
Without loss of generality, we assume $u(g_i)\ge u(g_j)$.
Suppose the two bundles are updated to $X', Y'$ after running the $\ell$-th round.
For bundle $X$, since there exists an item $g\in Y$ such that $u(X)\ge u(Y\setminus\{g\})$, we have $$u(X')=u(X\cup\{g_i\})\ge u(Y\cup\{g_j\}\setminus\{g\})=u(Y'\setminus\{g\}).$$
For bundle $Y$, because the items are sorted in descending order, we have $u\left(y^{(k-1)}\right)\ge u\left(x^{(k)}\right)$ for $2\le k\le \ell-1$, and $u\left(y^{(\ell-1)}\right)\ge u(g_i)$. 
Then we have 
$$u(Y')\geq\sum_{k=2}^\ell u\left(y^{(k-1)}\right) \ge \sum_{k=2}^{\ell-1} u\left(x^{(k)}\right)+u(g_i)=u\left(X'\setminus \{x^{(1)}\}\right).$$
Hence, the allocation after the $\ell$-th round is still EF-$1$ to the allocator.
\end{proof}

We conclude from the above two lemmas that the output allocation is doubly EF-$1$. 
Moreover, sorting the items takes $O(m\log m)$ time.
To allocate each group of items, finding a topological order of $G$ costs $O(n^2)$, and allocating one item and updating the envy-graph cost $O(n)$. 
The cycle-elimination algorithm takes $O(n^2)$ time to find a cycle and runs for at most $O(n^2)$ iterations because at least one edge is eliminated in each iteration. 
This process repeats for $\ceil{m/n}$ rounds.
The overall complexity of Algorithm~\ref{alg:identical_2ef1} is $O\left(m\left(\log m+n^3\right)\right)$.
Hence, Theorem~\ref{thm:identical_2ef1} holds.

\subsection{Additive Valuations with Two Agents}
\label{sec:additive-two-agents}
We then show the existence of a doubly EF-$1$ allocation with two agents with additive valuations and how such an allocation can be computed in polynomial time.

\begin{theorem}\label{thm:double_ef1_additive}
    When $n=2$ and the valuations are additive, there always exists a doubly EF-$1$ allocation, and such an allocation can be found in polynomial time.
\end{theorem}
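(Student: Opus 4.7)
The plan is to produce a polynomial-time constructive proof, which must go beyond the topological, non-constructive argument used for \Cref{thm:double_ef1}. Encoding an allocation by a vector $x\in\{0,1\}^m$ with $x_g=1$ iff $g\in A_1$, each of the four EF-1 conditions becomes a single linear inequality once a \emph{witness item} is fixed, namely the item that is (hypothetically) removed from the envied bundle. This reduces doubly EF-1 to searching, over $4$-tuples of witness items $(g_1,g_2,g_3,g_4)\in M^4$ (one per condition, with $g_1,g_3\in A_2$ and $g_2,g_4\in A_1$), for an integer point of the system consisting of the forced assignments $x_{g_1}=x_{g_3}=0$, $x_{g_2}=x_{g_4}=1$, together with the four linear EF-1 inequalities coming from $v_1,v_2,u_1,u_2$ and the box constraints $x_g\in[0,1]$.

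First I would enumerate the $O(m^4)$ candidate witness tuples. For each tuple, I solve the LP relaxation and appeal to \Cref{lem.lpvsol} to obtain a vertex of the feasible polytope in polynomial time. \Cref{thm:double_ef1} (applied with additive, hence monotone, utilities) guarantees that \emph{at least one} tuple, namely the one induced by an actual doubly EF-1 allocation, yields a feasible integer point; consequently the LP relaxation is feasible for that tuple as well. The heart of the argument is converting a feasible vertex solution into an integer doubly EF-1 allocation in polynomial time for at least one tuple.

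For the rounding step, I plan to use the fact that only four nontrivial inequalities can be tight at a vertex, on top of box constraints; a standard basic-feasible-solution counting argument then caps the number of fractional coordinates at a vertex by a small constant (at most four). I would then show that these few fractional items can be rounded to $\{0,1\}$ without breaking any EF-1 inequality, by exploiting the slack absorbed on the right-hand side by the chosen witness values $v_i(g_\cdot)$ and $u_i(g_\cdot)$: moving a single fractional item entirely into one bundle changes each linear form by at most $\max_{g}\max\{v_i(g),u_i(g)\}$ per item, and this is precisely the quantity that the witness terms compensate for. If direct rounding is insufficient, the backup plan is to begin from a $v$-EF-1 allocation (obtained in polynomial time by round-robin) and apply a sequence of item transfers guided by $u_1,u_2$, tracked by a lexicographic potential built from the four EF-1 slacks.

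The main obstacle I anticipate is the simultaneous coupling of the four inequalities: a swap or rounding step restoring one condition can easily invalidate another, and, unlike the single-perspective case, the standard envy-cycle machinery does not directly apply because $u_1,u_2$ are not valuations of real agents participating in envy cycles. The heavy lifting will be the vertex analysis that bounds the fractional support and shows that rounding against the fixed witnesses is always safe for at least one tuple; here the existential guarantee of \Cref{thm:double_ef1} provides the crucial certificate that the enumeration over witness tuples cannot fail for every tuple, so some round of the enumeration must succeed in polynomial time.
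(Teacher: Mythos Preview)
Your LP-based plan has a genuine gap at the rounding step, and the fix you suggest does not work.  Fixing the four witnesses and writing the EF-$1$ conditions as linear inequalities does give a polytope whose vertices have at most four fractional coordinates; that part is fine.  The problem is your justification for rounding: you say that moving a fractional item into one bundle changes each linear form by at most $\max_g\max\{v_i(g),u_i(g)\}$ and that ``this is precisely the quantity that the witness terms compensate for.''  But the witness slack $v_i(g_\cdot)$ or $u_i(g_\cdot)$ has already been consumed to make the \emph{fractional} point feasible; at a vertex, several of the four inequalities are typically tight, so there is no additional slack to absorb rounding error, let alone error from four items simultaneously.  Moreover, even for the tuple coming from the integer doubly EF-$1$ allocation guaranteed by \Cref{thm:double_ef1}, the vertex the LP returns need not coincide with that integer point on the integral coordinates, so brute-forcing the $2^4$ roundings of the fractional coordinates need not hit any feasible integer point.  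Your backup plan (start from a $v$-EF-$1$ allocation and do local moves with a lexicographic potential on the four slacks) runs straight into the obstacle you yourself identify: a move that repairs one of $u_1,u_2$ can destroy one of $v_1,v_2$, and there is no reason the potential is monotone or that the process terminates at a doubly EF-$1$ point.

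The paper's proof is entirely different and purely combinatorial.  It constructs, in polynomial time, $k+1$ explicit balanced allocations (where $m=2k$) that are \emph{all} EF-$1$ for $v_1$ and $u_1$: items are paired by $v_1$, the pairs are ordered by the $u_1$-gap within each pair, and each allocation is encoded by a $\pm$ label sequence on the pairs chosen so that two simple propositions guarantee EF-$1$ for $v_1$ and $u_1$.  The $k+1$ label sequences are chosen so that consecutive allocations satisfy $|A_1^i\cap A_1^{i+1}|=1$ and the first and last are complementary; hence they form an odd cycle in the graph where edges encode $|A_1\cap A_1'|\leq 1$.  \Cref{prop:ef1_intersection} says the allocations failing EF-$1$ for $v_2$ (resp.\ $u_2$) form an independent set in this graph, and two independent sets cannot cover an odd cycle, so at least one of the $k+1$ allocations is doubly EF-$1$.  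This avoids any rounding and yields a direct polynomial-time algorithm.
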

\begin{proof}
    We assume $m$ is a multiple of $4$. Otherwise, we can add dummy items with value $0$ under the four utility functions $v_1,v_2,u_1,u_2$.
    Let $m=2k$. We will construct in polynomial time $(k+1)$ allocations $(A_1^0,A_2^0),(A_1^1,A_2^1),\ldots,(A_1^k,A_2^k)$ such that:
    \begin{enumerate}
        \item all the $(k+1)$ allocations are EF-$1$ with respect to $v_1$ and $u_1$;
        \item $|A_1^i|=|A_2^i|=k$ for each $i=0,1,\ldots,k$;
        \item $|A_1^i\cap A_1^{i+1}|=1$ (or, equivalently, $|A_1^i\cup A_1^{i+1}|=m-1$) for each $i=0,1,\ldots,k-1$, and $A_1^0\cap A_1^k=\emptyset$ (or, equivalently, $A_1^0=[m]\setminus A_1^k$).
    \end{enumerate}

    We first prove that, if such $k+1$ allocations can be found, then at least one of them is EF-$1$ with respect to both $v_2$ and $u_2$ (this will conclude the theorem given Property 1 above).
    We will make use of Proposition~\ref{prop:ef1_intersection}.
    If we consider a graph where the vertices are the allocations and there is an edge between two allocations $(A_1,A_2)$ and $(A_1',A_2')$ if and only if $|A_1\cap A_1'|\leq 1$, Proposition~\ref{prop:ef1_intersection} implies that the set of all allocations that are not EF-$1$ with respect to $v_2$ (or $u_2$) must form an independent set.
    Since two independent sets cannot cover all vertices in an odd cycle, and the $(k+1)$ allocations defined at the beginning form an odd cycle (recall that we have assumed $m$ is a multiple of $4$ and $k=m/2$), we conclude that at least one of the $k+1$ allocations is EF-$1$ with respect to both $v_2$ and $u_2$.
    
    It then remains to construct these $k+1$ allocations based on $v_1$ and $u_1$.

    We first sort the items by descending values based on $v_1$ and group every two items in the sorted list.
    We have a total of $k$ groups, and the $i$-th group contains the items with the $(2i-1)$-th largest value and the $(2i)$-th largest value respectively.
    Now we sort the groups and name the items in each group based on $u_1$.
    For each group of two items, we use the letter $a$ for the item with the larger value (according to $u_1$) and the letter $b$ for the other item.
    We sort the groups by descending values of $u_1(a)-u_1(b)$, and denote the $k$ groups by
    $$G_1=\{a_1,b_1\},G_2=\{a_2,b_2\},\ldots,G_k=\{a_k,b_k\}$$
    such that $u_1(a_1)-u_1(b_1)\geq u_1(a_2)-u_1(b_2)\geq \cdots\geq u_1(a_k)-u_1(b_k)$ and $u_1(a_i)\geq u_1(b_i)$ for each $i=1,\ldots,k$.

    We make the following observation for $v_1$.
    \begin{proposition}
        If $|A_1\cap G_i|=1$ for each $i=1,\ldots,k$, then $(A_1,A_2)$ is EF-$1$ with respect to $v_1$.
    \end{proposition}
    \begin{proof}
        Let $(H_1,\ldots,H_k)$ be the $k$ groups sorted by the descending value of $v_1$, where $H_i$ contains the items with the $(2i-1)$-th and the $(2i)$-th largest values.
        Note that $(H_1,\ldots,H_k)$ is a permutation of $(G_1,\ldots,G_k)$.
        Since $v_1(A_1\cap H_i)\geq v_1(A_2\cap H_{i+1})$ holds for $i=1,\ldots,k-1$, removing the item in $A_2\cap H_1$ from the bundle $A_2$ will make sure $A_1$ is weakly preferred under $v_1$.
    \end{proof}

    We will make sure the $k+1$ allocations satisfy the pre-condition of the proposition above, so the EF-$1$ property for $v_1$ is guaranteed.

    Next, we label each group $G_i$ by either ``$+$'' or ``$-$'', where $+$ indicates $a_i\in A_1$ and $-$ indicates $b_i\in A_1$.
    Notice that any labeling of the $k$ groups defines an allocation $(A_1,A_2)$ that satisfies the EF-$1$ property for $v_1$.
    We then make the following observation for $u_1$.
    
    \begin{proposition}\label{prop:pmlabel}
        Given any allocation defined by $k$ labels, if the number of ``$-$''s is at most one more than the number of ``$+$''s in the set of groups $\{G_1,G_2,\ldots,G_i\}$ for each $i=1,\ldots,k$, then this allocation is EF-$1$ for $u_1$.
    \end{proposition}
    \begin{proof}
        Consider an arbitrary valid labeling which defines an allocation $(A_1,A_2)$.
        Let $T^+$ be the set of groups with the label ``$+$'' and $T^-$ be the set of groups with the label ``$-$''.
        We first prove that, after excluding at most one group $G_t$ (for some $t\in[k]$) from $T^-$, there exists an injective mapping $\pi:T^-\setminus\{G_t\}\to T^+$ such that $i<j$ whenever $\pi(G_j)=G_i$.
        This mapping can be constructed inductively.
        Suppose all except one group with the label $-$ in $\{G_1,\ldots,G_i\}$ are mapped.
        We now extend the mapping to the set $\{G_1,\ldots,G_i,G_{i+1}\}$.
        If the label of $G_{i+1}$ is $+$, we are done.
        Otherwise, we discuss two cases.
        
        If $i$ is even, then the number of ``$+$''s is at least the number of ``$-$''s in  $\{G_1,\ldots,G_i\}$ (otherwise, the number of ``$-$''s is at least two more than the number of ``$+$''s, which is invalid).
        If all the ``$-$''s in $\{G_1,\ldots,G_i\}$ are mapped, we can leave $G_{i+1}$ unmapped. If one of the ``$-$'' in $\{G_1,\ldots,G_i\}$ is unmapped, at least one ``$+$'' group is not matched, and we can map $G_{i+1}$ to this group.

        If $i$ is odd, since we have assumed $G_{i+1}$ is labeled with ``$-$'', the number of `$+$''s is at least one more than the number of ``$-$''s in  $\{G_1,\ldots,G_i\}$ (otherwise, the number of ``$-$''s is at least two more than the number of ``$+$''s in $\{G_1,\ldots,G_i,G_{i+1}\}$, which is invalid).
        Therefore, at least one group with the label ``$+$'' in $\{G_1,\ldots,G_i\}$ is not matched with any group with the label ``$-$'' in $\{G_1,\ldots,G_i\}$, and we can map $G_{i+1}$ to this group with the label ``$+$''.

        Finally, we show that such a mapping can guarantee EF-$1$ for $u_1$.
        Let $G_t=(a_t,b_t)$ be the group with the label ``$-$'' that is not mapped.
        The allocation $(A_1\setminus\{b_t\},A_2\setminus\{a_t\})$ is envy-free due to the mapping, as $u_1(a_i)-u_1(b_i)\geq u_1(a_j)-u_1(b_j)$ for $i<j$ (so the ``envy'' created by the ``$-$'' groups are compensated by the ``advantage'' created by the mapped ``$+$'' groups).
        Since $u_1(A_1)\geq u_1(A_1\setminus\{b_t\})\geq u_1(A_2\setminus\{a_t\})$, the allocation $(A_1,A_2)$ is EF-$1$ for $u_1$.
    \end{proof}

    Finally, we are ready to construct the $k+1$ allocations.
    We will use a label sequence to describe an allocation.
    Let $(A_1^0,A_2^0)$ be defined by the alternating sequence $+-+-\cdots+-$.
    We define the remaining $k$ allocations inductively.
    Given $(A_1^i,A_2^i)$, the allocation $(A_1^{i+1},A_2^{i+1})$ is defined by flipping the labels of all the groups except for the $(i+1)$-th group.

    For example, suppose $m=8$ and the four groups are $G_1=\{a_1,b_1\},G_2=\{a_2,b_2\},G_3=\{a_3,b_3\},G_4=\{a_4,b_4\}$.
    The five allocations $(A_1^0,A_2^0),(A_1^1,A_2^1),\ldots,(A_1^4,A_2^4)$ are labeled by
    $$+-+-, ++-+, -++-, +-++, -+-+$$
    respectively, and $A_1^0,A_1^1,\ldots,A_1^4$ are
    $$\{a_1,b_2,a_3,b_4\},\{a_1,a_2,b_3,a_4\},\{b_1,a_2,a_3,b_4\},\{a_1,b_2,a_3,a_4\},\{b_1,a_2,b_3,a_4\}$$
    respectively.

    It is clear that the pre-condition for Proposition~\ref{prop:pmlabel} holds for all the $k+1$ allocations:
    for $(A_1^0,A_2^0)$, the labels are alternating; for each $(A_1^i,A_2^i)$, the $i$-th label is always ``$+$'', and the labels in both segments $G_1,\ldots,G_{i-1}$ and $G_{i+1},\ldots,G_k$ are alternating.
    Therefore, all allocations are EF-$1$ with respect to $u_1$.

    Lastly, $|A_1^i|=k$ and $|A_1^i\cap A_1^{i+1}|=1$ hold by our construction, and $A_1^0\cap A_1^k=\emptyset$ holds since the label of each group is flipped for $(k-1)$ times when transforming from $A_1^0$ to $A_1^k$ and $(k-1)$ is an odd number.

    The construction of the $k+1$ groups can clearly be done in polynomial time.
    Checking the EF-$1$ property for the $k+1$ groups for $v_2$ and $u_2$ can also be done in polynomial time.
    Thus, we conclude that a doubly EF-$1$ allocation can be found in polynomial time.
\end{proof} 

\begin{remark}
    Theorem~\ref{thm:double_ef1_additive} can be generalized to the case where $v_1,v_2$ (resp., $u_1,u_2$) are additive, and $u_1,u_2$ (resp., $v_1,v_2$) are monotone beyond additive, as the argument in Proposition~\ref{prop:ef1_intersection} do not require additivity.
    Note that if $u_1,u_2$ (resp., $v_1,v_2$) are monotone, to guarantee the allocation can be found in polynomial time, it is required that $u_i(\cdot)$ (resp., $v_i(\cdot)$) where $i\in[2]$ can be calculated in polynomial time.
\end{remark}

\subsection{Additive Valuations with General Number of Agents}\label{sec:2prop_logn_general_val}

This section studies double fairness for a general number of agents and additive valuations.

\begin{restatable}{theorem}{ThmDoubleLogn}
\label{thm:double_logn}
    For any $n\ge 2$, there always exists a doubly PROP-$(2 \ceil{\log n})$ allocation that can be computed in polynomial time. 
    In particular, when $n= 2^k$, there always exists a doubly PROP-$k$ allocation.
\end{restatable}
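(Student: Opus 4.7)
The plan is divide-and-conquer via recursive bisection of the agent set. I will focus first on the clean case $n=2^k$, where the claim is doubly PROP-$k$. Build a balanced binary tree of depth $k$ on the $n$ agents. At each internal node, the subtree's agents collectively hold a bundle $B$ inherited from the parent, and we must split $B$ into $(B_1,B_2)$ and pass one sub-bundle to each child subtree. The invariant I aim to maintain at every node is a \emph{doubly PROP-$1$ bipartition}: for every agent $i$ in the left subtree there exist items $g_v,g_u\in B_2$ with $v_i(B_1)+v_i(g_v)\geq v_i(B)/2$ and $u_i(B_1)+u_i(g_u)\geq u_i(B)/2$, and symmetrically for the right subtree.

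Assuming this invariant holds at every node, the PROP-$k$ guarantee follows by telescoping. Writing $B_i^{(\ell)}$ for the bundle handed to agent $i$'s subtree at depth $\ell$ (so $B_i^{(0)}=M$ and $B_i^{(k)}=A_i$) and $g_\ell$ for the witness item from the sibling side, the recurrence $v_i(B_i^{(\ell)})\geq \tfrac{1}{2}v_i(B_i^{(\ell-1)})-v_i(g_\ell)$ unrolls to
\[
v_i(A_i)\;\geq\;\frac{v_i(M)}{n}-\sum_{\ell=1}^{k}\frac{v_i(g_\ell)}{2^{k-\ell}}\;\geq\;\frac{v_i(M)}{n}-\sum_{\ell=1}^{k}v_i(g_\ell),
\]
so $v_i(A_i)+\sum_{\ell=1}^{k}v_i(g_\ell)\geq v_i(M)/n$ with exactly $k$ witness items. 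The same calculation applied to $u_i$ yields PROP-$k$ for $u_i$, hence doubly PROP-$k$.

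The remaining task, and the main technical obstacle, is to establish the existence and polynomial-time computation of such a doubly PROP-$1$ bipartition at each internal node. I would formulate an LP with variables $x_g\in[0,1]$ representing the fraction of item $g$ assigned to $B_1$, with constraints forcing both $\sum_g x_g v_i(g)$ and $\sum_g x_g u_i(g)$ to lie within one item's maximum value of the targets $v_i(B)/2$ and $u_i(B)/2$, for every agent $i$ in the subtree. The fractional midpoint $x_g=1/2$ is always feasible. The challenge is to extract an integral bipartition: generic LP-vertex rounding could leave up to $O(|T|)$ fractional items and would yield only PROP-$O(|T|)$ per level, so a tailored rounding is required that exploits the symmetric half-and-half target, potentially via a structural analysis of the vertex support or a cut-and-choose exchange argument specific to this setting.

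For general $n$, I would run the same divide-and-conquer either with unbalanced bisections of sizes $\lceil |T|/2\rceil$ and $\lfloor |T|/2\rfloor$ or after padding the agent set up to $2^{\lceil\log n\rceil}$ with dummy agents carrying zero valuations. Either approach replaces the benchmark $v_i(M)/n$ by $v_i(M)/2^{\lceil\log n\rceil}$, which differs from $v_i(M)/n$ by a factor of at most two; absorbing this factor into twice as many witness items yields the stated doubly PROP-$(2\lceil\log n\rceil)$ bound. The hardest part throughout remains the per-level doubly PROP-$1$ bipartition lemma: a naive LP rounding is too weak, and a round-robin-style combinatorial argument does not obviously handle the $v_i$ and $u_i$ valuations simultaneously, so a structured rounding tailored to the symmetric half-and-half feasible point is essential.
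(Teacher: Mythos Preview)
Your recursive bisection architecture and the telescoping calculation are both correct and match the paper's structure. The gap is in the per-level lemma: you set yourself the goal of a doubly PROP-$1$ bipartition at every node, correctly recognize that LP-vertex rounding only gives PROP-$O(|T|)$ per level (since there are $2|T|$ non-box constraints, hence up to $2|T|-1$ fractional coordinates), and then dismiss this as too weak and look for a ``tailored rounding'' you do not have. That tailored rounding does not exist in the paper either, and is not needed.

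The missing observation is already present in your own unrolling but you throw it away. From
\[
v_i(A_i)\;\geq\;\frac{v_i(M)}{n}-\sum_{\ell=1}^{k}\frac{1}{2^{k-\ell}}\,L\bigl(v_i,c_\ell,\cdot\bigr)
\]
you drop the factors $1/2^{k-\ell}$. Keep them and use the elementary inequality $\tfrac{1}{t}\,L(v_i,\,t\cdot c,\,S)\le L(v_i,\,c,\,S)$ (the sum of the top $tc$ values is at most $t$ times the sum of the top $c$ values). At the split of a subtree with $t$ agents into two halves, the LP-vertex argument gives a PROP-$O(t)$ bipartition; since the subgroup size at that level is $t/2$, dividing by $t/2$ contributes only $O(1)$ witness items to the final count. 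Summing over $\lceil\log n\rceil$ levels gives doubly PROP-$(2\lceil\log n\rceil)$, and the LP is polynomial-time. This is exactly the paper's Lemma~\ref{lem:2_balanced_propn} and Proposition~\ref{prop:local_prop_global_prop}.

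For the sharper $n=2^k$ bound, the paper does not use an LP at all: it proves non-constructively, via the chromatic number of the generalized Kneser graph $\K(m,m/2,n)$, that a $2$-balanced PROP-$(t/2,t/2)$ bipartition exists when splitting $t$ agents (Lemma~\ref{lem:2_balanced_propn_2}). With $c_\ell=t/2$ and subgroup size $t/2$, each level contributes exactly $+1$, giving doubly PROP-$k$. Your padding-with-dummy-agents idea for general $n$ is not how the paper proceeds and, as you state it, does not cleanly convert the $v_i(M)/2^{\lceil\log n\rceil}$ benchmark into $v_i(M)/n$; the paper instead runs the LP directly with the unequal target $\lfloor n/2\rfloor/n$.
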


Before going into details of the proof, we first outline the high-level ideas.
We refer to the idea of Even-Paz algorithm~\citep{EVEN1984285}. 
Given $n$ agents, we first partition the agent set into two groups and try to allocate one bundle for each group.
After that, we fix the two bundles to the two groups and then do further allocating within groups recursively.
To guarantee the property of proportionality, we ensure that each agent and the allocator regard the ratio of the value the agent's group receives as about $1/2$. 
We first introduce the following notions used in our proof.

\begin{definition}
Let $L(v, t, S)$ be the sum of the largest $\min\{t, |S|\}$ values of items under the valuation $v$ among a given item set $S$.
Given a partition $(N_1, N_2)$ of agents and two positive integer $k_1, k_2$. We say that $(X_1, X_2)$ is a \emph{$2$-balanced PROP-$(k_1, k_2)$ allocation with respect to $(N_1, N_2)$} if 
$v(X_1) \ge \frac{\abs{N_1}}{n} v(M) - L(v, k_1, X_2)$ for each $v_i, u_i$ with $i\in N_1$ and $v(X_2) \ge \frac{\abs{N_2}}{n} v(M) - L(v, k_2, X_1)$ for each $v_i, u_i$ with $i\in N_2$.
\end{definition}

We give two lemmas on the existence of $2$-balanced PROP-$(k_1, k_2)$ allocations.
In the first lemma, we show that, for $n$ being an even number, there always exists a $2$-balanced PROP-$(\frac{n}2, \frac{n}2)$ allocation via Kneser graph, which is adopted to show the case where $n=2^k$. 
In the second lemma, we provide a constructive proof of how to find a $2$-balanced PROP-$(n-1, n)$ allocation via linear programming, and it does not put any restriction on $n$.

\begin{lemma}\label{lem:2_balanced_propn_2}
If $n$ is even, then for any $2$-partition  $(N_1, N_2)$ such that $\abs{N_1} = \abs{N_2} = n/2$, there always exists a $2$-balanced PROP-$\left(n/2, n/2\right)$ allocation. 
\end{lemma}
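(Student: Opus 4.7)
The plan is to prove Lemma~\ref{lem:2_balanced_propn_2} through a chromatic-number-style argument on a generalized Kneser graph, modeled on the proof of doubly EF-$1$ in Section~\ref{sect:generalMonotone}. First, I would pad $M$ with dummy items of value $0$ under every valuation so that $m$ becomes even, and then restrict attention to balanced allocations $(X_1,X_2)$ with $|X_1|=|X_2|=m/2$; these form the vertex set of the generalized Kneser graph $\K(m,m/2,n/2)$ of Definition~\ref{def:generalize_kneser_graph}, in which two allocations are adjacent iff $|X_1\cap X_1'|\le n/2$ (equivalently, $|X_2\cap X_2'|\le n/2$ in the balanced case).

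The core step is to show that for each of the $2n$ valuations $v\in\{v_i,u_i:i\in N_1\cup N_2\}$, the allocations violating the corresponding inequality of $2$-balanced PROP-$(n/2,n/2)$ (the \emph{bad-for-$v$} set) form an independent set. For $v$ attached to an agent in $N_1$, assume for contradiction that two adjacent vertices $(X_1,X_2)$ and $(X_1',X_2')$ are both bad. Summing the two defining inequalities and using the identity $v(X_1)+v(X_1')=v(M)+v(X_1\cap X_1')-v(X_2\cap X_2')$ yields
\[
v(X_1\cap X_1')+L(v,n/2,X_2)+L(v,n/2,X_2')<v(X_2\cap X_2').
\]
Since $|X_2\cap X_2'|\le n/2$ by adjacency, the set $X_2\cap X_2'$ is a feasible subset of at most $n/2$ elements of both $X_2$ and $X_2'$, so $L(v,n/2,X_2),L(v,n/2,X_2')\ge v(X_2\cap X_2')$; substituting gives $v(X_1\cap X_1')+v(X_2\cap X_2')<0$, a contradiction. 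For $v$ attached to an agent in $N_2$ the argument is symmetric after swapping the roles of $X_1$ and $X_2$: summing the two bad conditions yields $v(X_2\cap X_2')+L(v,n/2,X_1)+L(v,n/2,X_1')<v(X_1\cap X_1')$, and adjacency $|X_1\cap X_1'|\le n/2$ together with $L(v,n/2,X_1),L(v,n/2,X_1')\ge v(X_1\cap X_1')$ again produces a contradiction.

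Given these $2n$ independent ``bad'' sets, a $2$-balanced PROP-$(n/2,n/2)$ allocation exists whenever they fail to cover every vertex, which is guaranteed once $\chi(\K(m,m/2,n/2))\ge 2n+1$. The principal obstacle is to establish such a chromatic lower bound: a direct application of Lemma~\ref{lem:chromatic_for_kneser} gives only $\chi\ge n+2$, falling short of the target by roughly a factor of two. I expect the resolution to come from a topological argument analogous to Section~\ref{sect:proof_chromatic_gamma_n}---either by proving the appropriate level of connectivity of the geometric realization of the neighborhood complex of $\K(m,m/2,n/2)$ and invoking Lov\'asz's bound (Theorem~\ref{thm:neighbor_chrom}), or by switching to an enriched $\Gamma$-like graph (Definition~\ref{def:gamma_n}) that couples the constraints $|X_1\cap X_1'|\le n/2$ and $|X_1\cup X_1'|\ge m-n/2$ and admits a tailor-made deformation argument in the style of Theorem~\ref{thm:chromatic_gamma_n}. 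Any vertex left uncovered by the $2n$ bad sets is then exactly a $2$-balanced PROP-$(n/2,n/2)$ allocation.
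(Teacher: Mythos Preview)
Your overall strategy---balanced bundles as vertices of a generalized Kneser graph, the $2n$ ``bad'' sets as independent sets, and a chromatic lower bound to find an uncovered vertex---is exactly the paper's approach. However, there is a genuine gap: you work in $\K(m,m/2,n/2)$, where Lemma~\ref{lem:chromatic_for_kneser} only gives $\chi\ge n+2$, and you then hope to close the shortfall with a topological argument. The paper instead works in $\K(m,m/2,n)$, for which Lemma~\ref{lem:chromatic_for_kneser} already yields $\chi\ge 2n+2>2n$, so no additional topology is needed.

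The point you are missing is that your own independence argument in fact proves independence in the \emph{larger} graph $\K(m,m/2,n)$. In your $N_1$ computation you arrive at
\[
v(X_1\cap X_1')+L(v,n/2,X_2)+L(v,n/2,X_2')<v(X_2\cap X_2').
\]
Now assume only $|X_1\cap X_1'|\le n$, equivalently $|X_2\cap X_2'|\le n$. Since $X_2\cap X_2'\subseteq X_2$ and $X_2\cap X_2'\subseteq X_2'$, we have $L(v,n/2,X_2)+L(v,n/2,X_2')\ge 2\,L(v,n/2,X_2\cap X_2')\ge v(X_2\cap X_2')$, the last step because any set of size at most $n$ has total value at most twice the value of its top $n/2$ elements. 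Substituting gives $v(X_1\cap X_1')<0$, a contradiction. The $N_2$ case is symmetric. Hence the bad sets are independent in $\K(m,m/2,n)$, and the chromatic bound $2n+2$ from Lemma~\ref{lem:chromatic_for_kneser} finishes the proof directly; the proposed neighborhood-complex connectivity argument is unnecessary.
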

\begin{proof}
Let $n= 2s$.  
Denote by $\Pi$ the whole space of bundles with size $m/2$.
For each agent $i\in N_1$, we enumerate the bundles $X$ such that she does not regard as proportional, even if removing the $s$ largest items from the $M\setminus X$.
Formally, $\V_i =\{ X \in \Pi: v_i(X) < v_i(M)/2 - L(v_i, s, M\setminus X)\}$.
Symmetrically, for each agent $i\in N_2$, we enumerate the bundles that she considers proportional when taking the largest $s$ items from the remaining items: $\V_i = \{ X \in \Pi: v_i(X) > v_i(M)/2 - L(v_i, s, X)\}$.
Define $\U_i$ for each agent $i\in N$ by replacing $v_i$ by $u_i$ in the above formulas.  

\begin{proposition}\label{prop:N_1_x_1_x2_inter}
For each agent $i\in N_1$, $\abs{X_1 \cap X_2} > n$ for any $X_1, X_2 \in \V_i$.
\end{proposition}
\begin{proof}
We prove it by contradiction.
Assume $\abs{X_1 \cap X_2} \le n$. 
Let $B = M \setminus (X_1\cup X_2)$. Hence, $\abs{B} \le m - 2\cdot m/2  + n \le n$. According to the definition of $\V_i$, we have 
\begin{gather*}
v_i(X_1) < \frac{v_i(M)}2 - L(v_i, s, M\setminus X_1), \quad 
v_i(X_2) < \frac{v_i(M)}2 - L(v_i, s, M\setminus X_2)    
\end{gather*}
Sum them up, and we have
\begin{align*}
v_i(X_1) + v_i(X_2) & < v(M) -L(v_i, s, M\setminus X_1) - L(v_i, s, M\setminus X_2) \\
& = v_i\left(X_1 \cup X_2\right) + v_i(B) - L\left(v_i, s, M\setminus X_1\right) - L\left(v_i, s, M\setminus X_2\right)    
\end{align*}
Since $B = M \setminus (X_1\cup X_2) \subseteq M\setminus X_1$, then $L(v_i, s, B) \le L(v_i, s, M\setminus X_1)$. 
For the same reason, $L(v_i, s, B) \le L(v_i, s, M\setminus X_2)$.
Therefore, we can see that
$$
v_i(X_1) + v_i(X_2) < v_i\left(X_1 \cup X_2\right) + v_i(B) - L\left(v_i, s, B\right) - L\left(v_i, s, B\right) 
<  v_i\left(X_1 \cup X_2\right),
$$
which is impossible and completes the proof. 
\end{proof}

\begin{proposition}\label{prop:N_2_x_1_x2_inter}
For each agent $i\in N_2$, $\abs{X_1 \cap X_2} > n$ for any $X_1, X_2 \in \V_i$.
\end{proposition}
\begin{proof}
If not, assume $\abs{X_1 \cap X_2} \le n = 2^k$.
By the definition of $\V_i$ and the additivity of $v_i$, we have
\begin{align*}
v_i(X_1 \cup X_2)  & = v_i(X_1) + v_i(X_2) - v_i(X_1 \cap X_2) \\
& \ge  v_i(X_1) + v_i(X_2) - 2 \cdot L(v_i, 2^{k-1}, X_1\cap X_2) \\
& > \frac{v_i(M)}2 +  L(v_i, 2^{k-1}, X_1) + \frac{v_i(M)}2 +  L(v_i, 2^{k-1}, X_2) - 2 \cdot L(v_i, 2^{k-1}, X_1\cap X_2) \\ 
& \ge v_i(M) \ge v_i(X_1 \cup X_2)
\end{align*}
which leads to contradiction and completes the proof.
\end{proof}
Using almost the same proof, we have the same conclusions for $\U_i$. 
Thereafter, consider the Kneser graph $\mathcal{H} = \K(m, m/2, n )$. 
According to \Cref{lem:chromatic_for_kneser}, the choromatic number of $\mathcal{H}$ satisfies that $\chi(\mathcal{H}) \ge m - 2\cdot m/2 + 2n + 2 = 2n+2 > 2n$.
As we have already proved in \Cref{prop:N_1_x_1_x2_inter}, each of $\V_i$ and $\U_i$ does not contain two adjacent vertices of $\mathcal{H}$ and is thus an independent set. 
Since the number of these sets, $2n$, is less than  $\chi(\mathcal{H})$, the union of these $2n$ families of sets cannot cover the entire space -- all the $m/2$-subsets of $M$. 
Therefore, there exists a $m/2$-subset $X_0$ not belonging to any of $\V_1, \ldots, \V_n, \U_1, \ldots, \U_n$ and it is clear that $(X_0, M\setminus X_0)$ is a $2$-balanced PROP-$(n/2, n/2)$ allocation.     
\end{proof}

\begin{lemma}
\label{lem:2_balanced_propn}
For any $2$-partition  $(N_1, N_2)$ such that $\abs{N_1} = \floor{n/2}, \abs{N_2} = \ceil{n/2}$, there always exists a $2$-balanced PROP-$(n-1, n)$ allocation which can be computed in polynomial time.
\end{lemma}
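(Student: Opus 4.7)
The plan is to cast the $2$-balanced PROP-$(n-1,n)$ question as a linear-programming feasibility problem, extract a vertex, and round. Introduce $x_g\in[0,1]$ for $g\in M$ (the fraction of item $g$ in $X_1$) and impose, for each $v\in\{v_i,u_i\}$,
\begin{align*}
\sum_{g\in M} v(g)\,x_g &\ge \tfrac{|N_1|}{n}\,v(M) \quad\text{if } i\in N_1,\\
\sum_{g\in M} v(g)\,x_g &\le \tfrac{|N_1|}{n}\,v(M) \quad\text{if } i\in N_2,
\end{align*}
the latter family being equivalent to fractional proportionality for $X_2^{\mathrm{frac}}$. The uniform split $x_g\equiv |N_1|/n$ satisfies every constraint with equality, so the polytope is nonempty; by Lemma~\ref{lem.lpvsol} I obtain a vertex $x^*$ in polynomial time.

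A standard LP-vertex count (at least $m$ linearly independent tight constraints, of which exactly $m-|F|$ come from the box constraints and at most $2n$ come from the value constraints) yields $|F|\le 2n$, where $F=\{g:x^*_g\in(0,1)\}$. I then partition $F=F_1\sqcup F_2$ with $|F_1|\le n$ and $|F_2|\le n-1$, set $X_1=\{g:x^*_g=1\}\cup F_1$ and $X_2=M\setminus X_1$, and verify the guarantees: since $F_2\subseteq X_2$, for every $i\in N_1$ and $v\in\{v_i,u_i\}$,
$v(X_1)\ \ge\ \sum_{g} v(g)\,x^*_g - \sum_{g\in F_2}v(g)\ \ge\ \tfrac{|N_1|}{n}v(M) - L(v,n-1,X_2)$,
and symmetrically $v(X_2)\ge\tfrac{|N_2|}{n}v(M)-L(v,n,X_1)$ for $i\in N_2$ using $F_1\subseteq X_1$ and $|F_1|\le n$.

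The main obstacle is producing the desired partition: the generic vertex bound $|F|\le 2n$ is one short of $|F_1|+|F_2|\le n+(n-1)=2n-1$, so the critical case is $|F|=2n$. I plan to resolve this by a sharper vertex analysis. When $|F|=2n$, every value constraint must be tight, so $x^*$ is an exactly $|N_1|/n$-proportional fractional allocation simultaneously under all $2n$ valuations. I expect to show that in this degenerate regime the tight system restricted to $F$ contains a linear dependence---exploited either directly, or by appending a redundant normalization such as $\sum_g x_g=|N_1|m/n$ to collapse one value constraint, or by a small perturbation of the right-hand side---which permits sliding along a feasible edge to a neighboring vertex with $|F|\le 2n-1$. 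Once $|F|\le 2n-1$, the required partition is immediate and the whole procedure runs in polynomial time via Lemma~\ref{lem.lpvsol}.
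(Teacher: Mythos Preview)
Your approach is essentially the same as the paper's---LP relaxation, vertex extraction, rounding the fractional items---but you have correctly identified and then failed to close the one nontrivial step: the case $|F|=2n$. Your proposed fixes (finding a linear dependence among the tight value constraints, perturbing the right-hand sides, sliding to a neighboring vertex) are all vague and none of them clearly works. In particular, when $|F|=2n$ and all $2n$ value constraints plus exactly $m-2n$ box constraints are tight, you have precisely $m$ tight constraints, which is exactly what a vertex needs; there is no forced linear dependence. Perturbing the right-hand sides destroys the feasibility of the uniform point $x_g\equiv |N_1|/n$ (it satisfies every value constraint with \emph{equality}), so you lose your only witness of nonemptiness. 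And sliding along an edge to an adjacent vertex gives no control over whether $|F|$ decreases.

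The paper resolves this with a clean trick you are missing: instead of writing all $2n$ value inequalities as constraints, it promotes one of them---say the one for $v_1$---to the \emph{objective}, maximizing $\sum_j v_1(g_j)x_j - \tfrac{|N_1|}{n}v_1(M)$ subject to the remaining $2n-1$ value constraints and the box constraints. The uniform point is still feasible and has objective $0$, so the optimal vertex $x^*$ has nonnegative objective, i.e., the dropped constraint is automatically satisfied. Now the vertex count gives at most $2n-1$ non-box tight constraints, hence $|F|\le 2n-1$, and you can split $F$ as $|F_1|\le\lceil(2n-1)/2\rceil=n$ and $|F_2|\le\lfloor(2n-1)/2\rfloor=n-1$ with no further case analysis.
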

\begin{proof}
Use variable $x_j$ to indicate the fraction of item $g_j$ allocated to group $N_1$ for every $g_j\in M$.   
Then consider the following linear program:
\begin{equation*}
\begin{array}{lc@{}ll}
\max  & \sum_j v_1(g_j)\cdot x_j - \frac{\lfloor n/2 \rfloor}{n} \cdot v_1(M) & & \\
\rm{s.t}  & \sum_j u_i(g_j) \cdot x_j \geq  \frac{\lfloor n/2 \rfloor}{n} \cdot u_i(M),  &&i \in N_1 \\
     & \sum_jv_i(g_j) \cdot x_j \geq  \frac{\lfloor n/2 \rfloor}{n} \cdot v_i(M),  &&i \in N_1\setminus \{1\} \\
     & \sum_ju_i(g_j) \cdot x_j \leq  \frac{\lfloor n/2 \rfloor}{n} \cdot u_i(M),  &&i \in N_2 \\
     & \sum_jv_i(g_j) \cdot x_j \leq  \frac{\lfloor n/2 \rfloor}{n} \cdot v_i(M),  &&i \in N_2 \\
     
     & 0\le x_j \le 1, &&j=1 ,\dots, m.
\end{array}
\end{equation*}
It is clear that $\bm{x^0} = \frac{\lfloor n/2 \rfloor}{n}\cdot \mathbf{1} $ is a feasible solution.
Since the objective function's value of $\bm{x^0}$ is $0$, the optimum of the linear program is non-negative. 
Notice that there also exists an optimal solution at a vertex of $\Omega$.
Denote it by $\bm{x}^* = (x_1^*, x_2^*, \ldots, x_m^*)$.

Since $\bm{x}^*$ is a vertex, according to the definition of vertex, there are at least $m$ constraints that are tight at $\bm{x}^*$.
Since there are totally $2n-1 + m$ constraints, at least $m - (2n-1)$ of the last $m$ constraints are tight. 
In other words, at least $m - (2n-1)$ of $x_1^*, \ldots, x_m^*$ are binary ($0$ or $1$). 
Without loss of generality, assume the first $t$ variables $x_1^*, \ldots x_t^*$ are between $0$ and $1$ and $1\le t \le 2n-1$. 
Let $O_1$ and $O_2$ be $\{g_j \in M: x_j^* = 1\}$ and $\{g_j \in M: x_j^* = 0\}$. 
Then let $X_1$ and $X_2$ be defined as 
$$
X_1 := \left\{g_j \in M: j\le \ceil{t/2} \right\}\cup O_1,\quad  X_2:= M \setminus X_1\,.
$$ 
Next we prove that $(X_1, X_2)$ is a $2$-balanced PROP-$n$ allocation. 
First, we can observe that, for each agent $i\in N_1$, the utility of $X_1$ under $v_i$ can be lower bounded by 
\begin{align*}
&\sum_{j \le \ceil{t/2}}v_i(g_j) + \sum_{g_j\in O_1}  v_i(g_j) \\
\text{(as $x_j^* \in [0,1]$)}\quad  \ge &  \sum_{j \le \ceil{t/2}}v_i(g_j)\cdot x_j^*  + \sum_{g_j\in O_1} v_i(g_j)\cdot x_j^* \\
\text{($x_j^* = 0$ for $g_j\in O_2$)} \quad =&  \sum_{g_j\in M} v_i(g_j) \cdot x_j^* - \sum_{\ceil{t/2} < j \le t} v_i(g_j) x_j^*\\
\text{(by feasibility of the solution)}\quad  \ge & \frac{\lfloor n/2 \rfloor}{n} \cdot  v_i(M)  - L\left(v_i, \floor{t/2}, X_2\right), 
\end{align*}
which implies that $(X_1, X_2)$ meets the proportionality constraint for agents in $N_1$. 
Similarly, we can lower bound $v_i(X_2)$ for every agent in $N_2$ by  
\begin{align*}
 & \sum_{\ceil{t/2} < j \le t} v_i\left(g_j\right) + \sum_{g_j\in O_2} v_i\left(g_j\right) \\
\text{(as $x_j^*\in [0,1]$)} \quad \ge & \sum_{\ceil{t/2} < j \le t} v_i(g_j)\cdot (1- x_j^*) + \sum_{j > t} v_i(g_j)\cdot (1- x_j^*)\\
\text{(as $x_j^*\in [0,1]$)} \quad \ge & \sum_{j\in M} v_i(g_j) \cdot (1-x_j^*) - \sum_{j\le \ceil{t/2}} v_i(g_j) \\
= & v_i(M)  -  \sum_{j\in M} v_i(g_j)x_j^* - \sum_{j\le \ceil{t/2}} v_i(g_j) \\
\text{(by feasibility of the solution)} \quad \ge & v_i(M)\cdot \left(1-\frac{\lfloor n/2 \rfloor}{n}\right) - L\left(v_i, n, X_1\right) \\
=  & v_i(M)\cdot \frac{\ceil{n/2}}n   - L\left(v_i, n, X_1\right).
\end{align*}
Overall, $(X_1, X_2)$ satisfies the definition of $2$-balanced PROP-$(n-1, n)$ allocation.
Finally, by \Cref{lem.lpvsol}, $(X_1, X_2)$ can be computed in polynomial time.
\end{proof}

Based on the two lemmas, we are now ready to prove \Cref{thm:double_logn}.
\begin{proof}[Proof of~\Cref{thm:double_logn}]
We first prove the second part of~\Cref{thm:double_logn} when $n=2^k$ by induction on $k$. 
When $k= 0$, this theorem trivially holds. 
If this theorem holds for any $k \le k_1$, consider the case of $k = k_1 + 1$. 
We partition the agents into two groups $N_1 = \left\{1, \ldots, 2^{k_1} \right\}$ and $N_2 = \left\{2^{k_1+1},\ldots, n\right\}$.
According to \Cref{lem:2_balanced_propn_2}, there exists a $2$-balanced PROP-$(\frac{n}2, \frac{n}2)$ allocation $(X_1, X_2)$. 
Next, we consider allocating $X_1$ to the first group and $X_2$ to the second group.

By the induction hypothesis, for the agent set $N_1$ and item set $X_1$, there exists a doubly PROP-$k_1$ allocation $(A_1, A_2, \ldots, A_{2^k_1})$. 
Likewise, there also exists a doubly PROP-$k_1$ allocation $(A_{2^{k_1} +1}, \ldots, A_n)$ for $N_2$ and $X_2$. 
To combine these two local properties into a global one while preserving proportional fairness guarantees, we rely on the following proposition:
\begin{proposition}\label{prop:local_prop_global_prop}
For an agent $i$ and three given integers $n_1, k_1, k_2 \in \mathbb{Z}^+$, if there exists sets $X$ and $A_i\subseteq X$ such that $v_i(A_i) \ge v_i(X) / n_1 - L(v_i, k_1, X)$ and $v_i(X) \ge n_1/n \cdot v_i(M) - L\left(v_i, n_1 \cdot k_2, M\setminus X\right)$, then $v_i(A_i) \ge v_i(M)/n - L(v_i, k_1 + k_2, M)$.
\end{proposition}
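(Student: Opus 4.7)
The plan is to expand both hypotheses algebraically and reduce the claim to a single inequality purely about the function $L$, which can then be established by an elementary averaging/pigeonhole argument on sorted values of items.

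First I would divide the second hypothesis by $n_1$ to obtain
\[
  \frac{v_i(X)}{n_1} \;\ge\; \frac{v_i(M)}{n} \;-\; \frac{1}{n_1}\, L(v_i,\, n_1 k_2,\, M\setminus X),
\]
and then substitute this lower bound on $v_i(X)/n_1$ into the first hypothesis. This yields
\[
  v_i(A_i) \;\ge\; \frac{v_i(M)}{n} \;-\; L(v_i, k_1, X) \;-\; \frac{1}{n_1}\, L(v_i,\, n_1 k_2,\, M\setminus X).
\]
Hence the proposition reduces to showing the single inequality
\[
  L(v_i, k_1, X) \;+\; \frac{1}{n_1}\, L(v_i, n_1 k_2,\, M\setminus X) \;\le\; L(v_i, k_1+k_2,\, M).
\]

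To prove this remaining inequality, I would first handle the fractional $1/n_1$ factor. Let the top $n_1 k_2$ items of $M\setminus X$ under $v_i$ be sorted in non-increasing order; partition them into $n_1$ consecutive blocks of $k_2$ items each. The sum of the first block is at least the average of all $n_1$ block sums, which equals $L(v_i, n_1 k_2, M\setminus X)/n_1$. Since the first block consists of $k_2$ items of $M\setminus X$ with top values, its sum is at most $L(v_i, k_2, M\setminus X)$. Therefore
\[
  \tfrac{1}{n_1}\, L(v_i, n_1 k_2,\, M\setminus X) \;\le\; L(v_i, k_2,\, M\setminus X).
\]
Next, since $X$ and $M\setminus X$ are disjoint, the union of the top $k_1$ items of $X$ and the top $k_2$ items of $M\setminus X$ is a set of at most $k_1+k_2$ distinct items in $M$; its total value is therefore bounded by the sum of the top $k_1+k_2$ items of $M$. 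Combining these two observations gives the desired inequality and completes the proof.

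The main obstacle, though mild, is the averaging step: one needs to be careful that $L(v_i,\cdot,\cdot)$ behaves correctly when $n_1 k_2$ exceeds $|M\setminus X|$, but the definition of $L$ uses $\min\{t,|S|\}$ precisely to accommodate this, and the block argument still goes through by allowing empty ``phantom'' items of value $0$ if necessary. Everything else is pure substitution.
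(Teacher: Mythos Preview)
Your proof is correct and follows the same approach as the paper: substitute one hypothesis into the other and then verify the residual inequality $L(v_i,k_1,X)+\tfrac{1}{n_1}L(v_i,n_1k_2,M\setminus X)\le L(v_i,k_1+k_2,M)$. The paper simply asserts this last step without justification, whereas you spell out the averaging argument and the disjoint-union bound explicitly; your edge-case handling via zero-value phantom items is also fine.
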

\begin{proof} 
This proposition can be concluded by the following inequalities,
\begin{align*}
v_i(A_i) & \ge \frac{ v_i(X)}{n_1} - L(v_i, k_1, X) 
\ge \frac{1}{n_1} \left(\frac{n_1}{n}\cdot v_i(M) - L\left(v_i, n_1 \cdot k_2, M\setminus X\right) \right) - L(v_i, k_1, X)\\
& = \frac{v_i(M)}{n}  - \frac{1}{n_1}\cdot L\left(v_i, n_1 \cdot k_2, M\setminus X\right) - L(v_i, k_1, X) \ge \frac{v_i(M)}{n}- L(v_i, k_1 + k_2, M)\qedhere
\end{align*}
\end{proof}

Using~\Cref{prop:local_prop_global_prop}, we can verify that $v_i(A_i) \ge \frac{1}n v_i(M) - L(v_i, k_1 + 1, M)$. Thus, the proof of the induction step is complete, and we have proved the second part.

The proof of the first part of~\Cref{thm:double_logn} where $n$ is not necessarily $2^k$ is similar.
we first partition the $n$ agents into two groups $N_1 = \left\{1, \ldots, \floor{n/2} \right\}$ and $N_2 = \left\{ \floor{n/2} +1,\ldots, n \right\}$. According to Lemma~\ref{lem:2_balanced_propn}, there exists a $2$-balanced PROP-$(n-1, n)$ allocation $(X_1, X_2)$. 
By the same inductive arguments as before, there exist two allocations $\left(A_1, \ldots, A_{\floor{n/2}}\right)$ and $\left(A_{\floor{n/2} + 1}, \ldots, A_n\right)$, which are respectively PROP-$2\ceil{\log(\floor{n/2})}$ and  PROP-$2\ceil{\log(\ceil{n/2})}$ for $N_1, X_1$ and $N_2, X_2$. 

Since $n - 1\le 2\cdot \floor{n/2}$ and $n \le 2\cdot \ceil{n/2}$, by applying Proposition~\ref{prop:local_prop_global_prop}, we can verify the allocation  $\left(A_1, \ldots, A_n\right)$ is doubly PROP-$\left(2\ceil{\log(\ceil{n/2})} + 2\right)$. It is not hard to verify that $\ceil{\log(\ceil{n/2})} = \ceil{\log n} -1$. Therefore, this allocation is also doubly PROP-$\left(2\ceil{\log n}\right)$.
\end{proof}

\subsection{Personalized Bi-valued Valuations}
\label{sec:bi-valued}
As we have shown in Theorem~\ref{thm:double_logn}, for additive valuation, when $n\ge 2$, a doubly PROP-$O(\log n)$ allocation always exists.
In this section, we further consider another common setting with personalized bi-valued valuations, where we make use of the round-robin algorithm based on an arbitrary ordering of agents.
Recall that in the personalized bi-valued setting, for any item $g\in M$ and agent $i \in [n]$, we have $v_i(g) \in \{p_{i,v}, q_{i,v}\}$ and $u_i(g)\in\{p_{i,u}, q_{i,u}\}$ for $0 \le p_{i,v} < q_{i,v}$ and $0 \le p_{i,u} < q_{i,u}$.
We show that a doubly PROP-2 allocation can be found in polynomial time.
\begin{theorem}\label{thm:doubly_prop2_for_bivalued_utility}
    When $u_i, v_i$ are both personalized bi-valued utility functions, there always exists a doubly PROP-$2$ allocation for any $n\ge 2$, and it can be computed in polynomial time.
\end{theorem}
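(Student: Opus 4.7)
The plan is to apply a round-robin algorithm whose priority rule is tailored to the bi-valued structure of both $v_i$ and $u_i$. From agent $i$'s perspective, each item is of one of four types, determined by whether $v_i$ and $u_i$ each take the high or the low value; the priority must interleave picks among these types so as to secure enough items from both $H_i^v := \{g \in M : v_i(g) = q_{i,v}\}$ and $H_i^u := \{g \in M : u_i(g) = q_{i,u}\}$ to reach the PROP-2 bound on each side. A vanilla priority that maximizes only $v_i$ (or only $u_i$) with ties broken by the other valuation is insufficient, because the other agents' valuations can be adversarial and deplete one of $H_i^v$ or $H_i^u$ while agent $i$ is busy with items favoring the other side, so the priority must be chosen more carefully and possibly adaptively.

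The core of the proof will be a counting lemma: for every agent $i$, the resulting bundle $A_i$ satisfies
\[
|A_i \cap H_i^v| \;\ge\; \frac{h_i^v}{n} - 2 \qquad \text{and} \qquad |A_i \cap H_i^u| \;\ge\; \frac{h_i^u}{n} - 2,
\]
where $h_i^v = |H_i^v|$ and $h_i^u = |H_i^u|$, and the round-robin framework also guarantees $|A_i| \ge \lfloor m/n \rfloor \ge m/n - 1$. Granting the counting lemma, PROP-2 for $v_i$ follows from the bi-valued identity
\[
v_i(A_i) - \frac{v_i(M)}{n} = \Bigl(|A_i \cap H_i^v| - \frac{h_i^v}{n}\Bigr)(q_{i,v} - p_{i,v}) + \Bigl(|A_i| - \frac{m}{n}\Bigr) p_{i,v},
\]
which after substitution gives $v_i(A_i) \ge v_i(M)/n - 2 q_{i,v} + p_{i,v} \ge v_i(M)/n - L(v_i, 2, M)$ when $h_i^v \ge 2$ (using $L(v_i, 2, M) = 2 q_{i,v}$ in this regime; the cases $h_i^v \le 1$ are trivial since then $v_i(M)/n \le q_{i,v} \le L(v_i, 2, M)$). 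The identical calculation with $u$ in place of $v$ yields PROP-2 for $u_i$, so the output allocation is doubly PROP-2.

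The hardest step will be the design of the priority rule together with the proof of the counting lemma. The bi-valued assumption keeps the combinatorial complexity manageable, since agent $i$ only needs to track four item types, but a fixed global priority cannot simultaneously defeat all adversarial choices of the other agents' valuations: the rule must balance agent $i$'s picks between the two ``mixed'' categories $H_i^v \setminus H_i^u$ and $H_i^u \setminus H_i^v$, so that a near-$(1/n)$ fraction of each of $H_i^v$ and $H_i^u$ ends up in $A_i$ regardless of what the other agents pick. The argument will proceed by a case analysis on the four type-cardinalities $(s_1, s_2, s_3, s_4)$ from agent $i$'s perspective, tracking round-by-round how many items of each type agent $i$ and her opponents take in the worst case, and verifying that the interleaving priority keeps both deficits at most $2$.

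Polynomial-time computation is immediate from the round-robin framework: an initial $O(m \log m)$ sort suffices, after which each of the $O(m)$ picks together with its priority evaluation takes $O(n)$ time, giving overall $O(mn + m \log m)$ running time.
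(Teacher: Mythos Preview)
Your proposal is correct and follows essentially the same approach as the paper: a modified round-robin where each agent $i$ classifies items into the four types $\mathcal{S}_i^1,\ldots,\mathcal{S}_i^4$ (your $H_i^v\cap H_i^u$, $H_i^v\setminus H_i^u$, $H_i^u\setminus H_i^v$, and the rest), picks from $\mathcal{S}_i^1$ first and $\mathcal{S}_i^4$ last, and adaptively alternates between $\mathcal{S}_i^2$ and $\mathcal{S}_i^3$ so as to keep both deficits bounded; the paper's concrete rule compares the running quantities $\kappa_i^{(2)}=|A_i\cap\mathcal{S}_i^2|-\tfrac{1}{n}|P\cap\mathcal{S}_i^2|$ and $\kappa_i^{(3)}$ and picks from whichever side is currently behind, then proves the invariants $\kappa_i^{(1)}+\kappa_i^{(2)}\ge -2$ and $\kappa_i^{(1)}+\kappa_i^{(3)}\ge -2$ by induction over rounds, which is exactly your counting lemma $|A_i\cap H_i^v|\ge h_i^v/n-2$ and $|A_i\cap H_i^u|\ge h_i^u/n-2$. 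Your final bi-valued arithmetic also matches the paper's; just be careful that PROP-$2$ requires the two compensating items to lie in $M\setminus A_i$ rather than $M$, so when you invoke $L(v_i,2,\cdot)$ you should either use $M\setminus A_i$ or dispatch the easy case where fewer than two $q_{i,v}$-items remain outside $A_i$.
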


Our algorithm is shown in Algorithm~\ref{algo:variant_of_rr}, which is an adaptation of the round-robin algorithm.
We assign the set of unallocated items to agents in rounds, where each agent receives exactly one item based on different types of items.
For each agent $i\in [n]$, according to $u_i$ and $v_i$, we partition the item set $M$ into four subsets, as follows:
\begin{align*}
& \Sione = \{g\in M:\ v_i(g) = q_{i,v},\ u_i(g) = q_{i,u}\},\quad 
\Sitwo = \{g\in M:\ v_i(g) = q_{i,v},\ u_i(g) = p_{i,u}\},\\
& \Sithw = \{g\in M:\ v_i(g) = p_{i,v},\ u_i(g) = q_{i,u}\},\quad
\Sifr = \{g\in M:\ v_i(g) = p_{i,v},\ u_i(g) = p_{i,u}\}.
\end{align*}

Observe that the items in $\Sione$ are the most valuable to agent $i$ since both she and the allocator regard the value of each of them to have a high value ($q_{i,v}$ and $q_{i,u}$).
Then, when it is agent $i$'s turn to pick an item in the round-robin algorithm, if there exists an unallocated item in $\Sione$, we assign it to her. 
Otherwise, we consider a function $\kappa_i^{(j)}$ defined as follows, 
$$
\kappa_i^{(j)} = |A_i \cap \mathcal{S}_i^{j}| - \frac1n\cdot |P\cap  \mathcal{S}_i^{j}|, \quad \text{for }j \in [4].
$$
where $A_i$ is the current bundle allocated to agent $i$ and $P$ is the set of allocated items, i.e., $P = \cup_{i\in [n]} A_i$.
If there exist unallocated items in both of $\Sitwo$ and $\Sithw$, we allocate an arbitrary item $g$ from $\Sitwo$ to agent $i$ if $\kappa_i^{(2)} \le \kappa_i^{(3)}$, and from $\Sithw$ otherwise.
When one of the sets exhausts, we assign agent $i$ an arbitrary item $g$ from the other set.
If both sets exhaust, we assign agent $i$ an arbitrary item $g$ from $\Sifr$.

\begin{algorithm}[t]
\caption{Finding doubly PROP-2 allocations for personalized bi-valued utility functions}
\label{algo:variant_of_rr}
Let $P$ represent the set of allocated items and initialize $P \leftarrow \emptyset$\;
\While{$P \neq M$} {
\For{$i= 1,\ldots, n$} {
\If{$\Sione \setminus P \neq \emptyset$} {
Pick an arbitrary item $g$ from $\Sione\setminus P$ and allocate it to $A_i$\; \label{line:allocate_s1}
}\ElseIf{$\Sitwo \setminus P \neq \emptyset$ and $\Sithw \setminus P \neq \emptyset$ \label{line:s2_s3_nonempty}} {
\If{$\kappa_i^{(2)} \le \kappa_i^{(3)}$} {
Pick an arbitrary item $g$ from $\Sitwo\setminus P$ and allocate it to $A_i$\;
}\Else{
Pick an arbitrary item $g$ from $\Sithw\setminus P$ and allocate it to $A_i$\;
}
}\ElseIf{$\Sitwo \setminus P \neq \emptyset$ or $\Sithw \setminus P \neq \emptyset$}{
Pick an arbitrary item $g$ from the non-empty set and allocate it to $A_i$\;
\label{line:pick_item_from_just_one}
}\ElseIf{$\Sifr \setminus P \neq \emptyset$}{
Pick an arbitrary item $g$ from $\Sifr \setminus P$ and allocate it to $A_i$\label{line:allocate_s4}\;
}
Update $P$ by $P\leftarrow P\cup \{g\}$\; 
}
}
\Return{$\mathcal{A} = (A_1,\ldots, A_n)$}
\end{algorithm}

Before presenting the detailed proof of Theorem~\ref{thm:doubly_prop2_for_bivalued_utility}, we first introduce the following two lemmas.

\begin{lemma}\label{lemma:kappa_12_13_cant_be_smaller_than_neg_1_at_the_same_time}
For any $i\in [n]$, $\kappa_i^{(1)} + \kappa_i^{(2)}$ and $\kappa_i^{(1)} + 
 \kappa_i^{(3)}$ cannot be less than $-1$ at the same time.
\end{lemma}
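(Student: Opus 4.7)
The plan is to prove the contrapositive by first establishing two auxiliary inequalities at the final allocation, and then combining them via a short averaging argument that exploits the symmetric roles of $\mathcal{S}_i^2$ and $\mathcal{S}_i^3$ in the claim. Concretely, I will show that at the end of Algorithm~\ref{algo:variant_of_rr},
\[
\kappa_i^{(1)} \;\ge\; -\tfrac{n-1}{n} \qquad\text{and}\qquad \kappa_i^{(1)}+\kappa_i^{(2)}+\kappa_i^{(3)} \;\ge\; -\tfrac{n-1}{n}.
\]
Once both bounds are in hand, the lemma will follow.

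To prove each of the two inequalities I would distinguish two cases depending on whether agent $i$ ever reaches one of her turns at which the relevant reservoir---$\mathcal{S}_i^1$ for the first bound, or $\mathcal{S}_i^1\cup\mathcal{S}_i^2\cup\mathcal{S}_i^3$ for the second---contains no unallocated item. If she does, say at her turn in round $k+1$, then the reservoir is already fully contained in $P$ by step $kn+i-1$ (just before she acts). By the greedy branch ordering of Algorithm~\ref{algo:variant_of_rr}, all $k$ of her earlier picks were drawn from that reservoir, while the other $n-1$ agents together have made at most $k(n-1)+i-1$ picks up to this step; so the reservoir has size at most $kn+i-1$, which gives $\kappa_i^{(1)} \ge -(i-1)/n \ge -(n-1)/n$, and the analogous computation for the three-set variant. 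Since the reservoir is exhausted from that step on, the value is frozen for the remainder of the algorithm. In the complementary case---agent $i$ never encounters the reservoir empty at her turn---her entire bundle lies inside the reservoir, so the $\kappa_i^{(j)}$'s for the other sets are each $\le 0$; combining with the identity $\sum_{j=1}^{4}\kappa_i^{(j)}=r_i-m/n \ge -(n-1)/n$ yields the desired bound.

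With both inequalities established, I will assume without loss of generality that $\kappa_i^{(2)} \le \kappa_i^{(3)}$ (the conclusion is symmetric in the two sets, even though the tie-breaking in the algorithm is not). Then $\kappa_i^{(3)} \ge (\kappa_i^{(2)}+\kappa_i^{(3)})/2$, and consequently
\[
\kappa_i^{(1)}+\kappa_i^{(3)} \;\ge\; \tfrac{1}{2}\kappa_i^{(1)} + \tfrac{1}{2}\bigl(\kappa_i^{(1)}+\kappa_i^{(2)}+\kappa_i^{(3)}\bigr) \;\ge\; -\tfrac{n-1}{n} \;>\; -1,
\]
so at least one of $\kappa_i^{(1)}+\kappa_i^{(2)}$ and $\kappa_i^{(1)}+\kappa_i^{(3)}$ strictly exceeds $-1$, giving the lemma. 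The most delicate step will be the round-robin bookkeeping in the ``reservoir empties first'' subcase: one must verify that all $k$ of agent $i$'s prior picks really do come from the intended reservoir (which follows from the greedy branch ordering of the algorithm and the monotonicity of $P$) and correctly account for the $i-1$ agents acting before her within each round. The final averaging step is then a one-line algebraic manipulation.
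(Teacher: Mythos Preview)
Your two auxiliary inequalities $\kappa_i^{(1)}\ge -\tfrac{n-1}{n}$ and $\kappa_i^{(1)}+\kappa_i^{(2)}+\kappa_i^{(3)}\ge -\tfrac{n-1}{n}$ are exactly the ingredients the paper uses (stated there more loosely as $\kappa_i^{(1)}\ge -1$ and $\kappa_i^{(1)}+\kappa_i^{(2)}+\kappa_i^{(3)}\ge -1$), and your reservoir-and-freezing argument for them is correct. The paper's combination step is a direct contradiction ($\kappa_i^{(3)}>0\Rightarrow\kappa_i^{(1)}<-1$) rather than your averaging, but these are interchangeable.

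There is, however, a genuine gap: you only establish the claim ``at the final allocation'', whereas the lemma is an \emph{invariant} that must hold at every step of Algorithm~\ref{algo:variant_of_rr}. The paper's proof opens with ``assume \ldots are both less than $-1$ \emph{at some point of the algorithm}'', and the lemma is invoked inside the inductive proof of Lemma~\ref{lem:kap1_add_kap2_ge_neg2} at the moment agent $i$ is about to receive an item from $\mathcal{S}_i^3$ --- an intermediate step, not the end. Your case split (``reservoir empties at her turn in round $k+1$'' / ``reservoir never empties at her turn'') only pins down the value from the freezing round onward, or at termination via $r_i-m/n$; it says nothing about steps before the reservoir is exhausted.

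The fix is small: replace the terminal identity $\sum_j\kappa_i^{(j)}=r_i-m/n$ by the running identity $\sum_j\kappa_i^{(j)}=|A_i|-|P|/n\ge -(n-1)/n$, which holds at every step since between two consecutive picks of agent $i$ at most $n-1$ other picks occur; and note that prior to the freezing round every pick of agent $i$ lies in the reservoir, so the complementary $\kappa_i^{(j)}$'s are $\le 0$ and the same subtraction works pointwise. With that amendment your averaging step yields the invariant at every moment, as required.
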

\begin{proof}
For the sake of contradiction, we assume $\kappa_i^{(1)} + \kappa_i^{(2)}$ and $\kappa_i^{(1)} + \kappa_i^{(3)}$ are both less than $-1$ at some point of the algorithm.
Since agent $i$ is assigned an item at each round, we have $\kappa_i^{(1)} + \kappa_i^{(2)} + \kappa_i^{(3)} \ge -1$.
Hence, we have $\kappa_i^{(3)} > 0$, which implies $\kappa_i^{(1)} < -1$.
According to Line~\ref{line:allocate_s1}, agent $i$ will pick items from set $\Sione$ first.
Thus, $\kappa_i^{(1)} \ge -1$ during the algorithm, which causes a contradiction.
\end{proof}

\begin{lemma}\label{lem:kap1_add_kap2_ge_neg2}
For any $i\in [n], \kappa_i^{(1)} + \kappa_i^{(2)} \ge -2$ and $\kappa_i^{(1)} + \kappa_i^{(3)} \ge -2$ for any $i\in [n]$.
\end{lemma}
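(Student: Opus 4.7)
The plan is to establish the bound $\kappa_i^{(1)} + \kappa_i^{(2)} \ge -2$ as an invariant by tracking the evolution of $\kappa_i^{(1)} + \kappa_i^{(2)}$ around agent $i$'s turns in the round-robin. First I would observe that a single item allocation changes $\kappa_i^{(1)} + \kappa_i^{(2)}$ by $+(n-1)/n$ if agent $i$ picks from $\Sione \cup \Sitwo$, by $-1/n$ if some other agent picks from $\Sione \cup \Sitwo$, and by $0$ otherwise. Consequently, between any two consecutive turns of agent $i$ the cumulative decrease contributed by other agents is at most $(n-1)/n$, since there are exactly $n-1$ other picks in between.

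The heart of the argument is the following dichotomy at agent $i$'s turn: whenever $\kappa_i^{(1)} + \kappa_i^{(2)} < -1$ at the moment agent $i$ is about to pick and $\Sione \cup \Sitwo$ is not yet exhausted, agent $i$ must pick from $\Sione \cup \Sitwo$. This follows from the algorithm's priority rules combined with Lemma~\ref{lemma:kappa_12_13_cant_be_smaller_than_neg_1_at_the_same_time}: if $\Sione$ still has items, agent $i$ picks from $\Sione$ by priority; otherwise, the only way she could pick from $\Sithw$ instead of $\Sitwo$ requires $\kappa_i^{(2)} > \kappa_i^{(3)}$, which combined with $\kappa_i^{(1)} + \kappa_i^{(2)} < -1$ would force $\kappa_i^{(1)} + \kappa_i^{(3)}$ to be even smaller, contradicting Lemma~\ref{lemma:kappa_12_13_cant_be_smaller_than_neg_1_at_the_same_time}. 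Once $\Sione \cup \Sitwo$ is fully exhausted, $\kappa_i^{(1)} + \kappa_i^{(2)}$ freezes, so no further analysis is needed.

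With this dichotomy in hand, I would prove by induction over agent $i$'s consecutive turns that, just before each turn, $\kappa_i^{(1)} + \kappa_i^{(2)} \ge -2 + 1/n$. If the value is already $\ge -1$ before her turn, it remains $\ge -1$ after her pick (which cannot decrease it), hence at least $-1 - (n-1)/n = -2 + 1/n$ before her next turn. If it is $< -1$ and $\Sione \cup \Sitwo$ still has items, the dichotomy forces her pick to add $(n-1)/n$ to the quantity, which then offsets the at-most $(n-1)/n$ loss to other agents before her next turn, yielding a net non-decrease across the stretch; if $\Sione \cup \Sitwo$ is already exhausted, the value never changes again. Intermediate (non-turn) moments differ from the immediately preceding post-turn value by at most $(n-1)/n$, so the bound transfers. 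The base case is immediate since before agent $i$'s first turn the value is at least $-(i-1)/n \ge -(n-1)/n \ge -2 + 1/n$. This yields $\kappa_i^{(1)} + \kappa_i^{(2)} \ge -2 + 1/n > -2$ at all times, and the argument for $\kappa_i^{(1)} + \kappa_i^{(3)} \ge -2$ is entirely symmetric. The step I expect to be most delicate is the dichotomy itself, which has to exhaustively handle the boundary subcases in which some of $\Sione, \Sitwo, \Sithw$ are exhausted but the others are not, so that the appeal to Lemma~\ref{lemma:kappa_12_13_cant_be_smaller_than_neg_1_at_the_same_time} goes through in every configuration reached by the algorithm.
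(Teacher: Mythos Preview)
Your proposal is correct and follows essentially the same approach as the paper's proof. Both arguments track $\kappa_i^{(1)} + \kappa_i^{(2)}$ round by round, use the round-robin arithmetic that other agents can reduce it by at most $(n-1)/n$ between two of agent~$i$'s picks, and invoke Lemma~\ref{lemma:kappa_12_13_cant_be_smaller_than_neg_1_at_the_same_time} to rule out the case where agent~$i$ chooses from $\Sithw$ while $\kappa_i^{(1)}+\kappa_i^{(2)}<-1$ and $\Sitwo$ is still available. The only cosmetic difference is that the paper proceeds phase by phase (while $\Sione$ is non-empty, then while $\Sitwo\cup\Sithw$ is non-empty, then $\Sifr$) with the invariant $\ge -2$, whereas you run a single unified induction over all of agent~$i$'s turns with the slightly sharper invariant $\ge -2+1/n$; both packagings amount to the same argument.
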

\begin{proof}
Initially, since $A_i = \emptyset$, $\kappa_i^{(1)} + \kappa_i^{(2)} = 0$.
When $\Sione\setminus P \neq \emptyset$, since agent $i$ is assigned an item every $n$ items, $\kappa_i^{(1)} + \kappa_i^{(2)} \ge -1$. 
When agent $i$ first receives an item from $\Sitwo$ and $\Sithw$, there are at most $n-1$ items assigned to other agents after $\Sione$ exhausts, hence, $\kappa_i^{(1)} + \kappa_i^{(2)}\ge -2$ and $\kappa_i^{(1)} + \kappa_i^{(3)} \ge -2$.

We next prove this lemma holds during executing Line~\ref{line:s2_s3_nonempty} to Line~\ref{line:pick_item_from_just_one} by induction.
Consider agent $i$ receives an item $g$ from $\Sitwo$ or $\Sithw$ at the $k$-th round and $\kappa_i^{(1)} + \kappa_i^{(2)}$ is assumed to be no less than $-2$ at this time. 
Then, we discuss the following two cases.
\begin{itemize}[leftmargin=*]
    \item If $g$ belongs to $\Sitwo$, then $\kappa_i^{(1)} + \kappa_i^{(2)}$ is updated to $\kappa_i^{(1)} + \kappa_i^{(2)} + \frac{n-1}n$ after allocating $g$.
    Before allocating the next item to agent $i$, there are at most $n-1$ items of $\Sitwo$ to be allocated. 
    Thus, $\kappa_i^{(1)} + \kappa_i^{(2)}$ remains to be no less than $-2$ in the next round.
    \item If $g$ belongs to $\Sithw$, then we argue $\kappa_i^{(1)} + \kappa_i^{(2)} \ge -1$.
    Otherwise, $\kappa_i^{(1)} + \kappa_i^{(2)}$ and $\kappa_i^{(1)} + \kappa_i^{(3)}$ are both less than $-1$, which contradicts to Lemma~\ref{lemma:kappa_12_13_cant_be_smaller_than_neg_1_at_the_same_time}. 
    Thus, after allocating $g$ and the next $n-1$ items, $\kappa_i^{(1)} + \kappa_i^{(2)}$ will be updated to $\kappa_i^{(1)} + \kappa_i^{(2)} - 1 \ge -2$.
\end{itemize}
Based on the above discussion, $\kappa_i^{(1)} + \kappa_i^{(2)}\ge -2$ is maintained before allocating items of $\Sifr$ to agent $i$. 
Finally, since Line~\ref{line:allocate_s4} will not change $\kappa_i^{(1)} + \kappa_i^{(2)}$, the lemma will still not be violated.
Using the same analysis, we can also obtain that $\kappa_i^{(1)} + \kappa_i^{(3)} \ge -2$ for any $i\in [n]$.
\end{proof}

\begin{proof}[Proof of Theorem~\ref{thm:doubly_prop2_for_bivalued_utility}]
According to Lemma~\ref{lem:kap1_add_kap2_ge_neg2}, we have
$$
\abs{A_i \cap \left(\Sione \cup \Sitwo\right)} \ge \frac{1}n \abs{\Sione \cup \Sitwo} -2\,.
$$
Next, we show that the allocation returned by Algorithm~\ref{algo:variant_of_rr} is PROP-2 to agent $i$ under valuation function $v_i$.
Since agent $i$ is assigned an item at each round, hence, $\abs{A_i} \ge \lceil\frac{m}n\rceil - 1$.
Denote $\left\lceil \frac{|\Sione \cup \Sitwo |}{n} \right\rceil - 2$ by $t$. 
Without loss of generality, let $g_1,\ldots, g_t$ be arbitrary $t$ items of $A_i \cap (\Sione \cup \Sitwo)$. 
Thus, $v_i(A_i)$ is at least
\begin{align*}
& v_i (\{g_1,\ldots, g_t\}) + v_i(A_i \setminus \{g_1,\ldots, g_t\}) \\
\text{($g_1,\dots,g_t\in A_i \cap (\Sione \cup \Sitwo)$)}\quad \ge & t\cdot q_{i,v} + \left(\frac{m}n -  1 -t \right)p_{i,v}\\
\text{(as $m= \abs{\Sione \cup \Sitwo} + \abs{\Sithw \cup \Sifr}$)}\quad \ge & \left(\frac1n \abs{\Sione \cup \Sitwo} - 2\right)q_{i,v} + \left(\frac1n \abs{\Sithw \cup \Sifr} + 1\right)p_{i,v}\\
=  & \frac{v_i(M)}n - 2q_{i,v} + p_{i,v},
\end{align*}
which implies that the allocation $\mathcal{A}$ is PROP-2 to agent $i$ under $v_i$.
For the same reason, $\mathcal{A}$ is also PROP-2 to agent $i$ under $u_i$.
Lastly, it is clear that the algorithm runs in polynomial time.
\end{proof}

\section{Allocator's Efficiency}
In this section, we consider the problem of maximizing the allocator's efficiency subject to EF-$c$ or PORP-$c$ constraint for the agents.
Throughout this section, we assume the valuations are additive.
Note that we no longer consider the special case as in Sect.~\ref{sect:identical_double_ef1} where the allocator's valuation is identical to all agents, i.e., $u_1=\cdots=u_n$, as the problem becomes trivial under such case (all allocations have the same \sw).

\subsection{Maximizing Allocator's Efficiency for Two Agents} \label{sec:mae_2_agents}
\begin{theorem} \label{thm:mae_2_ab_neg}
    The problem of maximizing \sw subject to EF-$c$ for two agents is NP-hard to approximate to within factor $2$ even when the allocator's utility functions are binary and $c=1$.
\end{theorem}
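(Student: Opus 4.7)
The plan is a gap-producing reduction from the NP-hard \textsc{Partition} problem. Given positive integers $a_1, \ldots, a_k$ summing to $2S$, I construct a two-agent instance with $k+2$ items: $g_1, \ldots, g_k$ and two special items $h_1, h_2$, with utility functions
\begin{align*}
& v_1(g_i) = v_2(g_i) = a_i, \quad u_1(g_i) = u_2(g_i) = 0, \\
& v_1(h_1) = v_2(h_2) = 0, \quad v_1(h_2) = v_2(h_1) = 2S+1, \\
& u_1(h_1) = u_2(h_2) = 1, \quad u_1(h_2) = u_2(h_1) = 0.
\end{align*}
Only the allocator's utilities need be binary, which they are, and $\SW(\mathcal{A}) = u_1(A_1) + u_2(A_2) \leq 2$ by construction.

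For the forward direction, any balanced partition $Y \subseteq [k]$ with $\sum_{i \in Y} a_i = S$ gives the allocation $A_1 = \{h_1\} \cup \{g_i : i \in Y\}$ and $A_2 = \{h_2\} \cup \{g_i : i \notin Y\}$, achieving $\SW = 2$. EF-$1$ holds for each agent because removing the rival special item (of value $2S+1$) from the opposing bundle is enough to equalize the perceived values.

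The heart of the proof is the converse. Any EF-$1$ allocation with $\SW = 2$ must place $h_1 \in A_1$ and $h_2 \in A_2$. Letting $Y$ index the $g_i$'s assigned to $A_1$ and $s = \sum_{i \in Y} a_i$, observe that $v_1(h_2) = 2S+1 > 2S \geq a_i$ for every $i$, so the most valuable item in $A_2$ from agent $1$'s perspective is $h_2$. Consequently EF-$1$ for agent $1$ reduces to $s \geq (2S - s)$, forcing $s \geq S$; symmetrically EF-$1$ for agent $2$ gives $s \leq S$. Hence $s = S$ and $Y$ solves \textsc{Partition}. Therefore on NO instances the optimum is at most $1$, producing the required factor-$2$ gap between $\OPT = 2$ on YES instances and $\OPT \leq 1$ on NO instances, which rules out any strict $\rho$-approximation for $\rho < 2$.

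The only subtlety is the calibration of $v_i(h_j) = 2S+1$: it must be large enough that $h_j$ is the uniquely preferred item to remove under the EF-$1$ relaxation (any value exceeding $\max_i a_i$ would do), yet it must simultaneously produce inequalities whose combination pins $s$ exactly to $S$ rather than to some slacker interval. The choice $2S+1$ achieves both in one stroke.
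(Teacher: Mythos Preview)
Your proof is correct and follows essentially the same gap reduction from \textsc{Partition} as the paper. The only cosmetic difference is that the paper normalizes the partition numbers to sum to $1$ and sets the special items' agent-values to $1$ (equal to the sum), whereas you keep integer inputs and set those values to $2S+1$; both choices make the special item the maximum-value item in the opposing bundle, so the EF-$1$ relaxation collapses to the pair of inequalities $s\ge S$ and $s\le S$, yielding the same $2$-versus-$1$ gap.
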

\begin{proof}
We will present a reduction from partition. Given a partition instance $S=\{e_1,\dots,e_m\}$ such that $\sum_{k=1}^m e_k=1$, we construct an instance shown in the tables below.

\begin{minipage}[c]{0.5\textwidth}
\centering
\begin{tabular}{c|ccc}
     &  $g_k$ ($1\le k\le m$) &  $g_{m+1}$ &  $g_{m+2}$ \\
    \hline
    $v_1$ & $e_k$ & $1$ & $0$ \\ 
    $v_2$ & $e_k$ & $0$ & $1$ \\ 
    \hline
\end{tabular}
\captionof{table}{Agents' Utility Functions}
\end{minipage}
\begin{minipage}[c]{0.5\textwidth}
\centering
\begin{tabular}{c|ccc}
     &  $g_k$ ($1\le k\le m$) &  $g_{m+1}$ &  $g_{m+2}$ \\
    \hline
    $u_1$ & $0$ & $0$ & $1$ \\ 
    $u_2$ & $0$ & $1$ & $0$ \\ 
    \hline
\end{tabular}
\captionof{table}{Allocator's Utility Functions}
\end{minipage}

We can observe $2$ is an upper bound of \sw.
If the partition instance is a yes-instance, $S$ can be partitioned into $S_1$ and $S_2$ such that $\sum_{e_k\in S_1}e_k=\sum_{e_k\in S_2}e_k=\frac{1}{2}$. 
The allocation $A_1=S_1\cup\{g_{m+2}\}, A_2=S_2\cup\{g_{m+1}\}$ satisfies EF-$1$, and the \sw is $2$.

If the partition instance is a no-instance, assume the \sw is still 2, then the allocation should be $A_1=S'_1\cup\{g_{m+2}\}, A_2=S'_2\cup\{g_{m+1}\}$, where $S'_1\cup S'_2=S$.
To make this allocation EF-$1$, we have $\sum_{e_k\in S'_1}\ge\sum_{e_k\in S'_2}$
for agent $1$ and $\sum_{e_k\in S'_1}\le\sum_{e_k\in S'_2}$ for agent $2$.
Then $\sum_{e_k\in S'_1}e_k=\sum_{e_k\in S'_2}e_k=\frac{1}{2}$, which leads to a contradiction.
So the \sw is at most $1$.

Thus, the inapproximation factor is $2$.
\end{proof}

\begin{theorem} \label{thm:mae_2_aa_pos}
    The problem of maximizing \sw subject to EF-$c$ for two agents has a polynomial time $2$-approximation algorithm when the agents' utility functions are arbitrary.
\end{theorem}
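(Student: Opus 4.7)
The plan is to cast the problem as a family of polynomially many linear programs---one per guess of the top-$c$ threshold in each agent's envy constraint---and to extract a $2$-approximation by rounding at a vertex of the LP polytope. Writing $x_g \in \{0,1\}$ to denote whether $g \in A_1$, the EF-$c$ condition for agent $i$ reads $v_i(A_i) \geq v_i(A_{3-i}) - v_i(T_c^{(i)}(A_{3-i}))$, where $T_c^{(i)}(A_{3-i})$ is the top-$c$ subset of $A_{3-i}$ under $v_i$; this becomes linear in $\{x_g\}$ once $T_c^{(i)}$ is fixed. Enumerating pairs $(\gamma_1, \gamma_2) \in M \times M$---where $\gamma_i$ plays the role of the $c$-th largest element of $A_{3-i}$ under $v_i$---gives $O(m^2)$ guesses. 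For each guess, items of value strictly exceeding the threshold are forced into the opposite bundle, so the remaining problem is an LP with only two non-box structural constraints (one per agent's EF-$c$ inequality).

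Next, I would solve each LP relaxation using Lemma~\ref{lem.lpvsol}. At any vertex solution, at most two of the $x_g$ are strictly fractional, since $m$ tight constraints are required and only two are non-box. I would enumerate the four integer roundings of the two fractional variables, check EF-$c$ feasibility, and keep the highest-welfare feasible allocation; across all $O(m^2)$ guesses, the algorithm outputs the overall best. For the approximation analysis, the LP corresponding to the correct guess---namely, the true top-$c$ thresholds of the EF-$c$ optimum allocation---upper-bounds $\OPT^*$, and the welfare lost from rounding the two fractional items is bounded by their welfare contribution, which I expect can be shown to be at most half of the LP optimum, yielding the $2$-approximation and matching the inapproximability of Theorem~\ref{thm:mae_2_ab_neg}.

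The main obstacle will be ensuring EF-$c$ feasibility after rounding: an adversarial rounding could push an envy past the $c$-item slack. I expect the resolution to come from a careful case analysis exploiting the tightness of the EF-$c$ inequalities at the vertex (in the spirit of the LP vertex-based arguments used in Section~\ref{sec:2prop_logn_general_val}), or from a small perturbation of the guessed thresholds that absorbs the rounding error. A viable fallback plan would be a purely combinatorial argument: show that among the unconstrained welfare-maximum allocation, its bundle-swap, and a standard round-robin EF-$c$ allocation, at least one already achieves welfare $\geq \OPT^*/2$ via an exchange argument, which would bypass the LP rounding entirely.
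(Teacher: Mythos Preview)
Your LP approach has a real gap at exactly the point you flag as an ``obstacle'': neither the EF-$c$ feasibility after rounding nor the $2$-approximation bound is actually argued. The claim that the two fractional items contribute at most half the LP optimum is unsupported and in general false---those items could carry almost all of the allocator's welfare while the EF-$c$ constraints force them to be split fractionally, so rounding them integrally (in either direction) can cost an arbitrarily large fraction of the objective. Likewise, there is no mechanism proposed for why any of the four integer roundings must remain EF-$c$-feasible; the two tight structural constraints are precisely the EF-$c$ inequalities, so pushing a fractional variable to $0$ or $1$ can violate one of them, and nothing in the vertex structure prevents both roundings of a variable from being infeasible. The ``perturbation of thresholds'' idea does not help, since the threshold only affects which items are pre-assigned, not the slack in the linear inequality.

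Your fallback is closer in spirit to what the paper does but still misses the key construction. The unconstrained welfare-maximum allocation and its swap need not be EF-$c$, and a standard round-robin allocation, while EF-$1$, has no welfare guarantee. The paper's argument is the following clean observation: if you can construct a single allocation $(S_1,S_2)$ such that \emph{both} $(S_1,S_2)$ and its swap $(S_2,S_1)$ are EF-$1$, then since $\SW((S_1,S_2))+\SW((S_2,S_1))=u_1(M)+u_2(M)\geq\OPT$, the better of the two is a $2$-approximation. The entire content of the proof is building such a symmetric-feasibility allocation: sort items by $v_1$, pair them, and within each pair assign according to $v_2$ with a balancing rule so that agent~$2$ is EF-$1$ regardless of which bundle she receives (agent~$1$ is EF-$1$ either way by the pairing structure, via the argument of Lemma~\ref{lem:identical_allocator_ef1}). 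This sidesteps all rounding and feasibility issues and does not use the allocator's valuations at all except to pick between the two outputs.
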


We first introduce our algorithm. 
We initialize two empty bundles $S_1$ and $S_2$, and sort the items according to agent $1$'s utility in descending order.
Assume the sorted items are $\{g_1,\dots, g_m\}$, and use $G_i (i\ge 1)$ to denote a group of two items $\{g_{2i-1},g_{2i}\}$.
For each group $G_i (i\ge 1)$, we allocate one item to each bundle.
In particular, without loss of generality, we assume $v_2(S_1)\ge v_2(S_2)$ before allocating group $G_i$.
Then, if $v_2(g_{2i-1})\ge v_2(g_{2i})$, we allocate $g_{2i-1}$ to $S_2$ and $g_{2i}$ to $S_1$.
Otherwise, we allocate $g_{2i-1}$ to $S_1$ and $g_{2i}$ to $S_2$.
Notice that, in this algorithm, agent $1$'s utility function is used exclusively for the ordering of the item, and agent $2$'s utility function is used exclusively for deciding the allocation of the two items in each group.

After all the items are allocated, we consider the two allocations $(S_1,S_2)$ and $(S_2,S_1)$, and output the allocation with a higher allocator's efficiency.

\begin{proof}
First, we show both $(S_1,S_2)$ and $(S_2,S_1)$ satisfy EF-$1$, and thus satisfy EF-$c$.
For agent $1$, by taking $u=v_1$, the same arguments in the proof of Lemma~\ref{lem:identical_allocator_ef1} show that the allocation is EF-$1$ no matter which of $S_1$ or $S_2$ she takes.

We prove the allocations are EF-$1$ to agent $2$ by induction. 
When $S_1$ and $S_2$ are empty, both $(S_1,S_2)$ and $(S_2,S_1)$ are trivially EF-$1$.
Assume before allocating group $G_i$, both allocations satisfy EF-$1$.
Without loss of generality, assume $v_2(S_1)\ge v_2(S_2)$ and $v_2(g_{2i-1})\ge v_2(g_{2i})$.
By our algorithm, after allocating $G_i$, $S'_1=S_1\cup \{g_{2i}\}$ and $S'_2=S_2\cup \{g_{2i-1}\}$.
If agent $2$ receives $S'_1$, we have 
$$v_2(S'_1)\ge v_2(S_1)\ge v_2(S_2)=v_2\left(S'_2\setminus\{g_{2i-1}\}\right).$$ 
If agent $2$ receives $S'_2$, since there exists an item $g\in S_1$ such that $v_2(S_2)\ge v_2(S_1\setminus\{g\})$, we have $$v_2(S'_2)=v_2(S_2)+v_2(g_{2i-1})\ge v_2\left(S_1\setminus\{g\}\right)+v_2(g_{2i})=v_2(S'_1\setminus\{g\}).$$
Hence, both allocations are EF-$1$.

We next show the allocation with a higher \sw is a $2$-approximation to the optimal \sw. 
Without loss of generality, assume $\SW((S_1,S_2))\ge\SW((S_2,S_1))$.
Denote the optimal \sw by $\SW_{\OPT}$, and we have 
\begin{gather*}
    \SW_{\OPT}\le u_1(M)+u_2(M)=u_1(S_1)+u_2(S_2)+u_1(S_2)+u_2(S_1)\\
    =\SW((S_1,S_2))+\SW((S_2,S_1)).
\end{gather*}
Then, we have $\SW((S_1,S_2))\ge\frac{1}{2}\SW_{\OPT}$.

Since the algorithm outputs the allocation $(S_1,S_2)$, Theorem~\ref{thm:mae_2_aa_pos} holds.
\end{proof}

\subsection{Maximizing Allocator's Efficiency for Constant Number of Agents} \label{sec:mae_const_agents}
\begin{theorem} \label{thm:mae_const_aa_neg}
    The problem of maximizing \sw subject to EF-$c$ for any fixed $n\ge 3$ is NP-hard to approximate to within any factor that is smaller than $\floor{\frac{1+\sqrt{4n-3}}{2}}$ even when the allocator's utility functions are binary and $c=1$.
\end{theorem}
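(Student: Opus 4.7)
The plan is to establish this lower bound by a polynomial-time reduction from the \textsc{Partition} problem, using a gadget that scales with $k := \floor{\frac{1+\sqrt{4n-3}}{2}}$. The numerical fact driving the whole construction is that $k$ is the largest integer with $k(k-1)+1 \le n$, giving exactly enough slots to house $k$ ``target'' agents together with a supporting cast of $k(k-1)$ ``auxiliary'' agents; any additional agents, up to the full $n$, can be made inert by giving each a single personal padding item they value overwhelmingly more than anything else, so that EF-$1$ forces that item to stay with them and neither $\SW$ nor the fairness constraints are affected.

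Starting from a \textsc{Partition} instance $S=\{e_1,\dots,e_m\}$ with $\sum_i e_i = 2$, I would build (i) the numerical items $e_1,\dots,e_m$, (ii) $k$ \emph{token} items $g_1,\dots,g_k$ with $u_i(g_i)=1$ for $i\in[k]$ and $u$-value zero everywhere else, so that $\SW$ simply counts how many tokens land at their matching target, and (iii) auxiliary items used only to calibrate the gadget's EF-$1$ constraints. The target agents' $v$-values on the $e_j$'s copy the partition weights, while the auxiliary agents are arranged into $\binom{k}{2}$ or $k(k-1)$ roles (one per ordered/unordered pair of targets) whose sole purpose is to tighten any potential ``EF-$1$ across two targets'' into ``\emph{exactly} balanced split of the $e_j$'s between those two targets''—this is what absorbs the one-item slack that EF-$1$ would otherwise allow.

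With the gadget in hand, the proof splits into two claims. In the yes-case, a witness partition $S=S_1\sqcup S_2$ with equal sums extends to an explicit EF-$1$ allocation that places each token $g_i$ at target $i$ together with a perfectly balanced share of $S$, and routes auxiliary/padding items to neutralize every envy check, giving $\SW=k$. In the no-case, I would show that if two distinct tokens $g_i,g_j$ were placed at targets $i$ and $j$ in an EF-$1$ allocation, the auxiliary agents dedicated to the pair $(i,j)$ would force the $e$-items at targets $i$ and $j$ to be split with \emph{zero} slack, contradicting the non-partitionability of $S$; hence at most one token can ``score'' and $\SW\le 1$. Together these give a gap strictly larger than any $\alpha<k$.

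The main obstacle will be engineering and analyzing the auxiliary gadget: it must simultaneously (a) cost exactly $k(k-1)$ agents (so the bound $n\ge k(k-1)+1$ is tight and essential), (b) contribute nothing to $\SW$, (c) never be envy-violated in the yes-case allocation, and (d) genuinely upgrade each pairwise EF-$1$ constraint into an exact-balance constraint, ruling out any ``one-item-slack cheat'' that would allow two tokens to co-exist without a true partition. Verifying these four properties simultaneously—in particular establishing (d) while not over-constraining (c)—is the technical heart of the argument; once done, NP-hardness of \textsc{Partition} transfers through the reduction and yields the stated $\floor{\frac{1+\sqrt{4n-3}}{2}}$ inapproximability ratio.
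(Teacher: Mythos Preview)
The paper's proof takes a much shorter route than your from-scratch reconstruction: it observes that the hardness construction of \citet{bu2022complexity} for maximizing \emph{agents'} social welfare subject to EF-$1$ already has the property that the welfare is carried almost entirely by a single ``super agent'' whose valuation is binary. One then sets the allocator's utilities $u_i$ equal to that super agent's valuation on the super agent and to zero on every other agent; the allocator's efficiency $\SW$ then coincides with the social welfare in the cited construction, and the $\floor{\frac{1+\sqrt{4n-3}}{2}}$ inapproximability ratio transfers verbatim. No new gadget is built at all.

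Your plan to rebuild the gadget directly is reasonable in spirit, but there is a counting inconsistency you must resolve before it can work. You correctly identify $k(k-1)+1\le n$ as the governing inequality, yet you then budget for $k$ target agents plus $k(k-1)$ auxiliary agents, totalling $k^2$. For $k\ge 2$ this strictly exceeds $k(k-1)+1$; e.g., at $n=3$ one has $k=2$, and your scheme demands $4$ agents where only $3$ are available. The cited construction in fact uses a \emph{single} distinguished agent (not $k$ targets), which is precisely why the paper can wrap it in two sentences. If you insist on re-deriving the gadget rather than citing it, you will need to match that one-super-agent structure or otherwise repair the agent count, and then actually carry out the part you yourself flag as the ``technical heart''---making the auxiliaries convert EF-$1$ slack into exact balance in the no-case without breaking feasibility in the yes-case. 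That is nontrivial and is exactly the content of \citet{bu2022complexity}; your outline does not yet supply it.
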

\begin{proof}
We adopt the reduction from partition by \citet{bu2022complexity}.
In the origin reduction, the key point is that there exists a super agent, and the social welfare almost depends exclusively on the super agent.
In addition, the super agent's utility functions are binary in Bu et al.'s reduction.

In our problem, we maintain the construction in the origin reduction and add an allocator.
For the super agent, we set the allocator's utility functions the same as the super agent's utility functions,
For other agents, we set the allocator's utility functions to be 0.
Hence, the \sw is equivalent to the social welfare in the origin reduction, so we get the same inapproximation result.
\end{proof}

\begin{theorem} \label{thm:mae_const_ba_pos}
    The problem of maximizing \sw subject to EF-$c$ for any fixed $n\ge 3$ can be found in polynomial time when the agents' utility functions are binary.
\end{theorem}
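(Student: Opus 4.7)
The plan is to exploit two features simultaneously: agents' valuations are binary, so EF-$c$ depends only on how many ``desired'' items each bundle contains; and $n$ is a fixed constant, so items split into only $2^n$ distinct types. This should reduce the task to enumerating polynomially many \emph{count patterns} and solving a standard transportation LP for each.

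Setting $W_i = \{g : v_i(g) = 1\}$, the EF-$c$ condition simplifies for binary additive $v_i$ to $|A_i \cap W_i| \ge |A_j \cap W_i| - c$ for all $i \ne j$, depending only on cardinalities. I would group items by \emph{type}: item $g$ has type $T(g) = \{i \in [n] : v_i(g) = 1\} \subseteq [n]$, giving at most $2^n$ types. Let $M_T = \{g : T(g) = T\}$ and $m_T = |M_T|$. A \emph{count pattern} is a tuple $(z_j^T)_{T \subseteq [n],\, j \in [n]}$ of nonnegative integers with $\sum_j z_j^T = m_T$ for every $T$, recording how many type-$T$ items go to agent $j$. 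Since $|A_i \cap W_i| = \sum_{T \ni i} z_i^T$, the EF-$c$ constraint becomes $\sum_{T \ni i} z_i^T \ge \sum_{T \ni i} z_j^T - c$ for all $i \ne j$, testable from the counts alone.

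I would first enumerate all count patterns; their number is $\prod_T \binom{m_T + n - 1}{n-1} = O\bigl(m^{(n-1) 2^n}\bigr)$, polynomial in $m$ because $n$ (hence $2^n$) is constant. For each pattern I check the EF-$c$ inequalities above in $O(n^2 2^n)$ time. For every feasible pattern I then decide \emph{which} specific items of each type go to which agent, in order to maximize $\sum_j u_j(A_j)$. With the counts fixed per type, the problem decouples across types, and for each type $T$ I solve the transportation LP
\begin{align*}
\max\ & \sum_{g \in M_T} \sum_{j \in [n]} u_j(g)\, x_{j, g} \\
\text{s.t.}\ & \sum_{j \in [n]} x_{j, g} = 1 \text{ for all } g \in M_T, \quad \sum_{g \in M_T} x_{j, g} = z_j^T \text{ for all } j, \quad x_{j, g} \ge 0.
\end{align*}
Its constraint matrix splits into an ``item'' block and a ``(type, agent)'' block, each with exactly one $1$ per column, so Lemma~\ref{lem:bipartite_utm} makes it totally unimodular; Lemma~\ref{lem.tum} then guarantees integral vertices, and Lemma~\ref{lem.lpvsol} yields an integral optimum in polynomial time (alternatively any polynomial-time max-weight bipartite $b$-matching algorithm suffices).

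Returning the feasible pattern plus assignment of largest total allocator's utility is correct because every EF-$c$ allocation decomposes uniquely as (feasible count pattern, per-type assignment), so the optimum of the enumeration equals the optimum of the constrained problem. The main conceptual step is noticing that EF-$c$ collapses to linear inequalities on only $O(n \cdot 2^n)$ integer counts; the per-pattern optimization is textbook. The main obstacle I anticipate is the bookkeeping: verifying that the enumeration is polynomial (which crucially uses that $n$ is fixed, so the $2^n$ blowup is absorbed into a constant exponent) and checking that the transportation constraint matrix fits the TUM template of Lemma~\ref{lem:bipartite_utm}---both routine but easy to botch.
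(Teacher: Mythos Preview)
Your proposal is correct, but it follows a different route from the paper. The paper adapts a dynamic programming argument from \citet{aziz2020computing}: it processes the items $g_1,\ldots,g_m$ one at a time and, for each state $(k,(t_{ij})_{i\neq j})$ with $t_{ij}=v_i(A_i)-v_i(A_j)$, records the maximum allocator's efficiency attainable by a partial allocation of $\{g_1,\ldots,g_k\}$ with those differences; since each $t_{ij}\in\{-m,\ldots,m\}$ and $n$ is constant, the state space has polynomial size $O(m\cdot(2m+1)^{n(n-1)})$, and EF-$c$ at the end is simply the test $t_{ij}\ge -c$ for all $i\neq j$ (the paper also notes that the $b_{ij}$ component in Aziz et al.'s state can be dropped because valuations are binary). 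Your approach instead exploits the fact that items of the same $v$-type are interchangeable for EF-$c$, so you enumerate count patterns and then push the allocator-efficiency maximization into a transportation LP per type. This cleanly separates feasibility (a finite check on $O(n\cdot 2^n)$ integer counts) from optimization (a textbook TUM problem), at the cost of a worse exponent $(n-1)2^n$ versus the paper's $n(n-1)$; the paper's DP is more elementary and needs no LP machinery, but must thread the allocator's-efficiency value through the recursion. Both arguments hinge on $n$ being fixed, and both are valid.
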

\begin{proof}
We can adapt the proof of Theorem 7.5 in~\citet{aziz2020computing}.
In their paper, they used the state of the form $(k,(t_{ij})_{i\neq j};(b_{ij})_{i\neq j})$ to state whether there exists such an allocation $\A=(A_1,\ldots,A_n)$ of $(g_1,\ldots,g_k)$ that $v_i(A_i)-v_i(A_j)=t_{ij}$ holds for every two agents $i,j\in N$ and item $g_{b_{ij}}$ is the item which maximizes $v_i$ in agent $j$'s bundle.

The difference from their algorithm is that our state of the form $(k,(t_{ij})_{i\neq j};(b_{ij})_{i\neq j})$ stores not only the information of the existence, but also the largest value of $\sum_{i\in[n]} u_i(A_i)$ for all satisfying allocations.
Besides, we do not need the information $b_{ij}$ as the utility functions are binary.
In particular, an allocation is EF-$c$ if and only if $v_i(A_i)-v_i(A_j)\leq t_{ij}$.

We can see that, for the state of the form  $(k,(t_{ij})_{i\neq j})$, the values of these parameters can determine the feasibility of the following allocation, and if we keep finding the largest \sw among all (partial) allocations stored in this state, we can find the EF-$c$ allocation with the largest \sw at the end.
\end{proof}

\subsection{Maximizing Allocator's Efficiency for General Number of Agents} \label{sec:mae_gen_agents}
For a general number of agents, we first consider EF-$1$, and show a strong inapproximation result even if both the agents' and the allocator's utility functions are binary.
\begin{theorem} \label{thm:mae_gen_bb_neg}
    For any $\epsilon>0$, the problem of maximizing \sw subject to EF-$c$ is NP-hard to approximate within factor $m^{1-\epsilon}$ or $n^{1/2-\epsilon}$, even if both the agents' and the allocator's utility functions are binary and $c=1$.
\end{theorem}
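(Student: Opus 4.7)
The plan is to reduce from the Maximum Independent Set problem, which by H\aa stad and Zuckerman is NP-hard to approximate within factor $N^{1-\delta}$ on $N$-vertex graphs for every $\delta>0$. Given $G=(V,E)$ with $|V|=N$, I will construct a fair-division instance with binary allocator and agent utilities whose optimal EF-$1$ social welfare equals (up to an additive constant) the independence number $\alpha(G)$. By arranging $m=\Theta(N)$ items and $n=\Theta(N^2)$ agents, the $N^{1-\delta}$ gap transfers to $m^{1-\varepsilon}$ and $n^{1/2-\varepsilon}$ inapproximability: the former because the gap is $\Theta(m^{1-\delta})$, the latter because the gap is $\Theta(n^{(1-\delta)/2})$.

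For each vertex $v\in V$ I introduce a vertex agent $p_v$ and a reward item $g_v$, with $u_{p_v}(g_v)=1$ and all other allocator values zero, so that $\SW(\mathcal{A})=|\{v:g_v\in A_{p_v}\}|$. For each edge $(u,v)\in E$ I attach an edge gadget consisting of a witness agent $a_{uv}$ that binary-values $g_u$, $g_v$ and some padding items, together with a twin agent that binary-values only the padding. The gadget is tuned so that any EF-$1$ allocation with both $g_u\in A_{p_u}$ and $g_v\in A_{p_v}$ is forced into an infeasible configuration: the witness either ends up with too many padding items (so the twin envies past EF-$1$) or with too few (so the witness itself envies $p_u$ or $p_v$ past EF-$1$). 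Correctness then follows in two directions. Forward: from an independent set $I$, the allocation that gives $g_v$ to $p_v$ iff $v\in I$ and canonically distributes each gadget's items is EF-$1$ with $\SW=|I|$. Backward: for any EF-$1$ allocation of value $k$, the set $S=\{v:g_v\in A_{p_v}\}$ has size $k$ and, by the gadget constraints, contains no edge of $G$, so it is an independent set.

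The principal obstacle is engineering the edge gadget in the restrictive binary regime. Under binary valuations EF-$1$ collapses to $|A_j\cap V_i|\le|A_i\cap V_i|+1$, and in particular a singleton bundle $\{g_v\}$ can never be envied past EF-$1$; a naive gadget therefore cannot forbid both endpoints from keeping their rewards. One must pad the relevant bundles with auxiliary valued items and then use twin agents to rule out trivial placements of the paddings, so that the bundle-cardinality accounting actually produces the required two-item excess somewhere. A second subtlety is that, to get $m=\Theta(N)$ while there may be as many as $\Theta(N^2)$ edges, the padding items must be \emph{shared} across many gadgets; keeping the local EF-$1$ argument intact under such sharing --- and verifying that every local EF-$1$ constraint is simultaneously satisfiable in the ``good'' (independent-set) case while at least one is necessarily broken in every ``bad'' case --- is the delicate part of the construction. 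Once the gadget is in place and verified, the remainder of the proof is a standard gap-preserving analysis that turns the independent-set gap into the stated $m^{1-\varepsilon}$ and $n^{1/2-\varepsilon}$ inapproximability bounds.
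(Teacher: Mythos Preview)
Your high-level plan (reduce from Maximum Independent Set and carry the $N^{1-\delta}$ gap through) is the right one, and your parameter counting for the $m^{1-\varepsilon}$ and $n^{1/2-\varepsilon}$ bounds is fine. The gap is in the construction itself. You correctly diagnose the obstacle: with one reward agent $p_v$ per vertex, each $p_v$ holds at most the single item $g_v$, and a singleton bundle can never violate EF-$1$; so you are forced into padding items and twin agents, and then into sharing the padding across $\Theta(N^2)$ gadgets to keep $m=\Theta(N)$. You call this ``the delicate part of the construction'' but do not carry it out, and it is not clear it can be made to work with binary valuations: the witness $a_{uv}$ sees at most one valued item in $A_{p_u}$ and at most one in $A_{p_v}$, so you need an auxiliary agent to accumulate a two-item excess, yet any padding item that creates that excess for edge $(u,v)$ will also be valued by the twins of the other gadgets sharing it, and satisfying all of those simultaneously in the ``forward'' direction is exactly the combinatorial problem you have not solved.

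The missing idea that dissolves the whole difficulty is to replace the $N$ per-vertex agents by a \emph{single} super agent $a_0$ who values nothing and to let the allocator value every item at $1$ precisely when it goes to $a_0$. Then $\SW(\mathcal A)=|A_{a_0}|$. For each edge $e=(u,v)$ create one agent $a_e$ with $v_{a_e}(g_u)=v_{a_e}(g_v)=1$ and $0$ elsewhere. Now EF-$1$ for $a_e$ against $a_0$ reads $|A_{a_0}\cap\{g_u,g_v\}|\le |A_{a_e}\cap\{g_u,g_v\}|+1\le 2$, but more to the point, if $a_e$ holds neither endpoint then $a_0$ may hold at most one; hence $A_{a_0}$ is forced to be an independent set, and conversely any independent set $I$ can be placed in $A_{a_0}$ while distributing the remaining vertex-items one per edge agent so that no bundle contains both endpoints of any edge. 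No padding, no twins, no sharing: the instance has exactly $|V|$ items and $|E|+1$ agents, all valuations are binary, and the optimal EF-$1$ welfare equals $\alpha(G)$. This is the construction the paper uses.
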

\begin{proof}
We adopt a similar idea as in~\cite{10.5555/3306127.3331927} and present a reduction from the maximum independent set problem. 
For a maximum independent set instance $G=(V,E)$ where $|V|=m$ and $|E|=n$, we construct the following maximizing \sw instance with $m$ items, $n+1$ agents, and an allocator. 
For each vertex $v\in V$, we construct an item $g_v$.
For each edge $e=(u,v)\in E$, we construct a normal agent $a_e$, whose values to her adjacent items $g_u$ and $g_v$ are $1$, and $0$ for other items.
Moreover, we construct a super agent $a_0$, whose value to all the items is $0$.
For the allocator, her value is $1$ if an item is allocated to the super agent, and $0$ otherwise.
We show that the maximum \sw is $k$ if and only if the maximum independent set in $G$ is of size $k$.

If $G$ contains an independent set $\mathcal{I}$ of size $k$, the maximum \sw is at least $k$ by allocating the items that correspond to the vertices in the independent set to the super agent. 
For normal agents, we allocate at most one adjacent item to her.
This allocation is valid since $|E|>|V\setminus \mathcal{I}|$.
We now prove the allocation is EF-$1$.
For the super agent $a_0$, she will envy no one.
For an arbitrary normal agent $i$, she will not envy $a_0$ because $a_0$ receives at most one of her adjacent items.
She will not envy another normal agent either for the same reason.

If $G$ contains no independent set whose size is larger than $k$, the maximum \sw cannot exceed $k$.
Otherwise, there must exist an edge $e=(u,v)$ that both $g_u$ and $g_v$ are allocated to $a_0$, and $a_e$ will envy $a_0$ even if $g_u$ or $g_v$ is removed from $a_0$'s bundle.
So the allocation is not EF-$1$.

Since the maximum independent set problem is known to be NP-hard to approximate to within a factor of $n^{1-\varepsilon}$ and $m=O(n^2)$, Theorem~\ref{thm:mae_gen_bb_neg} holds.
\end{proof}

We show that a simple variant of the round-robin algorithm can achieve $m$-approximation for EF-$c$ allocations.
\begin{theorem} \label{thm:mae_gen_aa_pos}
    The problem of maximizing \sw subject to EF-$c$ has a $m$-approximation algorithm when both the agents' and the allocator's utility functions are arbitrary.
\end{theorem}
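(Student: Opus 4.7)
Let $(i^*, g^*) \in \arg\max_{i\in N,\, g\in M} u_i(g)$. The plan is to produce a single EF-$1$ allocation in which $g^*$ is assigned to agent $i^*$; since EF-$1$ implies EF-$c$ for all $c\ge 1$, and
\[
\OPT \;\le\; \sum_{g\in M}\max_{i\in N} u_i(g) \;\le\; m\cdot u_{i^*}(g^*)\;\le\; m\cdot \SW(\ALG),
\]
an $m$-approximation will follow immediately. So the entire task reduces to constructing such an EF-$1$ allocation.

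The algorithm I propose is a lightly modified round-robin: fix any ordering of the agents that places $i^*$ first, say $i^*, j_1, \dots, j_{n-1}$. In round $1$, assign $g^*$ to $i^*$ (overriding her natural choice), then let $j_1, \dots, j_{n-1}$ each pick their $v$-favorite remaining item in order. For all subsequent rounds, run the standard round-robin in the same order until all items are exhausted. The $\SW(\ALG)\ge u_{i^*}(g^*)$ lower bound is immediate because $i^*$ gets $g^*$.

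The main step, and what I expect to be the only subtle one, is verifying that the forced first pick does not break EF-$1$ for the agents' valuations. I would argue as in the textbook analysis of round-robin. Fix any pair $(j,i)$ with $j \ne i$. If $i$ follows $j$ in the ordering, then within every round $j$ picks her $v_j$-favorite before $i$ picks anything, so $j$'s $k$-th pick weakly dominates $i$'s $k$-th pick under $v_j$ and $j$ has no envy at all. If $i$ precedes $j$ in the ordering (this includes $i=i^*$), then pairing $j$'s $k$-th pick with $i$'s $(k{+}1)$-th pick, $j$'s $k$-th pick is made from a superset of the items available to $i$ at her $(k{+}1)$-th turn, hence $v_j(j\text{'s }k\text{-th pick})\ge v_j(i\text{'s }(k{+}1)\text{-th pick})$. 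Summing, $v_j(A_j)\ge v_j(A_i\setminus\{h\})$, where $h$ is $i$'s round-$1$ pick. The key observation is that the override at round $1$ only changes the identity of $i^*$'s round-$1$ item to $g^*$; it never affects what any agent picks in later rounds nor what $j_1,\dots,j_{n-1}$ pick in round $1$. Therefore the inequalities above still go through with $h=g^*$ when $i=i^*$ (removing $g^*$ from $A_{i^*}$ restores envy-freeness from $j$'s viewpoint), and with the usual round-$1$ item when $i\ne i^*$. Hence EF-$1$ holds for every pair.

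Combining the two pieces gives an EF-$c$ allocation with $\SW(\ALG)\ge u_{i^*}(g^*)\ge \OPT/m$, completing the $m$-approximation. The algorithm clearly runs in polynomial time: finding $(i^*,g^*)$ costs $O(nm)$, and the round-robin phase costs $O(nm)$ additional operations.
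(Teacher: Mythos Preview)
Your algorithm and overall strategy match the paper's---force $g^*$ to $i^*$ and fill in the rest by round-robin---but your EF-$1$ argument has a genuine gap, and your ``key observation'' is false.

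The gap is the case $j=i^*$, which falls under your ``$i$ follows $j$'' branch. You assert that ``within every round $j$ picks her $v_j$-favorite,'' but in round $1$ agent $i^*$ is \emph{forced} to take $g^*$, not her $v_{i^*}$-favorite, so the inequality $v_{i^*}(i^*\text{'s 1st pick})\ge v_{i^*}(i\text{'s 1st pick})$ can fail and $i^*$ may strictly envy $i$. Concretely: take $n=2$, items $g^*,a,b$ with $v_{i^*}(g^*)=0$, $v_{i^*}(a)=10$, $v_{i^*}(b)=5$, and $v_{j_1}(a)=10$; your procedure yields $A_{i^*}=\{g^*,b\}$, $A_{j_1}=\{a\}$, and $v_{i^*}(A_{i^*})=5<10=v_{i^*}(A_{j_1})$, so ``no envy at all'' is wrong. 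This same example also refutes your ``key observation'': in plain round-robin $i^*$ would grab $a$ first and $j_1$ would never get it, so the override \emph{does} change what $j_1$ picks in round $1$.

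The fix is easy and your algorithm is still correct: for $j=i^*$ and any later $i$, compare only rounds $k\ge 2$ (where $i^*$ \emph{does} pick her $v_{i^*}$-favorite before $i$) to obtain $v_{i^*}(A_{i^*}\setminus\{g^*\})\ge v_{i^*}(A_i\setminus\{i\text{'s round-}1\text{ item}\})$, hence EF-$1$ after removing $i$'s first pick. The paper sidesteps this case distinction by placing $i^*$ \emph{last} in the round-robin order (after first setting aside $g^*$): then every $j\neq i^*$ is envy-free toward $i^*$ in the round-robin phase, so removing $g^*$ restores EF-$1$ for them, while $i^*$'s envy toward others is just the standard round-robin EF-$1$ bound.
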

\begin{proof}
Let the allocator allocate a single item to a single agent with the highest value $u_i(g_j)$ for $1\le i\le n, 1\le j\le m$ to agent $i$.
Then the agents use the round-robin algorithm to allocate the remaining items, where agent $i$ receives an item at the end of each round.
The allocation is EF-$1$ (and is thus EF-$c$) guaranteed by the round-robin algorithm and is a trivial $m$-approximation to the optimal \sw.
\end{proof}

We note that it is an interesting open question whether there is an $O(n)$-approximation algorithm since the impossibility result of $m^{1-\varepsilon}$ as stated in Theorem~\ref{thm:mae_gen_bb_neg} occurs when $m < n$.
We next turn our attention to a weaker notion PROP-$c$. We first show that, even if the allocator's utility functions are binary, we can still get the following impossibility result.

\begin{theorem} \label{thm:maep_gen_ab_neg}
    The problem of maximizing \sw subject to PROP-$c$ is NP-hard to approximate within factor $2$  even if the allocator's utility functions are binary and $c=1$.
\end{theorem}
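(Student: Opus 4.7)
The plan is to reduce from \textsc{Partition}, paralleling the reduction used in the proof of Theorem~\ref{thm:mae_2_ab_neg} but adapting it to the (strictly weaker) PROP-$c$ constraint. Given a \textsc{Partition} instance $\{e_1,\ldots,e_m\}$ with $\sum_{k=1}^{m} e_k = 1$, I will construct an allocation instance in which the optimal \sw\ under the PROP-$1$ constraint equals $2$ in the yes-case and is at most $1$ in the no-case, giving the claimed factor-$2$ inapproximability.

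The instance will contain $m$ partition items $g_1,\ldots,g_m$ with identical additive valuations $v_i(g_k)=e_k$ on two distinguished ``partition agents,'' together with a pair of signature items $s_1,s_2$ whose sole role is to contribute to $\SW$: for exactly one prescribed agent $i^*(j)$ per signature item I set $u_{i^*(j)}(s_j)=1$, and all other allocator-utilities are $0$. Consequently $\SW\le 2$, with $\SW=2$ attainable only when each $s_j$ is allocated to $i^*(j)$; the remaining degrees of freedom lie in how the partition items are split. I then tune the agent-valuations of $s_1,s_2$ so that, conditional on the allocator-optimal placement of the signatures, the PROP-$1$ constraints on the two partition agents together force the sums $\sum_{e_k\in S_1'} e_k$ and $\sum_{e_k\in S_2'} e_k$ to be exactly $1/2$. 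Under this construction the yes-direction is immediate --- a balanced split produces a PROP-$1$ allocation with $\SW=2$ --- while the no-direction asserts that no $\SW=2$ allocation can be PROP-$1$ in a no-instance, so the optimum falls to at most $1$.

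The hard part will be the no-direction, because two-agent PROP-$1$ with identically-valued partition items is much more permissive than EF-$1$: a complaining agent may ``remove'' a single \emph{partition} item from the other's bundle, and this one-item slack will, if nothing else is done, accommodate near-balanced (but not exactly balanced) splits and thus fail to separate yes- from no-instances. I see two complementary ways around this obstacle. The first is to augment the construction with a small number of auxiliary items (or one additional agent) whose agent-valuations are arranged so that the PROP-$1$ slack is \emph{forced} to be used on the signature/auxiliary items rather than on partition items; once that slack is consumed elsewhere, the residual constraint on the $e_k$'s collapses to an exact partition equation. The second is to replace \textsc{Partition} by a strongly NP-hard partition-style problem such as \textsc{3-Partition}, whose uniformly bounded item sizes render the one-item slack automatically negligible compared to the fair share $v_i(M)/n$. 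In either route, the core verification is the same: check that in any no-instance, no joint placement of the partition items together with the forced placement of $s_1,s_2$ can satisfy both PROP-$1$ agent constraints simultaneously, whence $\SW\le 1$.
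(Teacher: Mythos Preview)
Your plan correctly isolates the obstacle---PROP-$1$ lets an agent hypothetically \emph{add} any single outside item, so in a two-agent analogue of the Theorem~\ref{thm:mae_2_ab_neg} gadget the signature item $g_{m+1}$ (with $v_1(g_{m+1})=1$) already saturates agent~$1$'s constraint regardless of how the $e_k$'s are split---but neither proposed fix actually closes the gap. For the first fix, PROP-$c$ lets the agent \emph{choose} which outside item to add, so you cannot ``force the slack onto auxiliary items'': as long as some item of value comparable to the proportional shortfall sits outside $A_i$, the constraint is met. For the second fix, switching to \textsc{3-Partition} makes individual items small relative to $v_i(M)/n$, but \textsc{3-Partition} asks about partitions into \emph{triples}, not bisections; a no-instance may still admit an exact split into two halves of equal sum, so with two partition agents the yes/no cases do not separate. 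In short, a fixed-size gadget of two (or three) agents does not seem to be enough here, and your proposal does not supply a working construction.

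The paper resolves this by moving to $n=2s$ agents with $s$ disjoint copies of the \textsc{Partition} instance (one per agent pair), plus $n+2$ ``pool'' items each of large value $C=(n/2-1)x$. The key device is that agent $i$'s \emph{designated} pool item $g_{s+i}$---the unique item whose assignment to $i$ contributes $1$ to $\SW$---has agent-value $0$ for $i$ herself. Thus, if agent $i$ receives less than $x/2$ from her partition items, the item $g_{s+i}$ is useless for her PROP-$1$ compensation and she must receive some \emph{other} pool item of value $C$, costing the allocator a unit of welfare. In a no-instance at least one agent in each pair falls short, so $\SW$ drops from $n$ to at most $n-s+2$, and the gap $n/(n-s+2)=2s/(s+2)\to 2$ as $s\to\infty$. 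The factor-$2$ hardness is therefore obtained asymptotically through a growing number of agents rather than through a bounded gadget, which is precisely the degree of freedom your two-agent plan lacks.
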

\begin{proof}
We will present a reduction from the partition problem.
For a partition instance $S =\{e_1, e_2, \ldots, e_m\}$, where $\sum_{k=1}^m e_k = x$, we construct an instance as follows.
Let $n = 2s$ be an even integer. 
The instance contains $n$ agents and $s \cdot m + n + 2$ items.
We first construct $s$ groups of items called \emph{partition items}, where each group contains $m$ items. 
Denote the items within $k$-th group by $g_1^{(k)}, \ldots, g_m^{(k)}$.
In addition, we also construct other $n+2$ items called \emph{pool items}. 
They are denoted by $g_{s + 1}, \ldots, g_{s + n+2}$. 

\begin{center}
    \begin{tabular}{ccccccccccc}
        \toprule
        item &  $g_k^{(1)}$ ($1\le k \le m$) &  $\cdots$ & $g_k^{(s)}$ & $g_{s+1}$ & $g_{s+2}$ &  $\cdots$  & $g_{s+n-1}$ & $g_{s+n}$ & $g_{s+n+1}$ & $g_{s+n+2}$\\
        \hline
        $v_{1}$ &   $e_k$ &  $0$ & $0$ & $0$ & $C$ &  $C$  & $C$  & $C$ & $C$ & $C$ \\
        $v_{2}$ &   $e_k$ &  $0$ & $0$ & $C$ & $0$ &  $C$  & $C$  & $C$ & $C$ & $C$ \\
        $\vdots$ &  $\vdots$   &  $\ddots$ & $\vdots$ & $\vdots$ &  $\vdots$ &  $\ddots$  & $\vdots$ & $\vdots$ & $\vdots$ & $\vdots$\\
        $v_{2s-1}$& $0$   &  $0$ & $e_k$ & $C$ & $C$ &  $C$  & $0$  & $C$ & $C$ & $C$\\
        $v_{2s}$ &  $0$   &  $0$ & $e_k$ & $C$ & $C$ &  $C$  & $C$  & $0$ & $C$ & $C$\\
        $u_{1}$ &   $0$   &  $0$ & $0$   & $1$ & $0$ &  $0$  & $0$  & $0$ & $0$ & $0$ \\
        $u_{2}$ &   $0$   &  $0$ & $0$   & $0$ & $1$ &  $0$  & $0$  & $0$ & $0$ & $0$ \\
        $\vdots$ &  $\vdots$ &  $\vdots$ & $\vdots$ & $\vdots$ & $\vdots$ &  $\ddots$  & $\vdots$  & $\vdots$ & $\vdots$ & $\vdots$ \\
        $u_{2s-1}$ &  $0$ &  $0$ & $0$ & $0$ & $0$ &  $\cdots$  & $1$  & $0$ & $0$ & $0$\\
        $u_{2s}$ &  $0$ &  $0$ & $0$ & $0$ & $0$ &  $\cdots$  & $0$  & $1$ & $0$ & $0$\\
        \bottomrule
    \end{tabular}
\end{center}

The utility functions of the $n$ agents and the allocator are defined in the above table.
For each partition items $g_k^{(i)}$, agent $2i-1$ and $2i$ have value $e_k$ while other agents have value $0$.
The allocator also has value $0$ no matter to whom it is allocated.
For each pool item $g_{s+k} (1\le k\le n)$, agent $k$ has value $0$ and other agents have value $C$, where $C=(\frac{n}{2}-1)\cdot x$.
The allocator has value $1$ if $g_{s+k}$ is allocated to agent $k$ and $0$ otherwise.
For item $g_{s+n+1}$ and $g_{s+n+2}$, all agents has value $C$ while the allocator has value $0$.

For a proportional allocation, each agent must receive a bundle with value at least $\frac{(n-1)\cdot x}{2}$.

We observe the upper bound of the \sw is $n$. If the partition instance is a yes-instance, we allocate $\{g_{k}^{(i)}\}_{1\leq k\leq m}$ to agent $2i-1$ and $2i$ such that each of the two agents receives a value of exactly $x$.
We allocate $g_{s+i} (1\le i\le n)$ to agent $i$ for each $i\in[n]$, and $g_{s+n+1}$ and $g_{s+n+2}$ to some arbitrary agents.
Each agent receives at least $\frac{x}{2}$.
This allocation is PROP-$1$ because, if each agent takes an extra item with value $C$, she will reach the proportional value.
The \sw is $n$ for this allocation.

If the partition instance is a no-instance, at least $s$ agents receive values that are less than $\frac{x}{2}$ from the partition items.
They need to take a pool item with value $C$ to be PROP-$1$, where the allocator's value is $0$ for it. 
Then, the \sw will be at most $n-s+2$.

Hence, the inapproximation factor is at most $2$.
\end{proof}

If the agents' utility functions are binary but not the allocator's, we can use linear programming to prove the following result.

\begin{theorem} \label{thm:maep_gen_ba_pos}
    When agents' utility functions are binary, the problem of maximizing \sw subject to PROP-$c$ can be solved exactly in polynomial time by linear programming.
\end{theorem}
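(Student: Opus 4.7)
The plan is to cast the problem as a linear program with variables $x_{ij}\in[0,1]$ representing the fraction of item $g_j$ allocated to agent $i$, show its constraint matrix is totally unimodular via Lemma~\ref{lem:bipartite_utm}, and invoke Lemmas~\ref{lem.tum} and~\ref{lem.lpvsol} to extract an integer optimal vertex in polynomial time.

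First I would reduce the PROP-$c$ condition to a single linear inequality using the assumption that each $v_i$ is binary. For any bundle $A_i$, the best choice of $B\subseteq M\setminus A_i$ with $|B|\le c$ satisfies $v_i(B)=\min\{c,\, v_i(M\setminus A_i)\}$, so a short case analysis on whether $v_i(M\setminus A_i)\ge c$ shows that PROP-$c$ for agent $i$ is equivalent to
$$v_i(A_i) \;\ge\; T_i, \qquad T_i := \max\{0,\; v_i(M)/n - c\},$$
since in the remaining case one trivially has $v_i(A_i\cup(M\setminus A_i))=v_i(M)\ge v_i(M)/n$. The program to solve is then
$$\max \sum_{i,j} u_i(g_j)\, x_{ij}\quad \text{s.t.}\quad \sum_i x_{ij}\le 1 \;\;\forall j,\quad \sum_j v_i(g_j)\, x_{ij}\ge T_i \;\;\forall i,\quad x_{ij}\ge 0.$$

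Next I would argue the constraint matrix is TUM. Written as $\mathbf{A}\mathbf{x}\le\mathbf{b}$, it decomposes as $\begin{pmatrix}\mathbf{A}_1\\ \mathbf{A}_2\end{pmatrix}$, where $\mathbf{A}_1\in\{0,1\}^{m\times nm}$ collects the item-capacity constraints and $\mathbf{A}_2\in\{0,-1\}^{n\times nm}$ collects the negated PROP constraints. Column $x_{ij}$ of $\mathbf{A}_1$ has exactly one nonzero entry, the $1$ in item $j$'s row, while column $x_{ij}$ of $\mathbf{A}_2$ has at most one nonzero entry, a $-1$ in agent $i$'s row present iff $v_i(g_j)=1$. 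Lemma~\ref{lem:bipartite_utm} then yields total unimodularity.

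Finally, by Lemmas~\ref{lem.tum} and~\ref{lem.lpvsol}, an integer optimal vertex $x^*\in\{0,1\}^{nm}$ is computable in polynomial time; I would set $A_i := \{g_j : x^*_{ij}=1\}$ and append any still-unallocated item to an arbitrary agent's bundle. Since the $u_i$ are nonnegative this can only weakly increase $\SW$, and since augmenting a bundle only weakly increases $v_i(A_i)$, PROP-$c$ is preserved. Optimality is then immediate: the LP value upper-bounds $\SW(\A)$ for every integer PROP-$c$ allocation $\A$ (which is itself a feasible LP point after relaxing $=$ to $\le$), and our construction attains that bound. The only subtle design choice --- relaxing $\sum_i x_{ij}=1$ to $\sum_i x_{ij}\le 1$ to avoid two opposite-sign nonzeros colliding in one column of the second block --- is precisely what makes Lemma~\ref{lem:bipartite_utm} applicable, and the post-processing absorbs the slack at no cost; no deep obstacle arises.
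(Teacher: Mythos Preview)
Your approach is essentially the paper's---the same LP relaxation, the same TUM verification via Lemma~\ref{lem:bipartite_utm}, and the same appeal to Lemmas~\ref{lem.tum} and~\ref{lem.lpvsol}---but there is one genuine gap. Your right-hand side $T_i=\max\{0,\,v_i(M)/n-c\}$ need not be an integer, and Lemma~\ref{lem.tum} requires the vector $\mathbf{b}$ to be integral in order to conclude that every vertex of $\{\mathbf{A}\mathbf{x}\le\mathbf{b},\ \mathbf{x}\ge\mathbf{0}\}$ is integral. With a fractional $T_i$ the polytope can have non-integer vertices even when $\mathbf{A}$ is totally unimodular (for a trivial instance, $\mathbf{A}=[1]$ and $b=1/2$ already gives the vertex $x=1/2$), so your invocation of Lemmas~\ref{lem.tum} and~\ref{lem.lpvsol} does not go through as written.

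The fix is exactly what the paper does: since $v_i(A_i)$ is an integer whenever $v_i$ is binary, the inequality $v_i(A_i)\ge v_i(M)/n-c$ is equivalent, over integer points, to $v_i(A_i)\ge\lceil v_i(M)/n\rceil-c$. Using this integer threshold in constraint~(a) restores the hypothesis of Lemma~\ref{lem.tum} without changing the set of integer feasible solutions. Everything else in your argument is correct; in particular, your explicit post-processing of items left unallocated by the $\le$ relaxation, together with the monotonicity of both $\SW$ and PROP-$c$ under adding items, is a point the paper leaves implicit.
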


\begin{proof}
    We first model this problem with a linear program. For each agent $i\in N$ and each item $g_j\in M$, we use one decision variable $x_{ij}$ to represent the fraction of item $g_j$ allocated to agent $i$. We can get the following linear program.
    \begin{equation*}
    \begin{array}{ll@{}rr}
        \textbf{maximize} & \displaystyle\sum_{i\in[n],j\in[m]} u_i(g_j)\cdot x_{ij} & & \\
        \textbf{subject to} & \displaystyle\sum_{j\in[m]} v_i(g_j)\cdot x_{ij}\geq \left\lceil \frac{1}{n}\sum_{j\in[m]} v_i(g_j) \right\rceil -c, & \forall i\in[n], & \qquad (a) \\
        &\displaystyle\sum_{i\in[n]} x_{ij}\leq 1, & \forall j\in[m], &\qquad (b) \\
        &x_{ij}\geq 0, &\forall i\in[n],j\in[m]. & \qquad (c)
    \end{array}
    \end{equation*}
Constraints (a) ensure the corresponding fractional allocation is PROP-$c$, and Constraints (b) ensure the feasibility of the allocation.

Since the feasible region for this linear programming is bounded by Constraints (b), and setting all $x_{ij}$ as $\frac{1}{n}$ is a feasible solution, there exists an optimal solution to this linear program.

With the integral constraint vector and applying Lemma~\ref{lem.tum} and Lemma~\ref{lem.lpvsol}, it suffices to show the coefficient matrix $\mathbf{A}$ for this linear programming is totally unimodular (TUM).
This follows straightforwardly from Lemma~\ref{lem:bipartite_utm}.
\end{proof}

\section{Conclusion and Future Work}
In this paper, we initialize the study of a new fair division model that incorporates the allocator's preference. 
We focused on the indivisible goods setting and mainly studied two research questions based on the allocator's preference:
1) How to find a doubly fair allocation? 2) What is the complexity of the problem of maximizing the allocator's efficiency subject to agents' fairness constraint?

We believe this new model is worth more future studies.
For example, could we extend our results to the setting with more general valuation functions, e.g., submodular valuations?
It is also an interesting (and challenging) problem to study what is the minimum number $c$ where a doubly EF-$c$/PROP-$c$ allocation is guaranteed to exist.
Indeed, we do not know any lower bound to $c$. In particular, we do not know if a doubly EF-$1$, or even doubly PROP-$1$, allocation exists even for binary utility functions.
We have searched for a non-existence counterexample with the aid of computer programs, and a non-existence counterexample seems hard to find.

On the other hand, our current techniques about the chromatic number and linear programming seem to have their limitations for further reducing the upper bound of $c$.
Our current definition of $\Gamma$-graph and current technique with the Kneser graph can only analyze a bi-partition of the items (in particular, a bi-partition with an equal size $m/2$ for the Kneser graph, which is crucial for Proposition~\ref{prop:N_1_x_1_x2_inter}).
For the analysis in the latter part of Theorem~\ref{thm:double_logn} with a general number of agents using the Kneser graph, the value of the bundle must be exactly \emph{half} of the total value up to the addition of $c$ items, which is why we need $n$ to be an integer power of $2$.
Moreover, the nature of the analysis based on the $\Gamma$-graph and Kneser graph makes the existence proof non-constructive.
Our linear programming technique in the former part of Theorem~\ref{thm:double_logn}, on the other hand, provides a weaker bound on $c$.
It seems to us that a Kneser graph captures more structural insights about our problem than a linear program.
Nevertheless, linear programming-based techniques provide a constructive existence proof.

It is fascinating to see how these techniques can be further exploited and if the above-mentioned limitations can be bypassed.
Unearthing new techniques for closing the gap between the upper bound and the lower bound of $c$ may also be necessary.

\subsection{Fair Division with Multiple Sets of Valuations}
\label{sec:discuss-multiple}
In our double fairness setting, we aim to find an allocation $(A_1,\ldots,A_n)$ that is fair with respect to \emph{two} valuation profiles $(u_1,\ldots,u_n)$ and $(v_1,\ldots,v_n)$, one for the agents and one for the allocator.
A natural generalization of this is to consider allocations that are fair with respect to $t$ valuation profiles for general $t$.
The problem of fair division with more than two sets of valuations is also well-motivated in many applications.
For example, there may be more than one ``allocator'' in many scenarios.
Taking the example of educational resource allocation in Sect.~\ref{sec:intro}, the government may consist of multiple parties, and it is desirable to find an allocation that is fair for all parties.
For another example, an agent's valuation of the items may be multi-dimensional.
When allocating employees to the departments of an organization, fairness is evaluated by multiple factors including employees' salaries, skill sets, diversity, etc.
When dealing with multiple sets of valuations, different fairness criteria can be considered.

As a natural generalization of the setting in this paper, we can consider allocations that are EF-$c$ or PROP-$c$ for \emph{all} valuation profiles.
This coincides with the setting of group fairness with group sizes satisfying $n_1=n_2=\cdots=n_k=t$ (see the last paragraph of Sect.~\ref{sect:relatedwork} for further discussions).
In contrast to our results in Theorem~\ref{thm:identical_2ef1} and Theorem~\ref{thm:double_ef1}, even when there are only two agents and the valuations of the agents in each of the three profiles $(u_1,u_2),(v_1,v_2),(w_1,w_2)$ are identical (i.e., $u_1=u_2$, $v_1=v_2$, and $w_1=w_2$) and binary, a triply EF-$1$ allocation may fail to exist.
In the example in Table~\ref{tab:counterexampletriple}, it is easy to see that Items 1 and 2 must not be in the same bundle based on $u_1$ and $u_2$, Items 1 and 3 must not be in the same bundle based on $v_1$ and $v_2$, and Items 2 and 3 must not be in the same bundle based on $w_1$ and $w_2$.
Clearly, no allocation satisfies these.
It is then natural to ask for which values of $c$ there is always an allocation that is EF-$c$ for all valuation profiles.

\begin{table}[h]
    \centering
    \begin{tabular}{cccc}
    \hline
         & Item 1 & Item 2 & Item 3 \\
    \hline
     Values based on $u_1=u_2$    & 1 & 1 & 0\\
     Values based on $v_1=v_2$    & 1 & 0 & 1\\
     Values based on $w_1=w_2$    & 0 & 1 & 1\\
     \hline
    \end{tabular}
    \caption{An example where a triply EF-$1$ allocation fails to exist.}
    \label{tab:counterexampletriple}
\end{table}

Another compelling criterion is to make the allocation fair with respect to $\ell$ out of $k$ valuation profiles.
Using the well-studied criterion EF-$1$ as an example, considering $k$ valuation profiles 
$$\left\{(u_1^{(1)},\ldots,u_n^{(1)}),(u_1^{(2)},\ldots,u_n^{(2)}),\ldots,(u_1^{(k)},\ldots,u_n^{(k)})\right\}$$ 
and given a parameter $\ell\leq k$, our goal is to find an allocation $(A_1,\ldots,A_n)$ such that, for each agent $i$, there exists $\ell$ valuation functions from $\{u_i^{(1)},\ldots,u_i^{(k)}\}$ such that the allocation satisfies EF-$1$ with respect to these $\ell$ valuation functions.
It is interesting to find out for which values of $k$ and $\ell$ this is possible.

\section*{Acknowledgments}
The research of Biaoshuai Tao was supported by the National Natural Science Foundation of China (Grant No. 62472271 
and 62102252).
The research of Shengxin Liu was partially supported by the National Natural Science Foundation of China (No. 62102117), by the Shenzhen Science and Technology Program (No. RCBS20210609103900003 and GXWD20231129111306002), and by the Guangdong Basic and Applied Basic Research Foundation (No. 2023A1515011188), and by CCF-Huawei Populus Grove Fund (No. CCF-HuaweiLK2022005).

\bibliographystyle{abbrvnat}
\bibliography{arxiv/reference}

\appendix
\section{Omitted Proofs for Section 3}
\label{app:sec3}
\DefiGood*
\propTriTetra*
\begin{proof}
We first consider $X_2\setminus X_1$ and $X_3\setminus X_1$.
Define the element $g$ as follows:

\begin{itemize}
    \item if both $X_2\setminus X_1$ and $X_3\setminus X_1$ have cardinality $2$, we must have $(X_2\setminus X_1)\cap (X_3\setminus X_1)\neq\emptyset$.
    Otherwise, the common neighbor $Y$ of $X_1$, $X_2$, and $X_3$ must intersect $X_2\setminus X_1$ at exactly one element (if the intersection is empty, then $|X_1\cup Y|\leq n-2$; if the intersection has size $2$, then $|X_2\cap Y|\geq 2$), and it must intersect $X_3\setminus X_1$ at exactly one element (for the same reason); in this case, we will have $|X_1\cup Y|\leq n-2$, leading to a contradiction.
    Then let $g$ be an arbitrary element in $(X_2\setminus X_1)\cap (X_3\setminus X_1)$;
    \item if $|X_2\setminus X_1|=2$ and $|X_3\setminus X_1|\leq1$, let $g$ be an element in $X_2\setminus X_1$, and choose $g$ to be in $X_3$ if $X_3\cap(X_2\setminus X_1)\neq\emptyset$; 
    \item if $|X_2\setminus X_1|\leq1$ and $|X_3\setminus X_1|=2$, let $g$ be an element in $X_3\setminus X_1$, and choose $g$ to be in $X_2$ if $X_2\cap(X_3\setminus X_1)\neq\emptyset$;
    \item otherwise, $g$ is considered ``undefined''. 
\end{itemize}
We next consider $X_1\setminus X_2$ and $X_1\setminus X_3$ and define element $h$ as follows:
\begin{itemize}
    \item if both $X_1\setminus X_2$ and $X_1\setminus X_3$ have cardinality $2$, for the similar reasons as above, we must have $(X_1\setminus X_2)\cap(X_1\setminus X_3)\neq\emptyset$.
    Let $h$ be an arbitrary element in $(X_1\setminus X_2)\cap (X_1\setminus X_3)$;
    \item if $|X_1\setminus X_2|=2$ and $|X_1\setminus X_3|\leq1$, let $h$ be an element in $X_1\setminus X_2$, and choose $h$ to avoid $X_3$ if $(X_1\setminus X_2)\not\subseteq X_3$;
    \item if $|X_1\setminus X_2|\leq1$ and $|X_1\setminus X_3|=2$, let $h$ be an element in $X_1\setminus X_3$, and choose $h$ to avoid $X_2$ if $(X_1\setminus X_3)\not\subseteq X_2$;
    \item otherwise, $h$ is considered ``undefined''.
\end{itemize}

Next, we construct the set $X_4$ as follows:
Let $X_4=(X_1\setminus\{h\})\cup\{g\}$ (replace $\{g\}$ or $\{h\}$ by the empty set if undefined).

Since $X_1X_2$ and $X_1X_3$ are both bad edges, at least one of $g$ and $h$ is defined.
Then $X_4\neq X_1$, and $X_1X_4$ is an edge in $\N$ by \Cref{prop:form_edge}. 
Moreover, it is a good edge by our definition.
For $X_2$, since $X_1X_2$ is a bad edge, at least one of $|X_2\setminus X_1|=2$ or $|X_1\setminus X_2|=2$ holds.
Therefore, compared with $X_4$, $X_2$ either contains at least one more element, the one in $(X_2\setminus X_1)\setminus\{g\}$, or at least one less element, the one in $(X_1\setminus X_2)\setminus\{h\}$ (or both).
Thus, $X_2\neq X_4$. 
Next, we prove $X_2X_4$ is an edge.

We first check that $|X_2\setminus X_4|\leq 1$: 
\begin{itemize}
    \item Firstly, we check that $h$ is never in $X_2$. The only possibility for $h\in X_2$ is when $|X_1\setminus X_2|\leq 1$, $|X_1\setminus X_3|=2$, and $(X_1\setminus X_3)\subseteq X_2$. We will derive a contradiction if this happens.
    Since $|X_1\setminus X_2|\leq 1$ and $X_1X_2$ is a bad edge, we must have $|X_2\setminus X_1|=2$.
    By $(X_1\setminus X_3)\subseteq X_2$ and $|X_1\setminus X_3|=2$, we must have $|X_2\setminus X_3|=2$ and $(X_2\setminus X_3)\subseteq X_1$.
    As a conclusion, we have $|X_2\setminus X_1|=2$, $|X_2\setminus X_3|=2$, and $(X_2\setminus X_1)\cap (X_2\setminus X_3)=\emptyset$.  
    However, this contradicts to that $X_1X_2X_3$ is a triangle in $\N$:
    the common neighbor $Y$ for $X_1,X_2,$ and $X_3$ must contain exactly one element in $X_2\setminus X_1$ (to ensure $|X_1\cup Y|\geq n-1$) and exactly one element in $X_2\setminus X_3$ (to ensure $|X_3\cup Y|\geq n-1$), but in this case $|X_2\cap Y|\geq 2$.
    \item With $h\notin X_2$, we have $X_2\setminus X_4=X_2\setminus (X_1\cup\{g\})$.
    If $|X_2\setminus X_1|=2$, then $X_2\setminus X_4=(X_2\setminus X_1)\setminus\{g\}$, which has size $1$; if $|X_2\setminus X_1|\leq 1$, then  $|X_2\setminus X_4|\leq |X_2\setminus X_1|=1$.
\end{itemize}
Finally, we check $|X_4\setminus X_2|\leq1$:
\begin{itemize}
    \item Firstly, we check that, if $g$ is defined, then $g\in X_2$.
    The only possibility when $g$ is defined and $g\notin X_2$ is when $|X_2\setminus X_1|\leq 1$, $|X_3\setminus X_1|=2$, and $X_2\cap (X_3\setminus X_1)=\emptyset$. 
    We will derive a contradiction if this happens.
    Since $X_1X_2$ is a bad edge and $|X_2\setminus X_1|\leq 1$, we have $|X_1\setminus X_2|=2$.
    By $X_2\cap (X_3\setminus X_1)=\emptyset$ and $|X_3\setminus X_1|=2$, we have $|X_3\setminus X_2|=2$ and $(X_1\setminus X_2)\cap (X_3\setminus X_2)=\emptyset$.
    Then, similar as before, the common neighbor $Y$ of $X_1,X_2$, and $X_3$ must intersect $X_1\setminus X_2$ at exactly one element and intersect $X_3\setminus X_2$ at exactly one element, but this would imply $|Y\cup X_2|\leq n-2$, leading to a contradiction.
    \item With $g\in X_2$, we have $X_4\setminus X_2=X_1\setminus\{h\}\setminus X_2$. If $|X_1\setminus X_2|=2$, then $h\in X_1\setminus X_2$ and $|X_4\setminus X_2|=1$; if $|X_1\setminus X_2|\leq1$, then $|X_4\setminus X_2|\leq |X_1\setminus X_2|\leq1$.
\end{itemize}
We have checked $X_2\neq X_4$, $|X_2\setminus X_4|\leq1$, and $|X_4\setminus X_2|\leq 1$.
Thus, $X_2X_4$ is a good edge.
For the exact same reason, $X_3X_4$ is a good edge.
\end{proof}

\section{Divisible Resources with Allocator's Preference}
In this section, we discuss the two research questions mentioned before Sect.~1.1 for divisible resources.
The two fairness notions envy-freeness and proportionality discussed in this paper can be satisfied exactly in the setting with divisible resources.
Thus, we focus on exact envy-freeness and proportionality here.
There are multiple different models for divisible resources.

\subsection{Divisible Homogeneous Items}
The simplest setting is the same setting as it is in Sect.~2 except that we now allow fractional allocations, i.e., each $g_j\in M$ can now be split among the agents.
Each item $g_j$ is assumed to be homogeneous: each agent $i$'s value on an $\alpha$-fraction of $g_j$ is given by $\alpha\cdot v_i(\{g_j\})$.

For the first research question, there exists a trivial doubly envy-free and doubly proportional allocation: just allocate each item evenly to the agents such that each agent gets a $1/n$ fraction of each item.

For the second research question, the problem of maximizing \sw subject to the envy-free/proportional constraint can be formulated by a linear program.
Let $x_{ij}$ be the fraction of item $j$ allocated to agent $i$.
It is straightforward to see that the envy-free constraints and the proportional constraints are linear in $x_{ij}$'s, and the \sw is also a linear expression of $x_{ij}$'s.
This gives us a polynomial time algorithm to solve this constrained optimization problem exactly.

\subsection{Cake Cutting}
Another well-studied model for divisible resources is the \emph{cake-cutting} model.
In the cake-cutting model, a single piece of heterogeneous resource, modeled by the interval $[0,1]$, is allocated to $n$ agents.
Each agent $i$ has a \emph{value density function} $f_i:[0,1]\to\mathbb{R}_{\geq 0}$, and her value of a subset $S\subseteq[0,1]$ is given by the Riemann integral
$$\int_Sf_i(x)dx.$$
The fairness notions envy-freeness and proportionality can then be defined accordingly.

From computer scientists' perspective, we then face the problem of succinct representation of each $f_i$.
There are two different approaches in the past literature.

\subsubsection{Piecewise-constant value density functions}
In the first approach, each $f_i$ is assumed to be piecewise-constant, where the interval $[0,1]$ can be partitioned into many subintervals where $f_i$ is a constant on each of them (see, e.g., \citep{cohler2011optimal,bei2012optimal,brams2012maxsum}).
Piecewise-constant functions can be succinctly represented and can approximate real functions with arbitrarily good precision.

This model then reduces to the previous model: we can find all the points of discontinuity of $f_1,\ldots,f_n$; this will partition $[0,1]$ into many subintervals where each $f_i$ is a constant on each of them, and each of these subintervals can be viewed as an ``item'' in the previous model.

Therefore, all results in the previous model apply here.
We can find a doubly envy-free (and thus doubly proportional) allocation in polynomial time, and we can solve the problem of maximizing the \sw subject to agents' envy-free/proportional constraints in polynomial time by linear programming.

\subsubsection{General value density functions}
If no assumption is made on the value density functions, the existence of a doubly envy-free allocation still holds.
\citet{alon1987splitting} showed that for $m$ agents and any positive number $n$, there exists an allocation $(A_1,\ldots,A_n)$ such that each $A_i$ has value exactly $\frac1n$ of the value of $[0,1]$ based on each agent's value density function.
By taking $m=2n$, this implies the existence of a doubly envy-free allocation.

The second problem of maximizing the \sw is related to the computational complexity, so we need to define a model to access the value density functions.
A commonly used one is the \emph{Robertson-Webb query model}~\citep{RobertsonWe98}.
However, finding an envy-free allocation under this model is already challenging and solved only recently, with an exponential time complexity~\citep{AzizMa16,AzizMa16-STOC}.

\end{document}